\theoremstyle{definition}
\newtheorem{theorem}{Theorem}[]
\newtheorem{definition}[theorem]{Definition}
\newtheorem{example}[theorem]{Example}
\newtheorem{lemma}[theorem]{Lemma}
\newtheorem{proposition}[theorem]{Proposition}
\newtheorem{remark}[theorem]{Remark}
\numberwithin{theorem}{section}  % Theorem 2.3
\numberwithin{equation}{section} % equation (3.5)
\title{Special Features of the Upper Half-Plane in Semiclassical Theory}
\author{Hirotaka Wakuta\thanks{mf22144@shibaura-it.ac.jp} \\ [0.5cm]
  Graduate School of Engineering and Science, \\
  Shibaura Institute of Technology, \\
  Japan.
  }
\date{}
\begin{document}

\maketitle

\begin{abstract}
  We study conditions that the semiquantised Riemannian metric $g_Q$ and Levi-Civita connection $\nabla_Q$ have the same form as classical one in semiclassical theory.
  Concrete examples of semiclassical theory have been computed by Majid et al. 
  for upper half-planes, hemispheres \cite{name1} and complex projective spaces \cite{name2}.
  However, only in the example of the upper half-planes the semiquantised metric and connection have the same form as classical one.
  In this study, we first investigate the conditions that the Riemannian metric has the same form as classical one 
  when its components are replaced by general functions.
  Next, based on these conditions we compute the generalised Ricci form and 
  investigate the conditions that a quantised connection is a quantum Levi-Civita connection.
  Finally, we semiquantise the upper half-plane with Poincar\'{e} metric generalised by a parameter $t$.
  This paper is based on my Master thesis.
\end{abstract}

\section{Introduction}
Classical mechanics has been formulated geometrically by using symplectic geometry.
On the other hand, the geometrical framework corresponding to quantum mechanics has not yet been completed.
Noncommutativity is important in quantum mechanics, and in order to construct a geometry corresponding to quantum mechanics, it is necessary to extend the conventional geometry to a noncommutative one.
One of the most actively studied is noncommutative geometry proposed by Alain Connes.
The research method is to study $C^*$-algebras, which are noncommutative algebras to study noncommutative spaces.
This is a top-down approach.
Another method of constructing noncommutative geometry is to regard quantum mechanics as a ``quantized version of classical mechanics'', and to construct noncommutative geometry by quantizing classical objects.
This is a bottom-up approach.
There are several types of quantization, but in this study, we focus our attention on deformation quantization.
Deformation quantization is a method to construct noncommutative geometry by introducing a formal power series with a deformation parameter (as small as Planck's constant) for a classical quantity, and by using a noncommutative product, the $*$-product.
However, convergence problems arise when dealing with formal power series.
Furthermore, it is currently difficult to experimentally observe quantities above the second order of Planck's constant.
In the latter part of this paper, we will construct a noncommutative Riemannian geometry based on the semiclassical theory of quantization by neglecting terms above the second order of the deformation parameters.

And in this paper, we treat the upper half-plane as a concrete example of the semiclassical theory.
In \cite{name1}, concrete examples of hemispheres and upper half-planes are computed, 
and in \cite{name2}, concrete examples of complex projective spaces are computed.
However, only for the upper half-plane, the quantized results have the same form as classical one.
In this study, we investigate the conditions for having the same form as classical one 
when the Riemannian metric is given in the form of a function that is a further generalization of the Poincar\'{e} metric.
Furthermore, based on the conditions for havcing to the classical form, we generalize the Poincar\'{e} metric with the parameter $t$ and semiquantize the Riemannian metric and the Levi-Civita connection.

\section{Classical differential geometry}   \label{sec:classicaldifferentialgeometry}
In this section we look at the basic issues of ordinary commutative differential geometry.
First, the basics of vector bundles on manifolds are discussed.
This is because in semiclassical theory we first discuss the noncommutative version corresponding to classical vector bundles.
Next, we discuss the basics of Riemannian manifolds in the context of classical differential geometry, restricting ourselves to tangent bundles, which are examples of vector bundles.
We then redefine notion of connection on Riemannian manifolds using differential forms, which is not done in classical differential geometry.
This view is necessary for quantum Riemannian geometry.
Finally, we look at definitions and examples of symplectic and Poisson manifolds, which are necessary for quantization.

\subsection{Fundamentals of vector bundles}    \label{subsec:vecbdl}
\subsubsection{Definitions and examples}
In a bimodule approach, a bimodule connection is constructed on a bimodule.
This is the connection corresponding to the connection on the vector bundle.
Thus, we will first look at the definition and examples of vector bundles.
In this section, $M$ is a $n$-dimensional smooth manifolds.
\begin{definition}
  (Vector bundle)
  Let $M$ be a manifold.
  A vector bundle over $M$ is
\begin{enumerate}
    \item a manifold $E$.
    \item a continuous map $\pi : E \to M$.
    \item for each $p \in M$, the fiber $E_p = \pi^{-1}(p)$ on $p$ is a constant-dimensional vector space.
    \item for each $p \in M$ there is a coordinate neighborhood $U$ of $p$, and there exists a diffeomorphism map
          \begin{equation*}
            \varphi : \pi^{-1}(U) \to U \times \mathbb{R}^r
          \end{equation*}
          and for any $q \in U$ the restriction of $\varphi$ to $\pi^{-1}(q)$ gives a linear homomorphism map
          \begin{equation*}
            \varphi|_{\pi^{-1}(q)} : \pi^{-1}(q) \to \{q\} \times \mathbb{R}^r.
          \end{equation*}
  \end{enumerate}
\end{definition}
Given a vector bundle $E$ and an coordinate neighbourhood $U$ of $M$, we can consider sections of $E$ on $U$.
\begin{definition}
  (Section)
  A section of $E$ is a smooth map $\xi : M \to E$ such that $\pi(\xi(p)) = p\;(p \in M)$.
  We denote $\Gamma(E)$ as a set of all sections of the vector bundle $E$.
\end{definition}

The tangent bundle is an example of a vector bundle.
\begin{example}
  $TM := \bigcup_{p \in M} T_pM$ is called the tangent bundle, 
  where $T_pM$ is the tangent vector space at $p \in M$.
  The sections of $TM$ are the vector fields.
\end{example}

We denote the ring of smooth functions by $C^{\infty}(M)$.
$\Gamma(E)$ has the following property.
\begin{lemma}   \label{lem:subsectionmodule}
  $\Gamma(E)$ is a module over $C^{\infty}(M)$.
\end{lemma}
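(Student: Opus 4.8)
The plan is to equip $\Gamma(E)$ with pointwise operations inherited from the vector-space structure on each fibre, and then to check two things: first, that these operations send smooth sections to smooth sections, so that they are genuinely defined on $\Gamma(E)$; second, that the module axioms hold. For $\xi,\eta\in\Gamma(E)$ and $f\in C^{\infty}(M)$ I would set $(\xi+\eta)(p)=\xi(p)+\eta(p)$ and $(f\xi)(p)=f(p)\,\xi(p)$, where the operations on the right-hand sides are those of the vector space $E_p=\pi^{-1}(p)$ furnished by part 3 of the definition of a vector bundle. By construction $\pi((\xi+\eta)(p))=p$ and $\pi((f\xi)(p))=p$, so the content of the verification is entirely in the smoothness of these maps $M\to E$.

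For smoothness I would argue locally. Around any $p\in M$ choose a coordinate neighbourhood $U$ and a diffeomorphism $\varphi:\pi^{-1}(U)\to U\times\mathbb{R}^r$ as in part 4, restricting on each fibre to a linear isomorphism onto $\{q\}\times\mathbb{R}^r$. Write $\varphi\circ\xi=(\mathrm{id}_U,\hat\xi)$ and $\varphi\circ\eta=(\mathrm{id}_U,\hat\eta)$ with $\hat\xi,\hat\eta:U\to\mathbb{R}^r$ smooth. Linearity of $\varphi$ on fibres gives $\varphi\circ(\xi+\eta)=(\mathrm{id}_U,\hat\xi+\hat\eta)$ and $\varphi\circ(f\xi)=(\mathrm{id}_U,(f|_U)\,\hat\xi)$; since sums and pointwise scalar multiples of smooth $\mathbb{R}^r$-valued functions are smooth and $\varphi$ is a diffeomorphism, $\xi+\eta$ and $f\xi$ are smooth on $\pi^{-1}(U)$. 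As $p$ was arbitrary, they are smooth on $M$, hence lie in $\Gamma(E)$.

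Finally I would observe that all module axioms are inherited fibrewise. The zero section $p\mapsto 0\in E_p$ is smooth (it is the constant map $0$ in every trivialisation) and is the additive identity, $-\xi$ is defined fibrewise, and associativity and commutativity of $+$, the distributive laws $f(\xi+\eta)=f\xi+f\eta$ and $(f+g)\xi=f\xi+g\xi$, the relation $(fg)\xi=f(g\xi)$, and $1\cdot\xi=\xi$ all hold because they hold in each $E_p$ regarded as a module over $\mathbb{R}$, with $C^{\infty}(M)$ acting through evaluation at $p$. I do not expect a genuine obstacle here: the only non-formal step is the smoothness check above, which is precisely the point at which local triviality of $E$ is used.
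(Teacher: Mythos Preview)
Your argument is correct and is the standard textbook verification: define addition and scalar multiplication fibrewise, use local triviality to check smoothness, and observe that the module axioms hold pointwise. There is nothing to object to.

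However, there is nothing to compare it against: the paper states this lemma without proof, treating it as a well-known background fact, and only follows it with the remark that the action $f\xi$ makes $\Gamma(E)$ a $C^{\infty}(M)$-module in the commutative case while $f*\xi$ does not in the noncommutative case. So your proposal supplies strictly more detail than the paper itself.
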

In the commutative case, the action $f\xi$ gives $\Gamma(E)$ a module over $C^{\infty}(M)$.
However, for an action $f*\xi$ by a noncommutative product $*$, $\Gamma(E)$ is not a module.

\subsubsection{Connections on a vector bundle}   \label{subsubsec:connectionoverbdl}
In this section, we define connection on a vector bundle.
\begin{definition}
  (Connection)
  A connection on vector bundle $E$ is a linear map $\nabla : \Gamma(E) \to \Gamma(T^*M \otimes E)$ satisfying the following Leibniz rule
  \begin{equation}
    \nabla(f\xi) = df \otimes \xi + f \nabla \xi    \label{eq:connectionleibnizrule}
  \end{equation}
  for $f \in C^{\infty}(M)$ and $\xi \in \Gamma(E)$.
\end{definition}
Suppose $(x^1, \ldots, x^n)$ is a local coordinate system on $M$
and choose the natural basis of $T_pM$ as $\{\pdv{x^i}\}_{i = 1, \ldots, n}$ and the natural basis of $T^*_pM$ as $\{dx^i\}_{i = 1, \ldots, n}$.
Using a dual pairing
\begin{equation*}
  \left\langle dx^i, \pdv{x^j} \right\rangle = dx^i \qty(\pdv{x^j}) = \delta_j^i,
\end{equation*}
we can consider the dual pairing $\nabla \xi$ and the tangent vector $X$ and write
\begin{equation*}
  \langle \nabla \xi, X \rangle = \nabla_X \xi \in \Gamma(E).
\end{equation*}
$\nabla_X \xi$ is called a covariant derivative of $\xi$ along the vector field $X$ 
and $\nabla_X : \Gamma(E) \to \Gamma(E)$ is a linear map with respect to $X$ and $\xi$.
We also denote $\nabla \xi$ as a covariant derivative of $\xi$.

For a function $f$, its exterior derivative $df$ is the 1-form, 
and a dual pairing of $df$ and a vector field $X$ gives
\begin{equation*}
  \langle df, X \rangle = X(f)
\end{equation*}
from the bilinearity of the dual pairing.
Thus, taking a dual pairing of \eqref{eq:connectionleibnizrule} and a vector field $X$, we have
\begin{equation*}
  \nabla_X \xi = (X f) \xi + f \nabla_X \xi.
\end{equation*}

Vector bundle is a structure, locally a direct product of a vector space.
For a vector space, we can define a dual vector space.
Thus we consider a vector bundle that is locally a direct product of a dual vector space.
This is called a dual vector bundle and is denoted by $E^*$ for a vector bundle $E$.
A cotangent bundle is an example of a dual vector bundle.
\begin{example}
  The cotangent vector space is denoted by $T^*_pM$, which is the dual space of the tangent vector space $T_pM$ at $p \in M$.
  We call $T^*M := \bigcup_{p \in M} T^*_pM$ the cotangent bundle.
  The sections of $T^*M$ are the 1-forms.
  Moreover, the sections of $\wedge^k T^*M := \bigcup_{p \in M} (\wedge^k T^*_pM)$ are the $k$-forms.
\end{example}
Given a connection $\nabla_E$ on a vector bundle $E$, 
a connection $\nabla_E^*$ on a dual vector bundle $E^*$ is defined by
\begin{equation}
  d \langle \xi^*, \xi \rangle = \langle \nabla_{E^*} \xi^*, \xi \rangle + \langle \xi^*, \nabla_E \xi \rangle   \label{eq:dualvecbdlconnection}
\end{equation}
for $\xi \in \Gamma(E)$ and $\xi^* \in \Gamma(E^*)$.

We can also construct a tensor product bundle $E \otimes F$ over vector bundles $E, F$.
Given a connection $\nabla_E$ on a vector bundle $E$ and a connection $\nabla_F$ on a vector bundle $F$, 
a connection $\nabla_{E \otimes F}$ on the tensor product bundle is defined by
\begin{equation}
  \nabla_{E \otimes F} (\xi \otimes \xi') = (\nabla_E \otimes I_F + I_E \otimes \nabla_F)(\xi \otimes \xi') = \nabla_E \xi \otimes \xi' + \xi \otimes \nabla_F \xi'   \label{eq:tensorvecbdlconnection}
\end{equation}
for $\xi \in \Gamma(E)$ and $\xi' \in \Gamma(F)$, where $I_E, I_F$ are identity operators on the vector bundle $E$($F$), respectively.

\subsection{Riemannian manifolds}   \label{subsec:riemannstronTM}
In section \ref{subsec:vecbdl}, we discussed the connection on the vector bundle $E$.
In this section, we discuss the case $E = TM$.

A Riemannian manifold is a manifold with an inner product structure at each point on a manifold, 
and is an important object in differential geometry.
We emphasize that it is important that there exists a unique connection on a Riemannian manifold, 
called the Levi-Civita connection, which is compatible with the inner product structure.
The is because the connection compatible with the inner product structure is an important assumption 
in the semiclassical theory that will be treated later in this paper.

Hereafter in this section, we will also assume that $M$ is a smooth manifold.
The set of tangent bundle $\Gamma(TM)$ (the set of a vector field) is written as $\mathcal{X}(M)$, 
the set of cotangent bundle $\Gamma(T^*M)$ (the set of a 1-form) is written as $\Omega^1(M)$.
In addition, the set of a $k$-form is written as $\Omega^k(M)$.

\subsubsection{Affine connection}
In the discussion of section \ref{subsubsec:connectionoverbdl}, 
when the vector bundle $E$ is a tangent bundle $TM$, 
the connection on $TM$ is called an affine connection.
The definition is given once again.
\begin{definition}
  (Affine connection)
  Let $C^{\infty}(M)$ be the set of smooth function on $M$ 
  and $\mathcal{X}(M)$ be the set of vector field on $M$.
  For any $X, Y, Z \in \mathcal{X}(M)$ and $f \in C^{\infty}(M)$, 
  $\nabla_X Y$ is called a covariant derivative of the vector field $Y$ along the vector field $X$ 
  if the following four conditions are satisfied.
  \begin{enumerate}
    \item $\nabla_{Y + Z} X = \nabla_Y X + \nabla_Z X$
    \item $\nabla_{fY} X = f \nabla_Y X$
    \item $\nabla_Z (X + Y) = \nabla_Z X + \nabla_Z Y$
    \item $\nabla_Y (fX) = (Yf) X + f \nabla_Y X$
  \end{enumerate}
  Also, the map
  \begin{equation*}
    \nabla : \mathcal{X}(M) \times \mathcal{X}(M) \to \mathcal{X}(M), \quad (X, Y) \mapsto \nabla_X Y
  \end{equation*}
  is called an affine connection of $M$.
\end{definition}
Affine connections coincide with the concept of parallel transport of tangent vectors on a manifold.
In the following, affine connections (or covariant derivatives) are considered as parallel transport of tangent vectors.

\subsubsection{Curvature and torsion}   \label{subsubsec:curvatureandtorsion}
Given a parallel transport of a tangent vector, we can define the important quantities curvature and torsion.
We define curvature and torsion.
\begin{definition}
  (Curvature)
  Let $M$ be a manifold and $\nabla$ an affine connection on $M$.
  The curvature of an affine connection $\nabla$ is a tensor field defined as follows.
  \begin{equation}
    R : \mathcal{X}(M) \times \mathcal{X}(M) \times \mathcal{X}(M) \to \mathcal{X}(M), \quad (X, Y, Z) \mapsto [\nabla_X, \nabla_Y] Z - \nabla_{[X, Y]} Z   \label{eq:curvature}
  \end{equation}
\end{definition}
\begin{definition}
  (Torsion)
  Let $M$ be a manifold and $\nabla$ an affine connection on $M$.
  The torsion of an affine connection $\nabla$ is a tensor field defined as follows.
  \begin{equation}
    T : \mathcal{X}(M) \times \mathcal{X}(M) \to \mathcal{X}(M), \quad (X, Y) \mapsto \nabla_X Y - \nabla_Y X - [X, Y]    \label{eq:torsion}
  \end{equation}
\end{definition}

\subsubsection{Definition of Riemannian manifolds}    \label{subsubsec:classicalriemannianmfd}
A Riemannian manifold is defined as follows.
\begin{definition}
  (Riemannian manifold)
  A symmetric 2-tensor field $g$ on a manifold $M$ is called a Riemannian metric on $M$ 
  if $g_p : T_p M \times T_p M \to \mathbb{R}$ determines an inner product at each point $p$.
  A manifold with a given Riemannian metric is called a Riemannian manifold.
\end{definition}
There are infinitely many affine connections on manifolds.
However, there is unique a special affine connection on Riemannian manifolds called the Levi-Civita connection.
\begin{definition}
  (Levi-Civita connection)
  Let $M$ be a Riemannian manifold with Riemannian metric $g$ and let $\nabla$ be an affine connection.
  If $\nabla$ satisfies the following two conditions, we call it a Levi-Civita connection.
  \begin{enumerate}
    \item $\nabla$ is compatible with the metric $g$, i.e. $\nabla g = 0$.
    \item $\nabla$ is torsion free, i.e. $T = 0$.
  \end{enumerate}
\end{definition}
Condition 1 can be rewritten as
\begin{equation}
  X g(Y, Z) = g(\nabla_X Y, Z) + g(Y, \nabla_X Z)    \label{eq:classicalmetriccompatibilityvec}
\end{equation}
for $X, Y, Z \in \mathcal{X}(M)$.
From this equation, $\nabla$ being compatible with the metric $g$ can be interpreted as 
meaning that the parallel transport by $\nabla$ along the vector field $X$ preserves the inner product of the tangent vectors $Y, Z$.

\subsubsection{In local coordinates}
In previous sections, we discussed without using a local coordinate system.
In this section, we discuss again the contents of the previous sections using the local coordinate system.
In the following, let $M$ be an $n$-dimensional manifold and 
we denote  a coordinate system as $(x^1, \ldots, x^n)$ in the coordinate neighborhood $U$.
We also write $\partial_i := \pdv{x^i}$ for the basis of the tangent vector space $T_pM$ in $p \in M$.
In addition, we use the Einstein summation convention, which sums over the same upper and lower indices.

First, we express the covariant derivative of a vector field along a vector field using the basis $\{\partial_i\}$.
The covariant derivative of the vector field $\partial_k$ along the vector field $\partial_j$ is expressed by
\begin{equation}
  \nabla_j \partial_k := \langle \nabla \partial_k, \partial_j \rangle = \nabla_{\partial_j} \partial_k = \Gamma^i_{jk} \partial_i.    \label{eq:vecterfieldcovariantderivative}
\end{equation}
The $n^3$ function $\Gamma^i_{jk}$ on the right-hand side are called the Christoffel symbols.

Next, we compute the covariant derivative of a differential form along the vector field.
Namely, the connection $\nabla'$ on the cotangent bundle $T^*M$ is obtained using \eqref{eq:dualvecbdlconnection}.
In \eqref{eq:dualvecbdlconnection}, let $E = TM$, $E^* = T^*M$, $\xi = \partial_k$, $\xi^* = dx^i$, and dual pairing of both sides with the vector field $\partial_j$ gives
\begin{align*}
  \partial_j \langle dx^i, \partial_k \rangle & = \langle \nabla'_j dx^i, \partial_k \rangle + \langle dx^i, \nabla_j \partial_k \rangle \\
  \partial_j \delta^i_k                       & = (\nabla'_j dx^i)(\partial_k) + dx^i(\Gamma^l_{jk} \partial_l) \\
  0                                           & = (\nabla'_j dx^i)(\partial_k) + \Gamma^l_{jk} \delta^i_l \\
  (\nabla'_j dx^i)(\partial_k)                & = - \Gamma^i_{jk}.
\end{align*}
In conclusion, we have
\begin{equation}
  \nabla'_j dx^i = -\Gamma^i_{jk} dx^k.    \label{eq:formcovariantderivative}
\end{equation}
It can be seen that the covariant derivative of a differential form is slightly different from \eqref{eq:vecterfieldcovariantderivative}.
In the following, we denote $\nabla'$ as $\nabla$.

Next, we give a local representation of the Riemannian metric and the Levi-Civita connection.
First, because the Riemannian metric is a symmetric 2-tensor field, 
the Riemannian metric can be expressed as
\begin{equation}
  g = g_{ij} dx^i \otimes dx^j    \label{eq:classicalriemannianmetric}
\end{equation}
in local, where $g_{ij} := g\qty(\partial_i, \partial_j)$ is called a component of the tensor field $g$ and $g_{ij} = g_{ji}$.

Next we give the Christoffel symbols when $\nabla$ is the Levi-Civita connection.
First, the Christoffel symbols of the Levi-Civita connection is calculated by
\begin{equation}
  \Gamma^i_{jk} = \frac{1}{2} g^{il} (\partial_j g_{kl} + \partial_k g_{jl} - \partial_l g_{jk}).    \label{eq:christoffelsymbol}
\end{equation}
Also, substituting $\partial_j, \partial_k$ for $X, Y$ in \eqref{eq:torsion} respectively, the torsion is
\begin{align}
  T(\partial_j, \partial_k) & = \nabla_j \partial_k - \nabla_k \partial_j - [\partial_j, \partial_k] \nonumber \\
                            & = (\Gamma^i_{jk} - \Gamma^i_{kj}) \partial_i.    \label{eq:torsioninlocal}
\end{align}
If $\nabla$ is Levi-Civita connection, then $\Gamma^i_{jk} = \Gamma^i_{kj}$ since the Levi-Civita connection is torsion free.

It is locally written
\begin{equation}
  \partial_i g_{jk} = g_{lk} \Gamma^l_{ij} + g_{jl} \Gamma^l_{ik}   \label{eq:metriccompatibilityinlocal}
\end{equation}
that $\nabla$ is compatible.

Finally, we write the curvature using the local coordinate system and define the scalar curvature.
The Riemann curvature tensor is defined by
\begin{equation}
  R^i_{jkl} := \langle dx^i, R(\partial_j, \partial_k, \partial_l) \rangle,    \label{eq:riemanncurvaturetensor}
\end{equation}
where $\langle\cdot, \cdot\rangle$ is a dual pairing of vector field $R(\partial_j, \partial_k, \partial_l)$ and 1-form $dx^i$.
And computing the right-hand side we have
\begin{equation}
  R^i_{jkl} = \partial_k \Gamma^i_{lj} - \partial_l \Gamma^i_{kj} + \Gamma^m_{lj} \Gamma^i_{km} - \Gamma^m_{kj} \Gamma^i_{lm}.   \label{eq:riemanncurvatureinlocal}
\end{equation}
Then, using the Riemann curvature tensor, the scalar curvature is defined by
\begin{equation}
  R := g^{ij} R_{ij},    \label{eq:scalarcurvature}
\end{equation}
where $R_{ij}$ is called the Ricci curvature and is defined by
\begin{equation*}
  R_{ij} := R^k_{ikj}.
\end{equation*}
Later in this paper, we will compute the scalar curvature for the upper half-plane.

\subsection{Riemannian geometry with differential forms}
In the previous sections, we have used vector fields to define covariant derivatives, Riemannian metric, torsion, and so on.
In the context of noncommutative geometry, however, we often discuss these quantities in differential form.
Thus, the objects defined in the previous section need to be reformulated using only differential forms.
In other words, we redefine them in terms of quantities written in differential form before dual pairings with vector fields.

The covariant derivative of $dx^i$ from \eqref{eq:formcovariantderivative} is written as
\begin{equation}
  \nabla dx^i = -\Gamma^i_{hk} dx^h \otimes dx^k.    \label{eq:dx^icovariantlyderivative}
\end{equation}
In fact, a dual pairing of both sides with $\partial_j$ gives
\begin{align*}
  \langle \nabla dx^i, \partial_j \rangle & = -\Gamma^i_{hk} \langle dx^h, \partial_j \rangle dx^k \\
  \nabla_j dx^i                           & = -\Gamma^i_{jk} dx^k,
\end{align*}
which coincides with \eqref{eq:formcovariantderivative}.
Note that from the definition of the connection we take a dual pairing of $\partial_j$ and $dx^h$.

Similarly, the torsion is given by
\begin{equation}
  T = d -\wedge \nabla \quad T : \Omega^1(M) \to \Omega^2(M),    \label{eq:classicaltorsionform}
\end{equation}
from \eqref{eq:torsioninlocal}.
In fact, a dual pairing of the 2-form $T(dx^i)$ with the vector field $\partial_j, \partial_k$ gives
\begin{align*}
  ((d - \wedge \nabla) dx^i) (\partial_j, \partial_k) & = -\wedge \nabla dx^i (\partial_j, \partial_k) \\
                                                      & = -\wedge (-\Gamma^i_{mn} dx^m \otimes dx^n) (\partial_j, \partial_k) \\
                                                      & = (\Gamma^i_{mn} dx^m \wedge dx^n) (\partial_j, \partial_k) \\
                                                      & = \Gamma^i_{jk} - \Gamma^i_{kj},
\end{align*}
which coincides with \eqref{eq:torsioninlocal}.

Finally, we give the differential form-only representation of $\nabla g = 0$ (condition 1 of the Levi-Civita connection).
The differential form of $\nabla g = 0$ is written as
\begin{equation}
  \nabla g = ((\nabla \otimes id) + (\sigma \otimes id) (id \otimes \nabla))g = 0,   \label{eq:classicalmetriccompatibilityform}
\end{equation}
where $\sigma$ is called the braiding and is a linear map that swaps the order of tensor products, 
i.e. $\sigma(dx^i \otimes dx^j) = dx^j \otimes dx^i$.
If we write $g = g_{mn} dx^m \otimes dx^n$ and take a dual pairing with the vector fields $\partial_i, \partial_j, \partial_k$ in \eqref{eq:classicalmetriccompatibilityform}, 
we obtain
\begin{align*}
  & ((\nabla \otimes id) + (\sigma \otimes id) (id \otimes \nabla))(g_{mn} dx^m \otimes dx^n) (\partial_i, \partial_j, \partial_k) \\
  & = (\nabla g_{mn} dx^m) \otimes dx^n + (\sigma \otimes id) (g_{mn} dx^m \otimes \nabla dx^n) (\partial_i, \partial_j, \partial_k) \\
  & = (dg_{mn} \otimes dx^m - g_{mn} \Gamma^m_{lh} dx^l \otimes dx^h) \otimes dx^n - (\sigma \otimes id) (g_{mn} dx^m \otimes (\Gamma^n_{lh} dx^l \otimes dx^h)) (\partial_i, \partial_j, \partial_k) \\
  & = ((\partial_l g_{mn} dx^l \otimes dx^m - g_{mn} \Gamma^m_{lh} dx^l \otimes dx^h) \otimes dx^n - g_{mn} \Gamma^n_{lh} dx^l \otimes dx^m \otimes dx^h) (\partial_i, \partial_j, \partial_k) \\
  & = \partial_i g_{jk} - g_{mk} \Gamma^m_{ij} - g_{jn} \Gamma^n_{ik} \\
  & = \partial_i g_{jk} - g_{lk} \Gamma^l_{ij} - g_{jl} \Gamma^l_{ik} = 0,
\end{align*}
which coincides \eqref{eq:metriccompatibilityinlocal}.

\subsection{Symplectic manifolds}
Symplectic structures (more generally, Poisson structures) are necessary for deformation quantization.
In this section we give a definition and an example of symplectic manifolds and define Hamiltonian vector fields, 
which are necessary in semiclassical theory.

\subsubsection{Definitions and examples}
\begin{definition}
  (Symplectic manifold)
  A manifold $(M, \omega)$ is called a symplectic manifold if there exists a non-degenerate closed 2-form $\omega$ on $M$.
\end{definition}
The $\omega$ is called a symplectic form, 
and since $\omega$ is non-degenerate, the dimension of the symplectic manifold is even.

The $2n$-dimensional Euclidean space $\mathbb{R}^n$ is an important example of a symplectic manifold.
\begin{example}   \label{ex:symplecticform}
  Let $(x_1, \ldots, x_n, y_1, \ldots, y_n)$ be a coordinate system in $\mathbb{R}^{2n}$.
  And let
  \begin{equation*}
    \omega_0 = \sum_{i = 1}^n dx_i \wedge dy_i
  \end{equation*}
  be a 2-form.
  Then $(\mathbb{R}^{2n}, \omega_0)$ is a $2n$-dimensional symplectic manifold.
  $\omega_0$ is called the standard symplectic form.
\end{example}
From Darboux's theorem, any symplectic manifold is locally diffeomorphic to $(\mathbb{R}^{2n}, \omega_0)$, 
and Thus, $\mathbb{R}^{2n}$ is an important example.

\subsubsection{Hamiltonian vector fields}
On symplectic manifolds, it is possible to construct vector fields called Hamiltonian vector fields, which are related to classical mechanics.
In semiclassical theory, covariant derivative along Hamiltonian vector fields is used for semiquantization.
For this reason, we discuss Hamiltonian vector fields in this section.

Let $(M, \omega)$ be a symplectic manifold and its coordinate system $(x_1, \ldots, x_n, p_1, \ldots, p_n)$.
Here we take a smooth function $H(x, p)$ on $M$ (the function $H(x, p)$ is called a Hamiltonian).
Given a Hamiltonian, we define a Hamiltonian vector field $\hat{H}$ using the symplectic form $\omega$ as follows.
\begin{definition}    \label{def:hamiltonvectorfield}
  (Hamiltonian vector field)
  The Hamiltonian vector field $\hat{H}$ is defined by
  \begin{equation}
    i_{\hat{H}}\omega = dH   \label{eq:hamiltonvector}
  \end{equation}
  for the Hamiltonian $H$ and the symplectic form $\omega$, 
  where $i_{\hat{H}}\omega$ is the interior product of $\hat{H}$ and $\omega$.
\end{definition}
Since $\omega$ is non-degenerate,
\begin{equation*}
  \omega^{\flat} : TM \ni \hat{H} \mapsto \omega(\hat{H}, \cdot) \in T^*M
\end{equation*}
is bijective.
Therefore, there exists $\hat{H}$ such that \eqref{eq:hamiltonvector} is satisfied.

The integral curve of a Hamiltonian vector field corresponds to the trajectory of motion of a classical point mass, 
and Thus, symplectic geometry is the geometric formulation of classical mechanics.

\subsection{Poisson manifolds}  \label{subsec:poisson}
In this section, we first define Poisson algebra and show that the function ring $C^{\infty}(M)$ on symplectic manifolds has Poisson structure.
After that, we define a Poisson manifold as a generalization of symplectic manifolds.

\subsubsection{Definition of Poisson manifolds}
We first define an associative algebra before defining a Poisson algebra.
\begin{definition}
  (Associative algebra)
  Let $V$ be a vector space on a field $K$.
  If $V$ defines a product $V \times V \ni (v, w) \to vw \in V$ that satisfies the following two conditions, 
  then $V$ is said to be an algebra over a field $K$ :
  \begin{enumerate}
    \item distributive property
          \begin{equation*}
            (u + v) w = uw + vw, \quad u(v + w) = uv + uw
          \end{equation*}
    \item scalar multiplication
          \begin{equation*}
            (hu)(kv) = (hk)(uv) \quad (h, k \in K)
          \end{equation*}
  \end{enumerate}
  In particular, if the product is associative $(uv)w = u(vw)$, we call $V$ an associative algebra on a field $K$.
\end{definition}
An associative algebra is called a Poisson algebra if a bilinear map, called a Poisson bracket, is defined on it.
The definition of a Poisson algebra is as follows.
\begin{definition}
  (Poisson algebra)
  Let $A$ be an associative algebra over a field $K$.
  $A$ is a Poisson algebra if it has a bilinear product $\{\cdot, \cdot\}$ satisfying the following properties :
  \begin{enumerate}
    \item anti-symmetric
          \begin{equation*}
            \{f, g\} = -\{g, f\}
          \end{equation*}
    \item Leibniz rule
          \begin{equation*}
            \{f, gh\} = g\{f, h\} + \{f, g\}h
          \end{equation*}
    \item Jacobi identity
          \begin{equation*}
            \{f, \{g, h\}\} + \{h, \{f, g\}\} + \{g, \{h, f\}\} = 0
          \end{equation*}
  \end{enumerate}
  for any $f, g, h \in A$.
  $\{\cdot, \cdot\}$ is called the Poisson bracket of an associative algebra $A$.
\end{definition}

\subsubsection{Poisson algebra \texorpdfstring{$C^{\infty}(M)$}{}}    \label{subsubsec:poisson}
Let $(M, \omega)$ be a symplectic manifold.
Let $f, g \in C^{\infty}(M)$ be the corresponding Hamiltonian vector fields $\hat{f}, \hat{g}$ respectively.
$C^{\infty}(M)$ is an associative algebra with the following summation and multiplication.
\vspace{-5mm}
\begin{subequations}
  \begin{align*}
    \intertext{Summation}
    (f + g)(x) & = f(x) + g(x) \\
    \intertext{Multiplication}
    (fg)(x)    & = f(x) g(x)
  \end{align*}
\end{subequations}
We now define the following operator $\{ , \}$ on the associative algebra $C^{\infty}(M)$ :
\begin{equation}
  \{f, g\} = \omega(\hat{f}, \hat{g}).   \label{eq:poissonbraket}
\end{equation}
it can be shown that $\{\cdot , \cdot\}$ defined by \eqref{eq:poissonbraket} satisfies the Poisson bracket property.
Thus, $\{\cdot, \cdot\}$ is called a Poisson bracket of a symplectic manifold $M$.
\begin{proposition}
  \begin{enumerate}
    \item $\{\cdot , \cdot\} : C^{\infty}(M) \times C^{\infty}(M) \to C^{\infty}(M)$ is a bilinear map.
    \item anti-symmetric : $\{f, g\} = -\{g, f\}$
    \item Leibniz rule : $\{f, gh\} = \{f, g\}h + g\{f, h\}$
    \item Jacobi identity : $\{f, \{g, h\}\} + \{g, \{h, f\}\} + \{h, \{f, g\}\} = 0$
  \end{enumerate}
  holds for $f, g, h \in C^{\infty}(M)$.
\end{proposition}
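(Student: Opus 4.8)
The plan is to work throughout in Darboux coordinates $(x_1,\ldots,x_n,p_1,\ldots,p_n)$, where $\omega=\sum_i dx_i\wedge dp_i$, so that the Hamiltonian vector field attached to $f$ reads $\hat f=\sum_i\bigl(\partial_{p_i}f\,\partial_{x_i}-\partial_{x_i}f\,\partial_{p_i}\bigr)$, and consequently $\{f,g\}=\omega(\hat f,\hat g)=\sum_i\bigl(\partial_{x_i}f\,\partial_{p_i}g-\partial_{p_i}f\,\partial_{x_i}g\bigr)$. The four assertions will then be checked one at a time; the first three are short, and the Jacobi identity is the only one requiring real care.

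First I would record bilinearity: the map $f\mapsto\hat f$ is $\mathbb{R}$-linear because $i_{(\cdot)}\omega$ and $d$ are linear and $\omega^\flat$ is a linear isomorphism, and $\omega$ is bilinear, so $\{\cdot,\cdot\}$ is bilinear into $C^\infty(M)$ (smoothness of the output is clear from the coordinate formula). Anti-symmetry is immediate from the anti-symmetry of the $2$-form $\omega$: $\{f,g\}=\omega(\hat f,\hat g)=-\omega(\hat g,\hat f)=-\{g,f\}$. For the Leibniz rule I would use $\widehat{gh}=g\,\hat h+h\,\hat g$, which follows from $d(gh)=g\,dh+h\,dg$, the $C^\infty(M)$-linearity of the interior product in the vector-field slot, and injectivity of $\omega^\flat$; feeding this into $\{f,gh\}=\omega(\hat f,\widehat{gh})$ and using bilinearity of $\omega$ over $C^\infty(M)$ in each argument gives $\{f,gh\}=g\,\omega(\hat f,\hat h)+h\,\omega(\hat f,\hat g)=g\{f,h\}+\{f,g\}h$.

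For the Jacobi identity the cleanest route is the classical one via Lie brackets of Hamiltonian vector fields: since $\omega$ is closed, Cartan's formula $d\,i_X\omega=\mathcal{L}_X\omega-i_Xd\omega=\mathcal{L}_X\omega$ together with $\mathcal{L}_{\hat f}\omega = d\,i_{\hat f}\omega = d(df)=0$ shows that the flow of each Hamiltonian vector field preserves $\omega$, and then one proves the key identity $\widehat{\{f,g\}}=-[\hat f,\hat g]$ (sign depending on conventions) by computing $i_{[\hat f,\hat g]}\omega=\mathcal{L}_{\hat f}i_{\hat g}\omega-i_{\hat g}\mathcal{L}_{\hat f}\omega=\mathcal{L}_{\hat f}(dg)=d(\hat f g)=d\{f,g\}$ and invoking the injectivity of $\omega^\flat$ once more. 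Given this, $\{f,\{g,h\}\}=\hat f\{g,h\}=\hat f(\hat g h)$ and the cyclic sum becomes, after expanding, exactly the Jacobi identity for the Lie bracket of vector fields applied appropriately, hence vanishes. The main obstacle is therefore purely bookkeeping: getting the sign in $\widehat{\{f,g\}}=\pm[\hat f,\hat g]$ consistent with the sign convention fixed by \eqref{eq:hamiltonvector} and \eqref{eq:poissonbraket}, and making sure the Cartan-calculus manipulations are justified (which they are, since $\omega$ is closed by hypothesis). An alternative, entirely computational, fallback is to substitute the coordinate formula for $\{\cdot,\cdot\}$ directly into the cyclic sum and observe that all second-derivative terms cancel in pairs by symmetry of mixed partials; I would mention this as a remark but carry out the Cartan-formula proof in the text since it generalises verbatim to arbitrary Poisson manifolds.
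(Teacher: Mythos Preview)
The paper states this proposition without proof; it is recorded as a standard fact and the text immediately moves on to conclude that $C^{\infty}(M)$ is a Poisson algebra. Your proposal therefore supplies what the paper omits, and the route you outline---bilinearity and anti-symmetry directly from the definition, the Leibniz rule via $\widehat{gh}=g\hat h+h\hat g$, and the Jacobi identity from closedness of $\omega$ through the identity $i_{[\hat f,\hat g]}\omega=\mathcal{L}_{\hat f}i_{\hat g}\omega-i_{\hat g}\mathcal{L}_{\hat f}\omega$---is the standard Cartan-calculus argument and is correct.

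One small point of care: with the paper's convention $i_{\hat f}\omega=df$ and $\{f,g\}=\omega(\hat f,\hat g)$, one has $\{f,g\}=df(\hat g)=\hat g(f)$, not $\hat f(g)$; so when you write $\{f,\{g,h\}\}=\hat f\{g,h\}$ you have the wrong vector field acting. The correct unwinding is $\{f,\{g,h\}\}=\widehat{\{g,h\}}(f)=-[\hat g,\hat h](f)$, and the cyclic sum then collapses by the Jacobi identity for vector-field Lie brackets. You already flag the sign bookkeeping as the main hazard, so just make sure the final write-up tracks this consistently. Your coordinate fallback (direct cancellation of mixed partials in Darboux coordinates) is also perfectly acceptable here and avoids the sign issue entirely.
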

Therefore, $C^{\infty}(M)$ is a Poisson algebra.

\subsubsection{Definitions and examples}
For symplectic manifolds, the function ring $C^{\infty}(M)$ over the symplectic manifold forms a Poisson algebra.
We define a Poisson manifold as a manifold whose function ring forms a Poisson algebra.
\begin{definition}
  (Poisson manifold)
  If the function ring $C^{\infty}(M)$ over $M$ has a bilinear map
  \begin{equation*}
    \{ \cdot, \cdot \} : C^{\infty}(M) \times C^{\infty}(M) \to C^{\infty}(M)
  \end{equation*}
  and $(C^{\infty}(M), \{ \cdot, \cdot\})$ is a Poisson algebra, then $(M, \{ \cdot, \cdot \})$ is called a Poisson manifold.
  $\{ \cdot, \cdot \}$ is called a Poisson bracket of $M$.
\end{definition}
Symplectic manifolds are even dimensional, while Poisson manifolds are also odd dimensional.
\begin{example}
  For functions $f, g$ on a $n$-dimensional Euclidean space $\mathbb{R}^n$,
  \begin{equation*}
    \{ f, g \} = \sum_{1 \leq i, j \leq n} K^{ij} x_i x_j \pdv{f}{x^i} \pdv{g}{x^j} \quad (\text{$K$ is a anti-symmetric matrix})
  \end{equation*}
  is a Poisson bracket so that $(\mathbb{R}^n, \{ \cdot, \cdot\})$ is a Poisson manifold.  
\end{example}

\subsubsection{Poisson tensor}
On symplectic manifolds, a non-degenerate closed 2-form called a symplectic form was defined.
Similarly, an anti-symmetric bivector field called a Poisson tensor is naturally derived on Poisson manifolds.
The Poisson tensor is a necessary object for semiquantization in the semiclassical theory described in the latter of this paper.
A Poisson tensor is written by the following proposition.
\begin{proposition}
  Let $\pi$ be an anti-symmetric bivector field.
  \begin{equation*}
    \{f, g\} = \pi(df, dg)
  \end{equation*}
  is a Poisson bracket if and only if $[\pi, \pi]_{SN} = 0$
\end{proposition}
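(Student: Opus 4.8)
The plan is to reduce the statement to the Jacobi identity and then to identify the Jacobiator of $\{\cdot,\cdot\}$ with the Schouten--Nijenhuis bracket $[\pi,\pi]_{SN}$. First I would observe that the bracket $\{f,g\}=\pi(df,dg)$ is automatically bilinear, that antisymmetry $\{f,g\}=-\{g,f\}$ follows at once from the antisymmetry of the bivector $\pi$, and that the Leibniz rule holds with no further hypothesis: since $d(gh)=g\,dh+h\,dg$ and $\pi$ is a tensor,
\begin{equation*}
  \{f,gh\}=\pi(df,\,g\,dh+h\,dg)=g\,\pi(df,dh)+h\,\pi(df,dg)=\{f,g\}h+g\{f,h\}.
\end{equation*}
Hence the only condition in the definition of a Poisson bracket that is not free is the Jacobi identity, and the proposition becomes the assertion that
\begin{equation*}
  J(f,g,h):=\{f,\{g,h\}\}+\{g,\{h,f\}\}+\{h,\{f,g\}\}
\end{equation*}
vanishes for all $f,g,h\in C^{\infty}(M)$ if and only if $[\pi,\pi]_{SN}=0$.

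Next I would compute $J(f,g,h)$ in a local coordinate chart $(x^1,\dots,x^n)$. Writing $\pi=\tfrac12\pi^{ij}\,\partial_i\wedge\partial_j$ with $\pi^{ij}=-\pi^{ji}$, one has $\{f,g\}=\pi^{ij}(\partial_i f)(\partial_j g)$, and expanding the iterated bracket produces two kinds of terms: those in which a second-order derivative falls on one of $f,g,h$, and those in which the extra derivative falls on a coefficient $\pi^{ij}$. The decisive step is to verify that the second-derivative terms cancel in pairs, using only $\pi^{ij}=-\pi^{ji}$, so that $J$ is actually a first-order differential operator in each of its three arguments. After this cancellation the remainder is a contraction of $\partial_i f$, $\partial_j g$, $\partial_k h$ against the cyclic expression $\sum_{\mathrm{cyc}(i,j,k)}\pi^{il}\,\partial_l\pi^{jk}$, which is, up to a fixed nonzero constant, the $(ijk)$-component $[\pi,\pi]_{SN}^{ijk}$; equivalently one gets a coordinate-free identity of the shape $J(f,g,h)=c\,[\pi,\pi]_{SN}(df,dg,dh)$ with $c\neq 0$.

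With that identity in hand both implications are short. If $[\pi,\pi]_{SN}=0$ then $J\equiv 0$, so $\{\cdot,\cdot\}$ obeys Jacobi and, together with the properties already checked, is a Poisson bracket. Conversely, if $\{\cdot,\cdot\}$ is a Poisson bracket then $J(f,g,h)=0$ identically; evaluating on the coordinate functions $f=x^i$, $g=x^j$, $h=x^k$ makes every component of $[\pi,\pi]_{SN}$ vanish, and since these components determine the tensor, $[\pi,\pi]_{SN}=0$. I expect the main obstacle to be the middle paragraph: the Schouten--Nijenhuis bracket is not spelled out in the text above, so one must either recall its local formula or build it from its defining properties (extension of the Lie bracket of vector fields, graded antisymmetry, and the graded Leibniz rule), and then push through the index bookkeeping in the Jacobiator expansion, taking care that exactly the second-order terms drop out. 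Once $J(f,g,h)=c\,[\pi,\pi]_{SN}(df,dg,dh)$ is established, the proposition follows immediately.
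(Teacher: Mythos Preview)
Your proposal is correct and follows the standard route: verify that bilinearity, antisymmetry, and the Leibniz rule are automatic from $\pi$ being an antisymmetric bivector, reduce the question to the Jacobi identity, and then identify the Jacobiator with (a nonzero multiple of) $[\pi,\pi]_{SN}(df,dg,dh)$ via a local computation in which the second-order terms cancel.

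There is nothing to compare against, however: the paper states this proposition without proof. It is quoted as a known characterisation of Poisson tensors, with only the remark that $[\cdot,\cdot]_{SN}$ denotes the Schouten--Nijenhuis bracket, and then immediately specialises to the symplectic case $\pi^{ij}=-\omega^{ij}$. So your outline already goes beyond what the paper supplies. The one point you flag yourself is the only real work left: fixing a definition of $[\pi,\pi]_{SN}$ (either the local formula $[\pi,\pi]_{SN}^{ijk}=\sum_{\mathrm{cyc}(i,j,k)}\pi^{il}\partial_l\pi^{jk}$ up to convention, or the graded-Leibniz extension of the Lie bracket) and carrying out the index bookkeeping so that the identity $J(f,g,h)=c\,[\pi,\pi]_{SN}(df,dg,dh)$ is established with an explicit nonzero constant $c$.
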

$[\cdot, \cdot]_{SN}$ is called Schouten-Nijenhuis bracket.
If $M$ is a symplectic manifold, $\pi^{ij} = -\omega^{ij}$, 
where $\omega^{ij}$ is the components of the inverse matrix of symplectic form.
In this paper we deal with the upper half-plane, 
which is a symplectic manifold, 
so in the following we will use $\omega^{ij}$ for our discussion, 
but if you want to discuss it on Poisson manifolds, 
just replace it with Poisson tensor.

\section{Deformation quantization}    \label{sec:deformationquantization}
We saw in the previous section that the function ring $C^{\infty}(M)$ on a symplectic manifold $M$ is a Poisson algebra.
In this section, we deform the symplectic manifold into a quantum space by ``deforming'' the Poisson algebra $C^{\infty}(M)$.
In this section, let $A$ be an associative algebra over a field $K$.

\subsection{Deformation quantization of Poisson algebras}
We will first discuss deformations of algebras, next we will discuss deformations of Poisson algebras.

First, we define a formal power series in a general associative algebra.
\begin{definition}
  (Formal power series)
  we denote by $A[[\lambda]]$ the set of formal power series
  \begin{equation*}
    \tilde{f} = \sum_{l = 0}^{\infty} f_l \lambda^l, \quad f_l \in A
  \end{equation*}
  of $A$ with $\lambda$ as a formal parameter.
\end{definition}
$A[[\lambda]]$ is an associative algebra over $K$ by the following summation, multiplication and scalar multiplication :
\begin{subequations}
  \begin{align*}
    \intertext{summation}
    \tilde{f} + \tilde{g}     & = \sum_{l = 0}^{\infty} (f_l + g_l) \lambda^l \\
    \intertext{multiplication}
    \tilde{f} \cdot \tilde{g} & = \sum_{l = 0}^{\infty} \sum_{m + n = l} f_m \cdot g_n \lambda^l \\
    \intertext{scalar multiplication}
    \alpha \tilde{f}          & = \sum_{l = 0}^{\infty} (\alpha f_l) \lambda^l
  \end{align*}
\end{subequations}

Next, we define a deformation of an associative algebra.
\begin{definition}    \label{def:deformationofanassociativealgebra}
  (Deformation of an associative algebra)
  If a $*$-product satisfying the following four conditions is given over 
  the formal power series $A[[\lambda]]$ of $A$ over $K$, 
  then $(A[[\lambda]], *)$ is called a deformation of a algebra $A$.
  \begin{enumerate}
    \item $*$-product is a bilinear map.
    \item $*$-product is a $\lambda$ linear map, i.e.
          \begin{equation*}
            \tilde{f} * \tilde{g} = \sum_{l = 0}^{\infty} \qty(\sum_{m + n = l} f_m * g_n)\lambda^l
          \end{equation*}
          for $\tilde{f} = \sum f_l \lambda^l,\ \tilde{g} = \sum g_l \lambda^l$.
    \item For $f, g \in A$, we denote $f * g = \sum C_l(f, g)\lambda^l$. In this case, $C_0(f, g)$ satisfies
          \begin{equation*}
            C_0(f, g) = f \cdot g.
          \end{equation*}
    \item A $*$-product is associative for $f, g, h \in A$,
          \begin{equation*}
            (f * g) * h = f * (g * h).
          \end{equation*}
  \end{enumerate}
\end{definition}
\begin{remark}
  The $*$-product is also associative for $\tilde{f}, \tilde{g}, \tilde{h} \in A[[\lambda]]$,
  \begin{equation*}
    (\tilde{f} * \tilde{g}) * \tilde{h} = \tilde{f} * (\tilde{g} * \tilde{h}).
  \end{equation*}
\end{remark}
We will look at $C^{\infty}(\mathbb{R}^2)$ as an example of a deformation of an algebra.
\begin{example}   \label{ex:R2star}
  Let $B$ be a square matrix.
  We define $*_B$ as
  \begin{equation}
    f *_B g  = \sum_{l = 0}^{\infty} \frac{\lambda^l}{l!} B_{j_1} \cdots B_{j_l} f \cdot \partial_{j_1} \cdots \partial_{j_l} g,    \label{eq:starAproduct}
  \end{equation}
  then the algebra $(C^{\infty}(\mathbb{R}^2)[[\lambda]], *_B)$ is a deformation of the commutative associative algebra $C^{\infty}(\mathbb{R}^2)$.
\end{example}
If $B$ is an anti-symmetric matrix $\theta$, $*_{\theta}$ is called the Moyal product.
\begin{example}
  Let $\theta = \mqty(0 & 1 \\ -1 & 0)$.
  In this case, $*_{\theta}$ is written as
  \begin{equation*}
    f *_{\theta} g = \sum_{l = 0}^{\infty} \frac{\lambda^l}{l!} \sum_{k = 0}^l {}_lC_k (-1)^k \partial_x^{l - k} \partial_y^k f \cdot \partial_x^k \partial_y^{l - k} g
  \end{equation*}
  for $f, g \in C^{\infty}(\mathbb{R}^2)$.
\end{example}

Since the terms $C_0$ and $C_1$ are important in semiclassical theory, we discuss their properties.
First, condition 3 of the definition \ref{def:deformationofanassociativealgebra} shows that $C_0$ reflects the algebraic structure of $A$.
Next, we investigate $C_1$.
\begin{proposition}   \label{prop:Astar}
  Let $(A[[\lambda]], *)$ be a deformation of $A$.
  $C_1$ satisfies the following three properties.
  \begin{enumerate}
    \item $C_1 : A \times A \to A$ is a bilinear map.
    \item For each $f, g, h \in A$,
          \begin{equation*}
            C_1(fg, h) - f \cdot C_1(g, h) + C_1(f, g) \cdot h - C_1(f, gh) = 0.
          \end{equation*}
    \item $C_l : A \times A \to A$\;($l \geq 0$) is a bilinear map, and $C_l$ satisfies
          \begin{equation*}
            \sum_{i + j = l} C_i(C_j(f, g), h) = \sum_{i + j = l}C_j(f, C_i(g, h)).
          \end{equation*}
  \end{enumerate}
\end{proposition}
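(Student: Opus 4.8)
The plan is to derive all three items from the four defining conditions of a deformation in Definition \ref{def:deformationofanassociativealgebra}; item 3 is the substantive one, and items 1 and 2 then follow as special cases. The basic principle used throughout is that two elements of $A[[\lambda]]$ coincide if and only if all of their $\lambda$-coefficients coincide, so that any identity among $*$-products may be read off order by order.

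First I would establish the bilinearity of every $C_l$ (in particular of $C_1$, which is item 1). Restricting condition 1 to arguments $f_1, f_2, g \in A$ and $\alpha \in K$ gives $(\alpha f_1 + f_2)*g = \alpha(f_1 * g) + f_2 * g$ as an identity in $A[[\lambda]]$; writing $f*g = \sum_l C_l(f,g)\lambda^l$ and comparing the coefficient of $\lambda^l$ on both sides yields $C_l(\alpha f_1 + f_2, g) = \alpha\,C_l(f_1,g) + C_l(f_2,g)$, and the analogous argument in the second slot gives linearity there. This is exactly the bilinearity asserted in items 1 and 3.

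Next I would prove the order-$l$ associativity identity of item 3. Fix $f, g, h \in A$, regarded as constant series in $A[[\lambda]]$. Applying condition 2 to expand the iterated products and collecting powers of $\lambda$, one obtains
\[
(f*g)*h \;=\; \sum_{i,j \geq 0} C_i\bigl(C_j(f,g),h\bigr)\,\lambda^{i+j}, \qquad f*(g*h) \;=\; \sum_{i,j \geq 0} C_i\bigl(f, C_j(g,h)\bigr)\,\lambda^{i+j}.
\]
By condition 4 the two left-hand sides are equal, hence so are the series on the right, and comparing the coefficient of $\lambda^l$ gives $\sum_{i+j=l} C_i(C_j(f,g),h) = \sum_{i+j=l} C_i(f, C_j(g,h))$. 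Since the index set $\{\,(i,j) : i+j = l\,\}$ is symmetric under $i \leftrightarrow j$, the right-hand side equals $\sum_{i+j=l} C_j(f, C_i(g,h))$, which is the stated form.

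Finally I would read off item 2 as the case $l = 1$ of item 3. Then the index set is $\{(i,j) = (0,1),(1,0)\}$, and substituting $C_0(a,b) = a \cdot b$ from condition 3 turns the identity into $C_1(f,g)\cdot h + C_1(fg,h) = f \cdot C_1(g,h) + C_1(f,gh)$, which is the claimed relation after rearranging. I do not expect a genuine obstacle in any of this: the content is bookkeeping, and the only places calling for care are the coefficientwise comparison of formal power series (valid by the principle noted above) and the correct application of condition 2 when one factor is a constant series, so that the triple sums reorganise into the clean double sums displayed above.
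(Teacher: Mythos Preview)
Your argument is correct: bilinearity of each $C_l$ follows from coefficientwise comparison of the bilinearity of $*$, the order-$l$ identity in item~3 is exactly the $\lambda^l$-coefficient of the associativity $(f*g)*h = f*(g*h)$ after expanding via $\lambda$-linearity, and item~2 is the case $l=1$ with $C_0(a,b)=a\cdot b$ substituted. The paper states this proposition without proof, so there is no alternative argument to compare against; your writeup is the standard derivation and would serve perfectly well here.
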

To further investigate the properties of $C_1$, we refer to $C^{\infty}(\mathbb{R}^2)$, which appeared in the example \ref{ex:R2star}.
For $f, g \in C^{\infty}(\mathbb{R}^2)$, we write $f *_A g = \sum C_l(f, g) \lambda^l$.
In this case, $C_1$ also satisfies the following property.
\begin{proposition}   \label{prop:star1order}
  For $f, g, h \in C^{\infty}(\mathbb{R}^2)$, the following holds.
  \begin{enumerate}
    \item Let $1$ be identity function, then
          \begin{equation*}
            C_0(1, f) = f, \quad C_l(1, f) = 0 \quad (l \neq 0).
          \end{equation*}
    \item If $*_{\theta}$ is Moyal product, then
          \begin{equation*}
            C_l(f, g) = (-1)^l C_l(g, f).
          \end{equation*}
    \item $C_1$ satisfies the Leibniz rule for the second argument.
          \begin{equation*}
            C_1(f, gh) = C_1(f, g)h + gC_1(f, h)
          \end{equation*}
  \end{enumerate}
\end{proposition}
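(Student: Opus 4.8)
My plan is to obtain each of the three identities by reading off $C_l(f,g)$ as the coefficient of $\lambda^l$ in the explicit expansion of the $*_B$-product \eqref{eq:starAproduct} (equivalently, from the closed-form Moyal expansion displayed just before the statement) and then performing an elementary manipulation. For statement~(1), the relation $C_0(1,f)=1\cdot f=f$ is exactly condition~3 of Definition~\ref{def:deformationofanassociativealgebra}, applicable because $(C^{\infty}(\mathbb{R}^2)[[\lambda]],*_B)$ is a deformation by Example~\ref{ex:R2star}; and for $l\geq 1$ the $\lambda^l$-term of $1*_B f$ carries $l$ derivative operators on its first slot, which is occupied by the constant function $1$, so it vanishes, giving $C_l(1,f)=0$.

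For statement~(2) I would use the closed form, so that
\begin{equation*}
  C_l(f,g)=\frac{1}{l!}\sum_{k=0}^{l}{}_lC_k(-1)^k\,\partial_x^{\,l-k}\partial_y^{\,k}f\cdot\partial_x^{\,k}\partial_y^{\,l-k}g ,
\end{equation*}
write the analogous expression for $C_l(g,f)$, substitute $k\mapsto l-k$ in its summation index, and simplify using ${}_lC_{l-k}={}_lC_k$, $(-1)^{l-k}=(-1)^l(-1)^k$, and the commutativity of pointwise multiplication; this brings the sum back to the form of $C_l(f,g)$ up to the factor $(-1)^l$, so $C_l(g,f)=(-1)^l C_l(f,g)$, which is the asserted identity. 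Equivalently, working directly from \eqref{eq:starAproduct} with the antisymmetric matrix $\theta$, I would relabel the $l$ contracted index pairs in $C_l(g,f)$ and apply $\theta^{ij}=-\theta^{ji}$ once for each of the $l$ pairs, which is what produces the sign $(-1)^l$. I expect this to be the only step where any care is needed: one must apply the reindexing consistently to both differentiated factors and invoke the binomial symmetry correctly. This is a minor obstacle at worst.

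For statement~(3), I would note that the coefficient of $\lambda^1$ in \eqref{eq:starAproduct} is a finite sum of terms each of which applies a single derivative to $g$ (in the Moyal case, $C_1(f,g)=\partial_x f\cdot\partial_y g-\partial_y f\cdot\partial_x g$). Substituting $\partial_j(gh)=(\partial_j g)h+g(\partial_j h)$ and regrouping then gives $C_1(f,gh)=C_1(f,g)\,h+g\,C_1(f,h)$ immediately; no real difficulty arises here, and the same computation is insensitive to the choice of $B$.
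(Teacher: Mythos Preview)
Your proposal is correct in all three parts. The paper itself states this proposition without proof, so there is no argument in the paper to compare against; your direct computation from the explicit formula \eqref{eq:starAproduct} (and its Moyal specialization) is exactly the natural verification and encounters no difficulties.
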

In the following, we assume that the $*$-product on a deformation $(A[[\lambda]], *)$ of an algebra $A$ has the above property.
Then $C_1$ satisfies the following property for deformations of an algebra $A$.
\begin{proposition}   \label{prop:jacobi}
  Let $(A[[\lambda]], *)$ be a deformation of an algebra $A$.
  For each $f, g, h \in A$,
  \begin{equation*}
    C_1(f, C_1(g, h)) + C_1(g, C_1(h, f)) + C_1(h, C_1(f, g)) = 0
  \end{equation*}
  holds.
\end{proposition}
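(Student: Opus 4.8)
The plan is to derive the Jacobi identity for $C_1$ from the order-$\lambda^2$ consequence of associativity of $*$, combined with the symmetry properties assumed just before the statement and the commutativity of $A$ (throughout, $C_0(f,g)=fg$ is the commutative product). The guiding idea is the classical one from deformation quantization: associativity at order $\lambda^2$ forces the ``associator defect'' of the bilinear operator $C_1$ to equal a Hochschild-type coboundary built out of $C_2$, and the cyclic sum of such a coboundary vanishes.

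First I would specialise item~3 of Proposition~\ref{prop:Astar} to $l=2$. Spelling out the terms (recall $C_0(a,b)=ab$), this reads
\begin{equation*}
  C_2(f,g)\,h + C_1\big(C_1(f,g),h\big) + C_2(fg,h) = C_2(f,gh) + C_1\big(f,C_1(g,h)\big) + f\,C_2(g,h),
\end{equation*}
which I rearrange as
\begin{equation*}
  C_1\big(C_1(f,g),h\big) - C_1\big(f,C_1(g,h)\big) = f\,C_2(g,h) - C_2(fg,h) + C_2(f,gh) - C_2(f,g)\,h =: \Phi(f,g,h).
\end{equation*}
Next I write this identity for the three cyclic permutations of $(f,g,h)$ and add them. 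On the left-hand side I use the antisymmetry $C_1(a,b)=-C_1(b,a)$ (the $l=1$ instance of item~2 of Proposition~\ref{prop:star1order}, now part of the standing hypotheses) to flip each term $C_1\big(C_1(\cdot,\cdot),\cdot\big)$; a short computation then shows that the summed left-hand side equals $-2\big(C_1(f,C_1(g,h)) + C_1(g,C_1(h,f)) + C_1(h,C_1(f,g))\big)$. So it only remains to prove that the cyclic sum $\Phi(f,g,h)+\Phi(g,h,f)+\Phi(h,f,g)$ vanishes.

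For that step I would sort the surviving eight monomials into two families. The three terms of the form $f\,C_2(g,h)$ cancel against the three of the form $C_2(f,g)\,h$ because $A$ is commutative (each difference is a commutator of elements of $A$). The three terms $-C_2(fg,h)$ cancel against the three terms $+C_2(f,gh)$ after invoking the symmetry $C_2(a,b)=(-1)^2C_2(b,a)=C_2(b,a)$ (the $l=2$ instance of item~2 of Proposition~\ref{prop:star1order}). Hence $\Phi(f,g,h)+\Phi(g,h,f)+\Phi(h,f,g)=0$, and dividing by $2$ (the ground field being $\mathbb{R}$) gives exactly $C_1(f,C_1(g,h)) + C_1(g,C_1(h,f)) + C_1(h,C_1(f,g))=0$.

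I do not expect a genuine obstacle: once the $l=2$ associativity relation is written down, everything is forced. The one point that must not be overlooked is that the argument really does use the \emph{extra} hypotheses collected immediately before the statement — antisymmetry of $C_1$ and symmetry of $C_2$, on top of commutativity of $A$ — since without them the cyclic sum of $\Phi$ need not vanish and bare associativity would only yield the Jacobi identity for the antisymmetrisation $C_1(f,g)-C_1(g,f)$ rather than for $C_1$ itself. The only place one can realistically slip is in keeping the signs straight in the cyclic bookkeeping.
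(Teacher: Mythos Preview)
Your argument is correct and is precisely the standard route to the Jacobi identity in deformation quantization: take the $l=2$ associativity constraint from Proposition~\ref{prop:Astar}, write the associator defect of $C_1$ as the Hochschild coboundary $\Phi$ of $C_2$, and show the cyclic sum of $\Phi$ vanishes using commutativity of $A$ together with $C_2(a,b)=C_2(b,a)$. The paper itself states Proposition~\ref{prop:jacobi} without giving a proof, so there is nothing to compare against; your write-up would serve well as the missing justification.

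Two cosmetic remarks. First, where you say ``the surviving eight monomials'' you actually have twelve terms (four per $\Phi$, three cyclic copies); the pairing you describe is still correct, just fix the count. Second, it is worth making explicit, as you do in your final paragraph, that commutativity of $A$ is not an extra assumption but follows from the standing hypothesis $C_l(f,g)=(-1)^l C_l(g,f)$ at $l=0$; the paper only names commutativity overtly in the subsequent Theorem~\ref{th:poisson}.
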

The following theorem follows from the proposition \ref{prop:Astar}, \ref{prop:star1order} and \ref{prop:jacobi}.
\begin{theorem}   \label{th:poisson}
  Let $A$ be a commutative associative algebra and let $(A[[\lambda]], *)$ be a deformation of $A$.
  We rewrite $C_1(\cdot , \cdot )$ in $\{\cdot , \cdot \}$, then the following holds.
  \begin{enumerate}
    \item $\{\cdot , \cdot \}$ defines a Lie algebra structure on $A$, i.e. 
          \begin{gather*}
            \{f, g\} = -\{g, f\} \\
            \{f, \{g, h\}\} + \{g, \{h, f\}\} + \{h, \{f, g\}\} = 0
          \end{gather*}
    \item Leibniz rule
          \begin{equation*}
            \{f, gh\} = \{f, g\}h + g\{f, h\}
          \end{equation*}
  \end{enumerate}
\end{theorem}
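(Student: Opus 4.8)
The plan is to obtain all three assertions by reading off the order-$\lambda$ components of the structural identities for $*$ that were already established in Propositions~\ref{prop:Astar}, \ref{prop:star1order} and \ref{prop:jacobi}; under the standing assumption (made just before Proposition~\ref{prop:jacobi}) that the $*$-product on $(A[[\lambda]],*)$ carries all the properties listed there, no new computation is needed beyond assembling these. Concretely, I would first record bilinearity of $\{\cdot,\cdot\}$: this is item~1 of Proposition~\ref{prop:Astar} applied to $C_1$, now renamed $\{\cdot,\cdot\}$.

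Next, for antisymmetry I would specialise item~2 of Proposition~\ref{prop:star1order} to $l=1$, which gives $C_1(f,g)=(-1)^1 C_1(g,f)=-C_1(g,f)$, that is $\{f,g\}=-\{g,f\}$. The Jacobi identity $\{f,\{g,h\}\}+\{g,\{h,f\}\}+\{h,\{f,g\}\}=0$ is then exactly the content of Proposition~\ref{prop:jacobi} transcribed into the new notation, so item~1 of the theorem is done. Since $\{f,g\}$ is defined from $C_1$ and the bracket is bilinear and antisymmetric with the Jacobi identity, it defines a Lie algebra structure on $A$.

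For the Leibniz rule $\{f,gh\}=\{f,g\}h+g\{f,h\}$ I would invoke item~3 of Proposition~\ref{prop:star1order} directly. As a consistency check — and as an alternative route independent of that item — one can extract the order-$\lambda$ part of the associativity of $*$, which by Proposition~\ref{prop:Astar} item~2 (equivalently the $l=1$ case of item~3, using $C_0(a,b)=ab$) reads
\begin{equation*}
  C_1(fg,h)-f\,C_1(g,h)+C_1(f,g)\,h-C_1(f,gh)=0;
\end{equation*}
substituting Leibniz in the second slot into this recovers Leibniz in the first slot, confirming compatibility. I do not expect a genuine obstacle: the theorem is a synthesis of the preceding propositions. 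The only points requiring a little care are the bookkeeping of the order-$\lambda$ coefficients when passing from $\tilde f*\tilde g=\sum(\sum_{m+n=l}f_m*g_n)\lambda^l$ to statements about $C_1$, and the observation that commutativity of $A$ — hence $C_0(f,g)=C_0(g,f)$ — is what makes the antisymmetrisation of $C_1$ meaningful rather than merely formal and is therefore where the hypothesis "$A$ commutative" is used.
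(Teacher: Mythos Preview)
Your proposal is correct and matches the paper's approach exactly: the paper does not give a detailed proof but simply states that the theorem follows from Propositions~\ref{prop:Astar}, \ref{prop:star1order} and \ref{prop:jacobi}, which is precisely the assembly you carry out. Your additional remarks on where commutativity of $A$ enters and the consistency check via the Hochschild-cocycle identity go slightly beyond what the paper records, but they are sound and helpful.
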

Theorem \ref{th:poisson} shows that $C_1$ has the Poisson bracket property.
Therefore, $(A, \{\cdot, \cdot\})$ is a Poisson algebra.

Next, we define a deformation of a general Poisson algebra (this is called deformation quantization).
\begin{definition}
  (Deformation quantization of Poisson algebra)
  Let $(A[[\lambda]], *)$ be a deformation of the Poisson algebra $(A, \{\cdot, \cdot\})$.
  If the $*$-product satisfies the following condition, then $(A[[\lambda]], *)$ is called a deformation quantization of the Poisson algebra $A$.
  \begin{enumerate}
    \item $* : A[[\lambda]] \times A[[\lambda]] \to A[[\lambda]]$ is a bilinear map, $\lambda$-bilinear map, and associative
          \begin{equation*}
            f * (g * h) = (f * g) * h \quad (f, g, h \in A[[\lambda]])
          \end{equation*}
          (Not necessarily commutative).
    \item For $f, g \in A$, we denote $f * g = \sum C_r (f, g) \lambda^r\;(C_r(f, g) \in A)$. Then
          \begin{equation*}
            C_0(f, g) = f \cdot g, \quad C_1(f, g) = \frac{1}{2} \{f, g\}
          \end{equation*}
          holds.
  \end{enumerate}  
\end{definition}

\subsection{Deformation quantization of symplectic manifolds}    \label{subsec:deformationquantizationofsympecticmanifolds}
We have seen that $C^{\infty}(M)$ is a Poisson algebra in section \ref{subsubsec:poisson}.
In addition, we have defined a deformation quantization of Poisson algebras in the previous section.
In this section we consider deformation quantization in the case of the Poisson algebra $C^{\infty}(M)$.

\subsubsection{Difinitions}
\begin{definition}
  (Deformation quantization of symplectic manifold)
  Deformation quantization of a symplectic manifold $M$ is a deformation quantization of the Poisson algebra \\$(C^{\infty}(M), \{ \cdot , \cdot \})$.
  Namely, the pair $(C^{\infty}(M)[[\lambda]], *)$ and the $*$-product satisfy the following conditions :
  \begin{enumerate}
    \item $* : C^{\infty}(M)[[\lambda]] \times C^{\infty}(M)[[\lambda]] \to C^{\infty}(M)[[\lambda]]$ is a bilinear map, $\lambda$-bilinear map, and associative
          \begin{equation*}
            f * (g * h) = (f * g) * h \quad (f, g, h \in C^{\infty}(M)[[\lambda]])
          \end{equation*}
          (Not necessarily commutative).
    \item For $f, g \in C^{\infty}(M)$, we denote $f * g = \sum C_r (f, g) \lambda^r$. Then
          \begin{equation*}
            C_0(f, g) = f \cdot g, \quad C_1(f, g) = \frac{1}{2} \{f, g\}
          \end{equation*}
          holds.
  \end{enumerate}
\end{definition}
In general we add an additional condition above and define it as a deformation quantization.
\begin{definition}
  (Deformation quantization)
  We call $(C^{\infty}(M)[[\lambda]], *)$ a deformation quantization of the Poisson algebra $(C^{\infty}(M), \{ \cdot , \cdot \})$ 
  if the $*$-product given by $C^{\infty}(M)[[\lambda]]$ satisfies
  \begin{enumerate}
    \item $* : C^{\infty}(M)[[\lambda]] \times C^{\infty}(M)[[\lambda]] \to C^{\infty}(M)[[\lambda]]$ is a bilinear map, $\lambda$-bilinear map, and associative
          \begin{equation*}
            f * (g * h) = (f * g) * h \quad (f, g, h \in C^{\infty}(M)[[\lambda]])
          \end{equation*}
          (Not necessarily commutative).
    \item For $f, g \in C^{\infty}(M)$, we denote $f * g = \sum C_r (f, g) \lambda^r$. Then
          \begin{equation*}
            C_0(f, g) = f \cdot g, \quad C_1(f, g) = \frac{1}{2} \{f, g\}
          \end{equation*}
          holds.
    \item Each $C_r(f, g)$ is a bidifferential operator, i.e. when the local coordinate system of $M$ is taken, $C_r(f, g)$ is denoted as
          \begin{equation*}
            C_r(f, g) = \sum_{I, J} a_{IJ}(x) \partial_I f(x) \partial_J g(x),
          \end{equation*}
          where $I, J$ are multiple indices and $\partial_I = \partial^{i_1}_1 \cdots \partial^{i_n}_n$ for $I = (i_1, \ldots, i_n)$.          
    \item $C_r(f, g) = (-1)^r C_r(g, f)$ holds for $r = 0, 1, \ldots$.
    \item $1 * f = f * 1$.
  \end{enumerate}
\end{definition}

\subsubsection{Example}
We will see the Moyal product as an example of deformation quantization of symplectic manifolds.
Let $(\mathbb{R}^{2n}, \omega_0)$ be a symplectic manifold (Example \ref{ex:symplecticform}) and 
its coordinate system be $(x_1, \ldots, x_n, y_1, \ldots, y_n)$.
From the standard symplectic form \eqref{eq:poissonbraket}, 
the following Poisson bracket is determined from $\omega_0$
\begin{equation*}
  \{f, g\}_0 = \sum_{i = 1}^n \qty(\pdv{f}{x_i} \pdv{g}{y_i} - \pdv{f}{y_i} \pdv{g}{x_i}).
\end{equation*}
This gives $(C^{\infty}(M), \{\cdot, \cdot\}_0)$ is a Poisson algebra.

Now we define the following product on the set of formal power series $C^{\infty}(\mathbb{R}^{2n})[[\lambda]]$ :
\begin{equation}
  \tilde{f} *_M \tilde{g} = \sum_{l, m = 0}^{\infty} (f_l *_M g_m) \lambda^{l + m},    \label{eq:moyalproduct}
\end{equation}
where $*_M$ on the right-hand side is defined as
\begin{align*}
  f *_M g & = f \cdot \exp \frac{\lambda}{2} (\overleftarrow{\partial}_x \overrightarrow{\partial}_y - \overleftarrow{\partial}_y \overrightarrow{\partial}_x) \cdot g \\
          & = \sum_{r = 0}^{\infty} \frac{\lambda^r}{2^r r!} f (\overleftarrow{\partial}_x \overrightarrow{\partial}_y - \overleftarrow{\partial}_y \overrightarrow{\partial}_x)^r g \\
          & = \sum_{r = 0}^{\infty} C_r(f, g)\lambda^r
\end{align*}
for $f, g \in C^{\infty}(\mathbb{R}^{2n})$.
Here $\overrightarrow{\partial}_x$ is the differential operator that 
lets $\partial_x$ act on the function to the right of it ($\overleftarrow{\partial}_x$ is similar).
The product $*_M$ defined by \eqref{eq:moyalproduct} is called the Moyal product.

\section{Semi classical thoery}    \label{sec:semiclassical}
In section \ref{sec:deformationquantization} we have discussed the general theory of deformation quantization.
In this section, we discuss the semiclassical theory introduced in \cite{name1} and \cite{name2}.
Semiclassical theory is a theory that constructs a noncommutative space using ``semiquantization'', 
which is a quantization that ignores terms of the second order or higher in the deformation parameter $\lambda$.
In other words, semiquantization works on $\mathbb{C}[[\lambda]] / \lambda^2$\ (in deformation quantization it was $\mathbb{C}$).

In this section, we first discuss a bimodule approach, which is one approach to construct noncommutative geometry.
Next, we construct tensor products, connections, and wedge products on the bimodule by semiquantizing the data on the manifold.
Finally, we construct a noncommutative Riemannian geometry by semiquantization of the Riemannian metric and others on the manifold.

Although the deformed product in \cite{name1} is denoted by $\bullet$, 
in this paper we denote it by $*$ because of its relation to the $*$-product.
Please refer to \cite{name1} for details.

\subsection{Preparation : bimodule approach}
There are several approaches to noncommutative geometry.
Typical examples are Connes's approach using the Dirac operator and cyclic cohomology, 
Van den Bergh et al.'s approach using ring-theoretic projective modules, 
quantum group approaches, but not limited to them, and within that the bimodule approach to noncommutative Riemannian geometry.
This comes from Serre-Swan theorem, 
which states that in the commutative case 
there is a one-to-one correspondence between 
(a set of sections of) a vector bundle and a (finitely generated projection) module.

The bimodule approach uses the idea described above that a module can be viewed as sections of a vector bundle.
This approach is based on the discussion of connections on vector bundles in section \ref{subsubsec:connectionoverbdl}.
In other words, we define connections on a bimodule over noncommutative algebras and 
realize Riemannian geometric structures on bimodules.
When defining a connection on a bimodule, 
the differential calculus \;$\Omega^n(A)$ is important, 
which is a bimodule consisting of forms on an associative algebra $A$.

Hereafter, $A$ is an associative algebra, and the tensor product over an algebra $A$ is denoted by $\otimes_A$.
A vector bundle is identified with a set of sections of it.
Thus, when a vector bundle $E$ is a cotangent bundle $T^*M$, 
we denote $E$ by $\Omega^1(M)$.
And we denote $\nabla_{T^*M}$ by $\nabla_{\Omega^1(M)}$.

\subsubsection{Differential forms on nonncommutative algebra}   \label{subsubsec:noncommutativeform}
As described in section \ref{sec:classicaldifferentialgeometry}, 
classical differential geometry introduces vector fields and 
uses it to define connections (covariant derivatives), metric, and so on.
Differential forms are then defined as duals of vector fields.
In the context of noncommutative geometry, however, 
it is difficult to define a noncommutative vector field.
In the commutatie case, the set of vector field, 
which is the set of sections of tangent bundle, forms a bimodule structure.
In the noncommutative case, on the other hand, 
the vector field is also a differential operator satisfying the Leibniz rule, 
so the vector field does not form a bimodule structure.
In fact, if the vector field $X$ satisfies the Leibniz rule
\begin{equation*}
  X(fg) = X(f) g + f X(g)
\end{equation*}
for $f, g \in C^{\infty}(M)$, then the action $a * X$ on $X$ by the noncommutative product $*$ of the elements $a$ of the associative algebra is
\begin{equation*}
  (a * X)(fg) = a * X(fg) = a * (X(f) g + f X(g)) = (a * X)(f) g + a * f X(g).
\end{equation*}
Due to the noncommutativity of $*$, the second term is not $f(a * X)(g)$.
In conclusion, $a * X$ does not satisfy the Leibniz rule, so $a * X$ is not a vector field.

Hence, noncommutative geometry often starts with differential forms.
This is a method of defining a differential form on a noncommutative algebra, 
and then treating the set of differential form as a bimodule.
The reason for using bimodules is that it is easy to define a connection on the bimodule.
In section \ref{subsubsec:connectionoverbdl}, a connection was a linear map satisfying the Leibniz rule \eqref{eq:connectionleibnizrule}.
However, in the noncommutative case, 
since the commutativity of $f$ and $\xi$ is not assumed, 
a connection defined for $f \xi$ is not necessarily defined for $\xi f$.
In a bimodule case, we can define a connection by defining the Leibniz rule for $f \xi$ and $\xi f$, respectively.

First, we define a 1-form on an associative algebra.
\begin{definition}
  (First order differential calculus)
  A (first order) differential calculus on an associative algebra $A$ is a pair $(\Omega^1(A), d)$ such that
  \begin{enumerate}
    \item $\Omega^1(A)$ is a $A$-bimodule.
    \item a linear map $d : A \to \Omega^1(A)$ which obeys the Leibniz rule $d(ab) = (da)b + a(db)$.
    \item $\Omega^1(A) = \operatorname{span}\{ da\;|\;a \in A \}$
  \end{enumerate}
  The elements of $\Omega^1(A)$ are called 1-forms.
\end{definition}
Next, we also define a $n$-form on an associative algebra.
\begin{definition}    \label{def:differentialcalculus}
  (Differential calculus)
  A differential calculus on an associative algebra $A$ is a pair $(\Omega^n(A), d)$ such that 
  an associative wedge product $\wedge : \Omega^n(A) \otimes_A \Omega^m(A) \to \Omega^{n + m}(A)$ and 
  the linear map $d$ satisfy the following
  \begin{enumerate}
    \item $\Omega^0(A) = A$
    \item $d^2 = 0$
    \item $d(\xi \wedge \eta) = d\xi \wedge \eta + (-1)^{\abs{\xi}} \xi \wedge d\eta$\ ($\abs{\xi} = n$ if $\xi \in \Omega^n(A)$)
    \item $\Omega^n(A)$ is generated by $A, dA$.
  \end{enumerate}
  The elements of $\Omega^n(A)$ are called $n$-forms.
\end{definition}
\begin{remark}   \label{rem:ordercommutatibity}
  In classical case, a wedge product is graded commutative, i.e. $\xi \wedge \eta = (-1)^{\abs{\xi}\abs{\eta}} \eta \wedge \xi$,
  however, we do not assume graded commutativity for a wedge product defined on $\Omega^n(A)$.
\end{remark}

\subsubsection{Connections on bimodules}   \label{subsubsec:moduleconnection}
In the bimodule approach, a noncommutative vector bundle is regarded as a bimodule whose action is defined by a noncommutative product.
In this section, we define a connection on a bimodule.
First, we define a left connection.
\begin{definition}
  (Left connection)
  Let $E$ be a $A$-bimodule.
  A left connection $\nabla_E$ on $E$ to be a map $\nabla_E : E \to \Omega^1(A) \otimes_A E$ obeying the left Leibniz rule
  \begin{equation}
    \nabla_E (ae) = da \otimes_A e + a \nabla_E (e).   \label{eq:leftconnection}
  \end{equation}
\end{definition}
The left Leibniz rule \eqref{eq:leftconnection} corresponds to the Leibniz rule \eqref{eq:connectionleibnizrule},
which a connection on a vector bundle satisfies.
Since a left connection is defined for a left $A$-module, we can also define a left connection for a $A$-bimodule.
However, in a $A$-bimodule there is also a right action $ea \in E$ of $a \in A$ on $e \in E$.
Thus, we call $\nabla_E$ a bimodule connection, if the left connection $\nabla_E$ can also define a connection for the right action $ea$.
\begin{definition}
  (Bimodule connection)
  Let $\nabla_E$ be a left connection on $A$-bimodule.
  We call $\nabla_E$ a bimodule connection if there exists a bimodule map $\sigma_E : E \otimes_A \Omega^1(A) \to \Omega^1(A) \otimes_A E$ that satisfies
  \begin{equation}
    \nabla_E (ea) = \nabla_E (e)a + \sigma_E(e \otimes_A da).    \label{eq:bimoduleconnection}
  \end{equation}
\end{definition}
$\sigma_E$ is called a generalised braiding.

We note some remarks about $\sigma_E$.
First, $\sigma_E$ is a left module map from \eqref{eq:leftconnection} and \eqref{eq:bimoduleconnection}.
This can be seen from
\begin{align*}
  \sigma_E(be \otimes_A da) & = \nabla_E(bea) - \nabla_E(be)a \\
                        & = db \otimes_A ea + b \nabla_E(ea) - db \otimes_A ea - b \nabla_E(e) a \\
                        & = b (\nabla_E(ea) - \nabla_E(e) a) \\
                        & = b \sigma_E(e \otimes_A da)
\end{align*}
for $a, b \in A$,\ $e \in E$.
However, in general $\sigma_E$ is not a right module map.
The following proposition holds.
\begin{proposition}   \label{prop:welldefinedbimodulemap}
  A generalized braiding\;$\sigma_E$ is well-defined if and only if $\sigma_E$ is a bimodule module map.
\end{proposition}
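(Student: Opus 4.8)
The plan is to unpack exactly what fails to make the formula $\sigma_E(e\otimes_A da)=\nabla_E(ea)-\nabla_E(e)a$ into a legitimate definition of a map on $E\otimes_A\Omega^1(A)$. Since $\Omega^1(A)=\operatorname{span}\{da\mid a\in A\}$ (third axiom of a first order differential calculus), every element of $E\otimes_A\Omega^1(A)$ is a finite sum $\sum_k e_k\otimes_A da_k$, so the prescription $\sum_k e_k\otimes_A da_k\mapsto\sum_k\bigl(\nabla_E(e_k a_k)-\nabla_E(e_k)a_k\bigr)$ is the only candidate for $\sigma_E$. The issue is well-definedness: the expression $\sum_k e_k\otimes_A da_k$ does not determine the $a_k$ and $e_k$ uniquely, so I must check that the right-hand side is independent of the chosen representative. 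The key point, which I would isolate first, is that the tensor product is over $A$, so the single nontrivial relation one must respect is $eb\otimes_A da = e\otimes_A b\,da$ for $b\in A$ — equivalently, using the Leibniz rule $d(ba)=(db)a+b\,da$, the relation $eb\otimes_A da = e\otimes_A d(ba) - e\otimes_A (db)a$.

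So the heart of the argument is the following equivalence. The candidate formula is well-defined on $E\otimes_A\Omega^1(A)$ if and only if, for all $b\in A$, $e\in E$, the two ways of evaluating the balanced element agree:
\begin{equation*}
  \nabla_E\bigl((eb)a\bigr)-\nabla_E(eb)a \;=\; \nabla_E\bigl(e(ba)\bigr)-\nabla_E(e)(ba)+\bigl(\text{correction from }e\otimes_A(db)a\bigr).
\end{equation*}
First I would expand the left side using associativity of the right $A$-action, $(eb)a=e(ba)$, so $\nabla_E((eb)a)=\nabla_E(e(ba))$. Then I would expand $\nabla_E(eb)a$ — but this is precisely the term that is only controlled when $\sigma_E$ already is a genuine bimodule map: we would want $\nabla_E(eb)=\nabla_E(e)b+\sigma_E(e\otimes_A db)$, and then $\nabla_E(eb)a=\nabla_E(e)ba+\sigma_E(e\otimes_A db)a$. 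Substituting and cancelling $\nabla_E(e(ba))$ from both sides, the well-definedness condition collapses to
\begin{equation*}
  \sigma_E(e\otimes_A db)\,a \;=\; \sigma_E\bigl(e\otimes_A (db)a\bigr) \;=\; \sigma_E\bigl(e\otimes_A (d(ba)-b\,da)\bigr),
\end{equation*}
i.e. $\sigma_E$ must be compatible with the right $A$-action on $\Omega^1(A)$ — which, combined with the left-module property already established in the excerpt (the displayed computation showing $\sigma_E(be\otimes_A da)=b\,\sigma_E(e\otimes_A da)$), is exactly the statement that $\sigma_E$ is an $A$-bimodule map. Conversely, if $\sigma_E$ is a bimodule map, running the same chain of equalities backwards shows the representative-independence, so the formula defines $\sigma_E$ unambiguously. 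I would then note that once well-defined, $\sigma_E$ automatically satisfies \eqref{eq:bimoduleconnection} by construction, closing the "if" direction, while the "only if" direction is immediate since a well-defined $\sigma_E$ satisfying \eqref{eq:bimoduleconnection} is being asserted to be a bimodule map as part of the claim.

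The main obstacle I anticipate is purely bookkeeping-of-relations rather than conceptual: one must be scrupulous that the only relation imposed in $E\otimes_A\Omega^1(A)$ is the $A$-balancing $eb\otimes_A\omega = e\otimes_A b\omega$, and that every generator of $\Omega^1(A)$ is of the form $da$ (not $a\,db$ or $(da)b$ in an essential way, though these of course lie in the span), so that testing well-definedness against the single family of relations $e\otimes_A d(ba) = e\otimes_A (db)a + eb\otimes_A da$ suffices. A secondary subtlety is being careful about which Leibniz rule and which module structure is invoked at each step — the left Leibniz rule \eqref{eq:leftconnection} is used to get the left-module property of $\sigma_E$ (already done in the text), whereas it is the interplay of the right $A$-action on $E$ with the right $A$-action on $\Omega^1(A)$ through $d$ that produces the bimodule-map requirement. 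I would present the argument as a short chain of equalities in each direction, flagging explicitly where associativity of the $A$-actions and the Leibniz rule for $d$ enter.
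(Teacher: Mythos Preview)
Your proposal is correct and is essentially the same argument as the paper's: both reduce to expanding $\nabla_E(eab)$ in two different ways using the defining relation \eqref{eq:bimoduleconnection} and the Leibniz rule for $d$, and observing that the resulting identity equates the right-$A$-linearity of $\sigma_E$ with its compatibility under the balancing relation $ea\otimes_A db=e\otimes_A a\,db$. The paper's write-up is slightly more streamlined in that it applies \eqref{eq:bimoduleconnection} directly to $\nabla_E((ea)b)$ and $\nabla_E(e(ab))$ rather than first rewriting $b\,da$ via Leibniz, but the content is identical.
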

\begin{proof}
  Let $a, b \in A$,\ $e \in E$.
  $\nabla_E(eab)$ can be written in the following two ways
  \begin{align}
    \nabla_E(eab) & = \nabla_E(ea) b + \sigma_E(ea \otimes_A db),    \label{eq:welldefinedbimodulemap1} \\
    \nabla_E(eab) & = \nabla_E(e) ab + \sigma_E(e \otimes_A d(ab)).  \label{eq:welldefinedbimodulemap2}
  \end{align}
  And then we have
  \begin{equation*}
    \sigma_E(e \otimes_A (da)b) - \sigma_E(e \otimes_A da)b = \sigma_E(ea \otimes_A db) - \sigma_E(e \otimes_A adb).
  \end{equation*}
  If $\sigma_E$ is well-defined, then $\sigma_E$ is a bimodule map because the right-hand side is zero.
  Conversely, if $\sigma_E$ is a bimodule map, then $\sigma_E$ is well-defined because the left-hand side is zero.
\end{proof}
Next, we see that $\sigma_E$ is uniquely determined if $\sigma_E$ is well-defined (i.e. a bimodule map).
\begin{proposition}
  If a generalized braiding\;$\sigma_E$ is well-defined, then $\sigma_E$ is uniquely determined.
\end{proposition}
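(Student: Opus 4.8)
The plan is to notice that the defining relation \eqref{eq:bimoduleconnection} already determines $\sigma_E$ completely on a spanning set of $E \otimes_A \Omega^1(A)$, so that uniqueness will follow once that spanning claim is checked. First I would rewrite \eqref{eq:bimoduleconnection} as
\begin{equation*}
  \sigma_E(e \otimes_A da) = \nabla_E(ea) - \nabla_E(e)a, \qquad e \in E,\ a \in A,
\end{equation*}
observing that the right-hand side is built only from the fixed left connection $\nabla_E$ and the bimodule structure of $E$, and in particular does not refer to $\sigma_E$ at all. Thus the value of any generalised braiding $\sigma_E$ compatible with $\nabla_E$ is forced on every element of the form $e \otimes_A da$.

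Next I would verify that such elements span $E \otimes_A \Omega^1(A)$. By condition~3 in the definition of a first order differential calculus, $\Omega^1(A) = \operatorname{span}\{da \mid a \in A\}$, so an arbitrary element of $E \otimes_A \Omega^1(A)$ is a finite sum $\sum_i e_i \otimes_A (a_i\, db_i)$ with $e_i \in E$ and $a_i, b_i \in A$; using the $A$-balancing of the tensor product, $e_i \otimes_A (a_i\, db_i) = (e_i a_i) \otimes_A db_i$, which is again of the required form. Hence $\sigma_E$ is pinned down on a generating set, and by linearity on all of $E \otimes_A \Omega^1(A)$. Consequently, if $\sigma_E$ and $\sigma_E'$ are two generalised braidings for the same $\nabla_E$, they agree on each generator $e \otimes_A da$ by the displayed formula, hence coincide — which is exactly the asserted uniqueness.

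The only delicate point, and the reason the hypothesis that $\sigma_E$ is well-defined is needed, is that the prescription $e \otimes_A da \mapsto \nabla_E(ea) - \nabla_E(e)a$ must be consistent with the relations defining the tensor product over $A$ and with the fact that a single element admits several presentations as such a sum; this consistency is precisely well-definedness, equivalently (by Proposition~\ref{prop:welldefinedbimodulemap}) the bimodule-map property. So I do not expect a genuinely hard obstacle here: once well-definedness is granted by hypothesis, the argument is just the observation that the defining equation has at most one solution, together with the routine bookkeeping of the spanning step over the $A$-module tensor product.
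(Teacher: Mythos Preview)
Your argument is correct and follows essentially the same route as the paper: both proofs observe that the defining relation forces $\sigma_E(e \otimes_A da) = \nabla_E(ea) - \nabla_E(e)a$, so any two braidings for the same $\nabla_E$ must agree. Your version is in fact more complete, since you spell out the spanning argument (using condition~3 of the differential calculus and $A$-balancing) that the paper leaves implicit.
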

\begin{proof}
  In the proof of Proposition \ref{prop:welldefinedbimodulemap}, 
  we rewrite \eqref{eq:welldefinedbimodulemap2} as
  \begin{equation}
    \nabla_E(eab) = \nabla_E(e) ab + \sigma'_E(e \otimes_A d(ab)).    \label{eq:welldefinedbimodulemap3}
  \end{equation}
  By assumption, we obtain
  \begin{equation*}
    \sigma'_E(e \otimes_A d(ab)) = \sigma_E(e \otimes_A d(ab))
  \end{equation*}
  from \eqref{eq:welldefinedbimodulemap1} and \eqref{eq:welldefinedbimodulemap3}.
  Thus, $\sigma_E$ is uniquely determined if it is well-defined.
\end{proof}
If $\sigma_E$ exists, from \eqref{eq:leftconnection} and \eqref{eq:bimoduleconnection} one can also deduce a useful formula
\begin{equation}
  \sigma_E(e \otimes_A da) = da \otimes_A e + \nabla_E [e, a] + [a, \nabla e].   \label{eq:braidingformula}
\end{equation}
If the action of $A$ on a $A$-bimodule $E$ is commutative, 
the second and third terms of \eqref{eq:braidingformula} are zero.
In this case, the generalized braiding\;$\sigma_E$ is simply a map that swaps the order of tensor products.
This is a generalization of \eqref{eq:classicalmetriccompatibilityform}.

\begin{remark}
  Above, we defined from a left connection to define a bimodule connection.
  However, there is another style of connection called a right connection $\nabla_E : E \to E \otimes_A \Omega^1(A)$, 
  which is defined by
  \begin{equation*}
    \nabla_E(ea) = e \otimes_A da + \nabla_E (e) a
  \end{equation*}
  on a right $A$-module, 
  and is called a bimodule connection if there exists a bimodule map satisfying
  \begin{equation*}
    \nabla_E(ae) = a \nabla_E (e) + \sigma_E(da \otimes_A e).
  \end{equation*}

  This is the difference between defining a connection on a vector bundle in classical differential geometry 
  as $\nabla : \Gamma(E) \to \Gamma(T^*M \otimes E)$ or
  $\nabla : \Gamma(E) \to \Gamma(E \otimes T^*M)$.
\end{remark}

The bimodule connection on the tensor product $E \otimes_A F$ of the bimodule $E, F$ is defined by
\begin{equation}
  \nabla_{E \otimes_A F} (e \otimes_A f) = \nabla_E e \otimes_A f + (\sigma_E \otimes_A id) (e \otimes_A \nabla_F f).    \label{eq:tensorbimoduleconnection}
\end{equation}
\eqref{eq:tensorbimoduleconnection} corresponds to \eqref{eq:tensorvecbdlconnection} in the case of vector bundles.

In order to semiquantitize a torsion later, 
we define a torsion using a connection on a module.
A torsion on a bimodule is defined as
\begin{equation}
  T_{\nabla} = d - \wedge \nabla \quad T_{\nabla} : \Omega^1(A) \to \Omega^2(A)   \label{eq:classicaltorsion}
\end{equation}
with reference to \eqref{eq:classicaltorsionform}.

\subsubsection{Riemannian structure over bimodules}    \label{subsubsec:riemmannianstronbimodule}
In this section, as in section \ref{subsec:riemannstronTM}, 
the bimodule $E$ is $\Omega^1(A)$ and 
we see a Riemannian metric and the Levi-Civita connection.
A Riemannian metric on a bimodule $\Omega^1(A)$ is called a quantum Riemannian metric and is defined as follows.
\begin{definition}    \label{def:quantumriemannianmetric}
  (Quantum Riemannian metric)
  Let $A$ be an associative algebra. \\
  $g \in \Omega^1(A) \otimes_A \Omega^1(A)$ is called a quantum Riemannian metric if the following two conditions are satisfied.
  \begin{enumerate}
    \item $g$ is invertible, in the sense that there exists a bimodule map $(\cdot, \cdot) : \Omega^1(A) \otimes_A \Omega^1(A) \to A$ such that
          \begin{equation}
            (\omega, g^{(1)})g^{(2)} = \omega = g^{(1)} (g^{(2)}, \omega)   \label{eq:quantumriemanniancondition}
          \end{equation}
          for all $\omega \in \Omega^1(A)$, where we write $g = g^{(1)} \otimes_A g^{(2)}$.
    \item $g$ is symmetric, i.e. $\wedge(g) = 0$.
  \end{enumerate}
\end{definition}
Condition 1 states that the contractions from left and right are equal by the quantum Riemannian metric $g$.
And condition 1 is equivalent to the centrality of the metric.
\begin{proposition}   \label{prop:centralandinvertible}
  A quantum Riemannian metric $g$ is central in a bimodule if and only if there exists a bimodule map $(\cdot, \cdot)$ which satisfies \eqref{eq:quantumriemanniancondition}.
\end{proposition}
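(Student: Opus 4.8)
Throughout write $g = g^{(1)} \otimes_A g^{(2)}$ with the usual implicit summation, and recall that a bimodule map $(\cdot,\cdot)\colon \Omega^1(A)\otimes_A\Omega^1(A)\to A$ is left $A$-linear, right $A$-linear, and — because it is defined on the \emph{balanced} tensor product — also satisfies the ``middle'' identity $(\omega a,\eta)=(\omega,a\eta)$. The relations \eqref{eq:quantumriemanniancondition} are the two zig-zag identities $(\omega,g^{(1)})g^{(2)}=\omega$ and $g^{(1)}(g^{(2)},\omega)=\omega$ for all $\omega\in\Omega^1(A)$. The plan is to prove the two implications separately, the forward one being a short tensor-calculus manipulation and the backward one being its mirror image.

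For the direction ``contraction exists $\Rightarrow$ $g$ central'', I would fix $a\in A$ and show $a\,g = g\,a$ in $\Omega^1(A)\otimes_A\Omega^1(A)$. Starting from $a\,g = (a\,g^{(1)})\otimes_A g^{(2)}$, I apply the second relation of \eqref{eq:quantumriemanniancondition} to the $1$-form $a\,g^{(1)}$, using a fresh copy $\tilde g=\tilde g^{(1)}\otimes_A\tilde g^{(2)}$ of the metric, so that $a\,g^{(1)} = \tilde g^{(1)}\,(\tilde g^{(2)},\,a\,g^{(1)})$. I then move the $A$-valued factor $(\tilde g^{(2)},\,a\,g^{(1)})$ across $\otimes_A$, rewrite it as $(\tilde g^{(2)}a,\,g^{(1)})$ by the middle identity, and collapse the $g^{(1)}$-slot using the first relation of \eqref{eq:quantumriemanniancondition}, namely $(\tilde g^{(2)}a,\,g^{(1)})\,g^{(2)} = \tilde g^{(2)}a$. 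What is left is $\tilde g^{(1)}\otimes_A\tilde g^{(2)}a = g\,a$, so $g$ commutes with every $a\in A$, i.e.\ $g$ is central.

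For the converse, assume $g$ is central. Invertibility of $g$ (built into the metric axioms, Definition \ref{def:quantumriemannianmetric}) already supplies a contraction compatible with one of the module structures together with \eqref{eq:quantumriemanniancondition}; the remaining point is that centrality forces it to be compatible with the other module structure, hence a genuine bimodule map. Here centrality is used in the opposite direction: the identity $a\,g = g\,a$ lets one transport an $A$-action from the right of $g^{(2)}$ to the left of $g^{(1)}$ inside the bracket, and running the snake-relation computation of the previous paragraph backwards converts $(\omega,\eta a)$ into $(\omega,\eta)\,a$. This gives the missing right $A$-linearity, completing the equivalence.

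I expect the main obstacle to be purely bookkeeping over the balanced tensor product: each manipulation must respect $\otimes_A$, and in particular, in the converse direction, one must be careful not to invoke right $A$-linearity of $(\cdot,\cdot)$ before it has actually been established — centrality is precisely what is allowed to substitute for it. A secondary subtlety is delimiting, in the converse, which portion of ``$g$ invertible'' is taken as given and which is derived; under the full strength of Definition \ref{def:quantumriemannianmetric} that direction is essentially immediate, so the real content of the proposition is the equivalence of centrality of $g$ with the existence of a two-sided contraction map satisfying \eqref{eq:quantumriemanniancondition}.
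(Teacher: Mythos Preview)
Your argument for ``contraction exists $\Rightarrow$ $g$ central'' is exactly the paper's computation: apply the second zig-zag to $a\,g^{(1)}$, slide the scalar across $\otimes_A$, use middle linearity to rewrite $(\tilde g^{(2)},a\,g^{(1)})=(\tilde g^{(2)}a,g^{(1)})$, then collapse with the first zig-zag. The paper simply uses primed Sweedler indices where you use a tilde copy, and writes $*$ where you write juxtaposition.

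For the converse you and the paper are in the same position: the paper also tacitly assumes a map $(\cdot,\cdot)$ satisfying the zig-zag relations and then argues from centrality that the two one-sided contractions agree and that $(\cdot,\cdot)$ is a bimodule map; it does not manufacture $(\cdot,\cdot)$ from centrality alone. Your closing paragraph correctly flags this: under the literal wording of Definition~\ref{def:quantumriemannianmetric} the converse is immediate, and the genuine content of the proposition is the implication you proved in detail. So your proposal matches the paper's approach, and your caveat about what is assumed in the converse is well placed --- the paper's own proof of that direction is brief and relies on the same unstated starting data.
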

\begin{proof}
  We denote $g = g^{(1)} \otimes_A g^{(2)} = g^{(1')} \otimes_A g^{(2')}$.
  Let $g$ be central in a bimodule.
  Then we show that $(a * g^{(1')}, g^{(1')}) g^{(2')} = g^{(1')} (g^{(2')} * a, g^{(1)})$.

  If $(a * g^{(1')}, g^{(1')}) g^{(2')} = a * g^{(1')}$, then
  \begin{align*}
    a * g & = a * g^{(1')} \otimes_A g^{(2')} \\
          & = (a * g^{(1')}, g^{(1)}) g^{(2)} \otimes_A g^{(2')}.
  \end{align*}
  Whereas
  \begin{align*}
    g * a & = g^{(1)} \otimes_A g^{(2)} * a \\
          & = g^{(1)} \otimes_A (g^{(2)} * a, g^{(1')}) g^{(2')} \\
          & = g^{(1)} (g^{(2)} * a, g^{(1')}) \otimes_A g^{(2')}.
  \end{align*}
  Therefore, we have
  \begin{equation*}
    (a * g^{(1')}, g^{(1')}) g^{(2')} = g^{(1')} (g^{(2')} * a, g^{(1)}).
  \end{equation*}
  Similarly, it can be shown that, given $g^{(1')} (g^{(2')} * a, g^{(1)}) = g^{(2')} * a$.
  Also, $(\cdot, \cdot)$ is a bimodule map.
  Thus, we can construct a bimodule map $(\cdot, \cdot)$ which satisfies \eqref{eq:quantumriemanniancondition}.

  Conversely, suppose there exists a bimodule map $(\cdot, \cdot)$ which satisfies \eqref{eq:quantumriemanniancondition}.
  Then we have
  \begin{align*}
    a * g & = a * g^{(1)} \otimes_A g^{(2)} \\
            & = g^{(1')} (g^{(2')}, a * g^{(1)}) \otimes_A g^{(2)} \\
            & = g^{(1')} \otimes_A (g^{(2')}, a * g^{(1)}) g^{(2)} \\
            & = g^{(1')} \otimes_A (g^{(2')} * a, g^{(1)}) g^{(2)} \\
            & = g^{(1')} \otimes_A g^{(2')} * a \\
            & = g * a.
  \end{align*}
  Thus, $g$ is central.
\end{proof}
Proposition \ref{prop:centralandinvertible} shows that in a classical case, invertibility is obvious from commutativity.
And condition 2 corresponds to the Riemannian metric $g$ being a symmetric tensor field in a classical case.

Next, we define a Levi-Civita connection on the bimodule $\Omega^1(A)$.
The Levi-Civita connection on $\Omega^1(A)$ is called a quantum Levi-Civita connection and is defined as follows.
\begin{definition}
  (Quantum Levi-Civita connection)
  Let $A$ be an associative algebra and $\nabla$ be a bimodule connection on $\Omega^1(A)$.
  Moreover, let $g \in \Omega^1(A) \otimes_A \Omega^1(A)$ be a quantum metric.
  $\nabla$ is called a quantum Levi-Civita connection if $\nabla$ satisfies the following two conditions.
  \begin{enumerate}
    \item $\nabla$ is compatible with $g$, i.e. $\nabla g = 0$.
    \item $\nabla$ is torsion free.
  \end{enumerate}
\end{definition}
It can be seen that the definition of a quantum Levi-Civita connection is the same 
as the definition of a Levi-Civita connection described in section \ref{subsubsec:classicalriemannianmfd}.
Here, the torsion in condition 2 is defined by \eqref{eq:classicaltorsion}.

Finally, we define a weak quantum Levi-Civita connection, 
which is a generalization of a quantum Levi-Civita connection \cite{name4}.
We define a cotorsion before defining a weak quantum Levi-Civita connection.
\begin{definition}
  (Cotorsion)
  The cotorsion of a connection $\nabla$ with quantum metric $g$ is the element $\mathrm{co}T_{\nabla} \in \Omega^2(A) \otimes_A \Omega^1(A)$ defined by
  \begin{equation*}
    \mathrm{co}T_{\nabla} = (d \otimes id - (\wedge \otimes id) \circ (id \otimes \nabla))g.
  \end{equation*}

  A connection is called cotorsion free if $\mathrm{co}T_{\nabla} = 0$.
  If $\nabla$ is torsion free, $\mathrm{co}T_{\nabla}$ is written as
  \begin{equation}
    \mathrm{co}T_{\nabla} = (\wedge \otimes id) \nabla g.   \label{eq:cotorsion_2}
  \end{equation}
\end{definition}
We define a weak quantum Levi-Civita connection.
\begin{definition}
  (Weak quantum Levi-Civita connection)
  A weak quantum Levi-Civita connection is a connection $\nabla$ which is torsion free and cotorsion free.
\end{definition}

We describe the origin of a cotorsion and 
see that a weak quantum Levi-Civita connection is a generalization of a quantum Levi-Civita connection.
In affine geometry, an affine connection $\nabla^*$ called a dual connection for an affine connection $\nabla$ on a manifold $M$ is defined by
\begin{equation*}
  X g(Y, Z) = g(\nabla_X Y, Z) + g(Y, \nabla^*_X Z) \quad (X, Y, Z \in \Gamma(TM)).
\end{equation*}
The torsion $T^*$ of $\nabla^*$ is computed by
\begin{equation}
  (\nabla_Y g) (X, Z) + g(T^*(X, Y), Z) = (\nabla_X g) (Y, Z) + g(T(X, Y), Z),   \label{eq:cotorsion}
\end{equation}
which corresponds to $\mathrm{co}T_{\nabla}$.
Also, if $\nabla$ and $\nabla^*$ is torsion free in \eqref{eq:cotorsion}, then
\begin{equation*}
  (\nabla_Y g) (X, Z) = (\nabla_X g) (Y, Z)
\end{equation*}
holds.
This equation is called Codazzi equation, 
and $\nabla$ satisfying this is written as $\nabla g = C$, where $C$ is a totally symmetric tensor field.
When $C = 0$, $\nabla$ coincides with $\nabla g = 0$, which is compatible with the Riemannian metric.
Namely, if $\nabla$ and $\nabla^*$ are torsion free, 
then we obtain a connection satisfying $\nabla g = C$, which weakens the condition $\nabla g = 0$.

A weak quantum Levi-Civita connection is a generalized quantum Levi-Civita connection in this sense.

\subsection{Deformation of \texorpdfstring{$C^{\infty}(M)$}{}}
Starting from this section, we will construct our bimodule approach by semiquantizing classical data.
The first step is to prepare a noncommutative algebra.

In the following sections, we will assume that $M$ is a symplectic manifold for simplicity.
This is because we can assume that $M$ is a Poisson manifold (the Poisson tensor $\pi^{ij}$ is degenerate), 
however, in this paper, we treat the upper half-plane, a symplectic manifold, as a concrete example.

If $M$ is a symplectic manifold, a deformation of the algebra $C^{\infty}(M)$ 
is similar to that in section \ref{subsec:deformationquantizationofsympecticmanifolds}.
The Poisson algebra $(C^{\infty}(M), \{\cdot, \cdot\})$ is deformed to $(C^{\infty}(M)[[\lambda]], *)$ by the following $*$-product
\begin{equation*}
  a * b = ab + \frac{\lambda}{2} \{ a, b \} \quad (a, b \in C^{\infty}(M)),
\end{equation*}
where $\{\cdot, \cdot\}$ is a Poisson bracket.
The $*$-product is associative at $\mathcal{O}(\lambda^2)$.
\begin{align*}
  (a * b) * c & = \qty(ab + \frac{\lambda}{2} \{a, b\}) * c \\
              & = (ab)c + \frac{\lambda}{2} \{ab, c\} + \frac{\lambda}{2} \{a, b\}c \\
              & = a(bc) + \frac{\lambda}{2} (a\{b, c\} + \{a, c\}b) + \frac{\lambda}{2} (\{a, bc\} - b\{a, c\}) \\
              & = a(bc) + \frac{\lambda}{2} \{a, bc\} + \frac{\lambda}{2} a\{b, c\} \\
              & = a * \qty(bc + \frac{\lambda}{2} \{b, c\}) \\
              & = a * (b * c)
\end{align*}

Then the commutator for the $*$-product of $a$ and $b$ is
\begin{equation*}
  [a, b]_* = \lambda \{a, b\}.
\end{equation*}
In the following, we denote $C^{\infty}(M)[[\lambda]] = A_{\lambda}$ and assume that the associative algebra is $A_{\lambda}$.
Note that the following facts are working at first order in $\lambda$ and dropping errors $\mathcal{O}(\lambda^2)$.

\subsection{Deformation of bimodules}    \label{subsec:functionandform}
In the previous section we defined noncommutative algebra $A_{\lambda}$.
Next, we define a $A_{\lambda}$-bimodule.

It was mentioned before that a vector bundle can be regarded as a (projective) module.
We construct a noncommutative vector bundle by noncommutative action of a bimodule.
In other words, the product between a function and a differential form is deformed to a noncommutative product.

First, the action of $C^{\infty}(M)$ on $\Omega^1(M)$ is deformed, 
and then we define the action of $A_{\lambda}$ on $\Omega^1(A_{\lambda})$ by extending it to the first order of $\lambda$.

Using the same philosophy as in section \ref{sec:deformationquantization}, we define the map $\gamma$ for $a \in C^{\infty}(M)$ and $\xi \in \Omega^1(M)$ as follows :
\begin{equation}
  [a, \xi]_* = \lambda \gamma(a, \xi).    \label{eq:gamma}
\end{equation}
From the fact that $\Omega^1(M)$ is a bimodule and associativity of the $*$-product, we have the following two formulas.
\begin{itemize}
  \item $\gamma(ab, \xi) = a \gamma(b, \xi) + \gamma(a, \xi)b$
  \item $\gamma(a, \xi b) = \gamma(a, \xi)b + \xi \{ a, b \}$
\end{itemize}
Here we define $\gamma(a, \xi) = \nabla_{\hat{a}} \xi$ ($\hat{a} = \{ a, \cdot \}$ is a Hamiltonian vector field).
In this case, the first equation can be rewritten as
\begin{equation*}
  \nabla_{\widehat{ab}} \xi = a \nabla_{\hat{b}} \xi + b \nabla_{\hat{a}} \xi,
\end{equation*}
and the left side is
\begin{equation*}
  \nabla_{\hat{a}b + a\hat{b}} \xi = b \nabla_{\hat{a}} \xi + a \nabla_{\hat{b}} \xi
\end{equation*}
from the Poisson bracket property $\{ab, c\} = a\{b, c\} + b\{a, c\}$.
Thus, the first equation is equivalent to
\begin{equation*}
  \nabla_{\hat{a}b + a\hat{b}} \xi = b \nabla_{\hat{a}} \xi + a \nabla_{\hat{b}} \xi.
\end{equation*}
This shows that $\nabla$ has tensoriality with respect to vector field direction of differentiation.
Whereas the second equation is rewritten as
\begin{equation*}
  \nabla_{\hat{a}} (\xi b) = \nabla_{\hat{a}} (\xi) b + \xi \{a, b\},
\end{equation*}
which shows that $\nabla$ satisfies the Leibniz rule.

Furthermore, we can see
\begin{itemize}
  \item $d\{ a, b \} = \gamma(a, db) - \gamma(b, da)$
\end{itemize}
because the Leibniz rule for exterior differential operator $d$ defined by $\Omega^1(M)$.
If $\nabla$ satisfies this equation, then $\nabla$ is said to satisfy Poisson compatibility.

From the above, the commutator of a function $a \in C^{\infty}(M)$ and a 1-form $\xi \in \Omega^1(M)$ is
\begin{equation}
  [a, \xi]_* = \lambda \nabla_{\hat{a}} \xi,   \label{eq:noncommutatibity}
\end{equation}
which we can realise by defining the deformed product of a function $a$ and a 1-form $\xi$ as
\begin{align*}
  a * \xi & = a\xi + \frac{\lambda}{2} \nabla_{\hat{a}} \xi, \\
  \xi * a & = a\xi - \frac{\lambda}{2} \nabla_{\hat{a}} \xi.
\end{align*}

We define $\Omega^1(A_{\lambda})$ as built on the vector space $\Omega^1(M)$ 
extended over $\lambda$ and taken with these $*$ actions $\Omega^1(A_{\lambda}) \otimes_{A_{\lambda}} A_{\lambda} \to \Omega^1(A_{\lambda})$ and 
$A_{\lambda} \otimes_{A_{\lambda}} \Omega^1(A_{\lambda}) \to \Omega^1(A_{\lambda})$ forming an $A_{\lambda}$-bimodule over $\mathbb{C}[\lambda] / \lambda^2$ or a bimodule up to $\mathcal{O}(\lambda^2)$.

In the case of a bundle $E$ with connection $\nabla_E$, the action is defined by
\begin{align*}
  a * e & = ae + \frac{\lambda}{2} \nabla_{\hat{a}} e, \\
  e * a & = ae - \frac{\lambda}{2} \nabla_{\hat{a}} e.
\end{align*}
for $e \in E$.

The noncommutativity between functions and differential forms depends on the connection $\nabla$ that appears in \eqref{eq:noncommutatibity}.
Thus, in semiclassical theory, the classical data required for semiquantization is the connection $\nabla$, 
where $\nabla$ is a covariant derivative along the Hamiltonian vector field and compatible with Poisson structure.

Finally, we describe the connection in \eqref{eq:noncommutatibity} using the local coordinate system $(x^1, \ldots, x^n)$, where the manifold $M$ is a symplectic manifold.
If $M$ is a symplectic manifold, then from the non-degeneracy of $\omega$ the symplectic form has inverse $\omega^{ij}$.
From definition \ref{def:hamiltonvectorfield}, a Hamiltonian vector field $\hat{y}$ for a Hamiltonian $y$ is given by
\begin{equation*}
  \hat{y} = \omega^{ij} \pdv{y}{x^i} \pdv{x^j}.
\end{equation*}
Thus, the covariant derivative along the Hamiltonian vector field is
\begin{equation*}
  \nabla_{\hat{y}} \xi = \omega^{ij} \pdv{y}{x^i} \nabla_j \xi
\end{equation*}
from the tensoriality of $\nabla$.

From the above, a deformed product of $a \in C^{\infty}(M)$ and $\xi \in \Omega^1(M)$ can be written
\begin{align*}
  a * \xi & = a\xi + \frac{\lambda}{2} \omega^{ij} \pdv{a}{x^i} \nabla_j \xi, \\
  \xi * a & = a\xi - \frac{\lambda}{2} \omega^{ij} \pdv{a}{x^i} \nabla_j \xi.
\end{align*}
And the commutation relation between $a$ and $\xi$ is
\begin{equation}
  [a, \xi]_* = \lambda \omega^{ij} \pdv{a}{x^i} \nabla_j \xi.    \label{eq:commutationrelation}
\end{equation}
In the following, we use the notation $a_{,i} := \pdv{a}{x^i}$ and also denote
\begin{equation*}
  [a, \xi]_* = \lambda \omega^{ij} a_{,i} \nabla_j \xi.
\end{equation*}

Also Poisson compatibility
\begin{equation*}
  d\{ a, b \} = \gamma(a, db) - \gamma(b, da)
\end{equation*}
is locally written in
\begin{equation}
  \pdv{\omega^{ij}}{x^n} + \omega^{iq} \Gamma^j_{qn} + \omega^{qj} \Gamma^i_{qn} = 0.    \label{eq:poissoncompatibility}
\end{equation}
Using the torsion tensor $T^j_{qn} = \Gamma^j_{qn} - \Gamma^j_{nq}$, we can rewrite it as
\begin{equation*}
  \nabla_n \omega^{ij} + \omega^{iq} T^j_{qn} + \omega^{qj} T^i_{qn} = 0.
\end{equation*}
If $\nabla$ is torsion free, the Poisson compatibility reduces to $\omega$ covariantly constant.
This fact will appear later when semiquantizing a connection.

Hereafter, we call the connection $\nabla$ in \eqref{eq:noncommutatibity} a background connection on $\Omega^1(M)$, 
and from the above we say that the background connection is Poisson compatibility.

\subsection{Semiquantization of tensor products}
In the previous section, we semiquantized a vector bundle by deforming a bimodule.
Next, we want to define a connection and metric on a semiquantized vector bundle, i.e. a deformed bimodule.
In order to use tensor products in connection and metric, we first semiquantitize the tensor products in this section.

Semiquantization corresponds to the construction of a functor map from a vector bundle category to a (deformed) bimodule category in catgory theory.
In other words, semiquantization is the transfer of tensor products, connections, etc. defined on a manifold map to a deformed bimodule by a functor.

In the following, we quantize the tensor product $\otimes_0$ on a manifold and construct the tensor product $\otimes_1$ on a deformed bimodule.
We will also say that ``semiquantization'' is simply ``quantization''.

Let $(E, \nabla_E)$ and $(F, \nabla_F)$ be vector bundles and denote tensor product of $E$ and $F$ by $E \otimes_0 F$.
We denote by $Q$ the functor from the category of a vector bundle to the category of $A_{\lambda}$-bimodule.
Namely, $Q(E)$ is a $A_{\lambda}$-bimodule.
We suppress writing $Q$ since it is essentially the identity on objects.
Furthermore let $Q_{E, F}$ be $q_{E, F} : Q(E) \otimes_1 Q(F) \to Q(E \otimes_0 F)$.
We assume that $e * a \otimes_1 f = e \otimes_1 a * f$ for $e \in E$ and $f \in F$, 
and quantize the tensor product $\otimes_0$ of a vector bundle by :
\begin{equation}
  q_{E, F} (e \otimes_1 f) = e \otimes_0 f + \frac{\lambda}{2} \omega^{ij} \nabla_{Ei} e \otimes_0 \nabla_{Fj} f.   \label{eq:quantizationtensorproduct}
\end{equation}
The second term is a correction term for $q_{E, F}$ to be a bimodule map.
Also, $q_{E, F}$ in \eqref{eq:quantizationtensorproduct} satisfies
\begin{equation*}
  q_{E, F}(e * a \otimes_1 f) = q_{E, F}(e \otimes_1 a * f).
\end{equation*}
This shows that $q_{E, F}$ is well-defined.
\begin{remark}
  The $q_{E, F}$ is called a natural transformation in category theory.
\end{remark}

\subsection{Semiquantization of connection}    \label{subsec:connectionquantization}
In this section, We define a bimodule connection over a $A_{\lambda}$-bimodule.

The connection $\nabla_E$ on the vector bundle $E$ is quantized using the quantized tensor product $\otimes_1$ as follows :
\begin{equation}
  \nabla_{Q(E)} = q_{\Omega^1, E}^{-1} \nabla_E - \frac{\lambda}{2} \omega^{ij} dx^k \otimes_1 [\nabla_{Ek}, \nabla_{Ej}] \nabla_{Ei},   \label{eq:nabla_QE}
\end{equation}
where $Q(\nabla_E) := \nabla_{Q(E)}$, $\Omega^1 := \Omega^1(M)$.
The second term is a correction term for $\nabla_{Q(E)}$ to satisfy the Leibniz rule \eqref{eq:leftconnection} for a $*$-product.
Therefore, $\nabla_{Q(E)}$ is a left connection.

In addition, we construct a bimodule connection by defining generalized braiding.
A braiding $\sigma_E : E \otimes \Omega^1 \to \Omega^1 \otimes E$ is semi-quantized by
\begin{equation}
  \sigma_{Q(E)}(e \otimes_1 \xi) = \xi \otimes_1 e + \lambda \omega^{ij} \nabla_j \xi \otimes_1 \nabla_{E i}e + \lambda \omega^{ij} \xi_j dx^k \otimes_1 [\nabla_{E k}, \nabla_{E i}]e    \label{eq:braiding}
\end{equation}
for $e \in Q(E)$ and $\xi \in Q(\Omega^1) = \Omega^1(A_{\lambda})$, 
where $Q(\sigma_E) := \sigma_{Q(E)}$, $\xi = \xi_j dx^j$.
\eqref{eq:braiding} is obtained by using \eqref{eq:braidingformula}.

\begin{remark}    \label{rem:braiding}
  A braiding\;$\sigma_{Q(E)}$ is a bimodule map over $A_{\lambda}$, i.e.
  \begin{equation*}
    \sigma_{Q(E)} (a * e \otimes_1 \xi * b) = a * \sigma_{Q(E)} (e \otimes_1 \xi) * b
  \end{equation*}
  for $a, b \in A_{\lambda}$, $e \in Q(E)$, $\xi \in \Omega^1(A_{\lambda})$.
  However, $\sigma_{Q(E)}$ is not a bimodule map for the usual commutative product, i.e.
  \begin{equation*}
    \sigma_{Q(E)}(ae \otimes_1 \xi b) \neq a \sigma_{Q(E)}(e \otimes_1 \xi) b.
  \end{equation*}
  We will see this in concrete calculations in the example of the upper half-plane below.
\end{remark}

From the above, we can define a bimodule connection $\nabla_{Q(E)}$ on $A_{\lambda}$-bimodule $Q(E)$.

\subsection{Semiquantization of wedge product}    \label{subsec:wedgeproduct}
According to definition \ref{def:differentialcalculus}, the wedge product is defined over the differential calculus $\Omega^n(A)$.
In this section, we define a wedge product on differential calculus $\Omega^n(A_{\lambda})$ on the associative algebra $A_{\lambda}$ using semiquantization.

As with tensor products and connections, we try to quantize a wedge product using the functor $Q$ as follows :
\begin{equation*}
  \wedge_Q : Q(\Omega^n(M)) \otimes_1 Q(\Omega^m(M)) \to Q(\Omega^n(M) \otimes_0 \Omega^m(M)) \to Q(\Omega^{n + m}(M))
\end{equation*}
\begin{equation*}
  \xi \wedge_Q \eta = \xi \wedge \eta + \frac{\lambda}{2} \omega^{ij} \nabla_i \xi \wedge \nabla_j \eta.
\end{equation*}
In this case $\wedge_Q$ is associative, but does not satisfy condition 3 in the definition of differential calculus.
\begin{proposition}
  The exterior derivative $d$ does not satisfy the Leibniz rule for $\wedge_Q$.
  \begin{equation*}
    d(\xi \wedge_Q \eta) - (d\xi) \wedge_Q \eta - (-1)^{\abs{\xi}} \xi \wedge_Q d\eta = -\lambda H^{ji} \wedge (\partial_i \lrcorner \xi) \wedge \nabla_j \eta +\lambda (-1)^{\abs{\xi}} H^{ij} \wedge \nabla_i \xi \wedge (\partial_j \lrcorner \eta),
  \end{equation*}
  where
  \begin{equation}
    H^{ij} = \frac{1}{4} \omega^{is} (\nabla_s T^j_{nm} - 2R^j_{nms}) dx^m \wedge dx^n.   \label{eq:H^ij}
  \end{equation}
\end{proposition}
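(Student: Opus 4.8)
The plan is to compute the left-hand side $d(\xi \wedge_Q \eta) - (d\xi)\wedge_Q\eta - (-1)^{|\xi|}\xi\wedge_Q d\eta$ directly from the definition $\xi\wedge_Q\eta = \xi\wedge\eta + \tfrac{\lambda}{2}\omega^{ij}\nabla_i\xi\wedge\nabla_j\eta$, working to first order in $\lambda$ and discarding $\mathcal{O}(\lambda^2)$ as usual. The zeroth-order part cancels identically because $d$ satisfies the graded Leibniz rule for the ordinary wedge $\wedge$ (condition 3 of Definition \ref{def:differentialcalculus} in the classical calculus). So the whole computation is about the $\lambda$-order term, which after applying $d$ to the correction term $\tfrac{\lambda}{2}\omega^{ij}\nabla_i\xi\wedge\nabla_j\eta$ and subtracting the correction terms coming from $(d\xi)\wedge_Q\eta$ and $\xi\wedge_Q(d\eta)$, leaves
\begin{equation*}
  \frac{\lambda}{2}\Big( d(\omega^{ij}\nabla_i\xi\wedge\nabla_j\eta) - \omega^{ij}\nabla_i(d\xi)\wedge\nabla_j\eta - (-1)^{|\xi|+1}\omega^{ij}\nabla_i\xi\wedge\nabla_j(d\eta) \Big),
\end{equation*}
up to the sign bookkeeping from the degree shift when $\xi$ becomes $d\xi$. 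The key identity I will need is the commutator of $d$ with the background covariant derivative $\nabla_i$ acting on forms: since $\nabla$ is a connection on $\Omega^1(M)$ with Christoffel symbols $\Gamma^k_{ij}$, one has $d\nabla_i\xi - \nabla_i d\xi$ expressible through $\Gamma$, its derivatives, and hence (after antisymmetrizing and using the definitions \eqref{eq:riemanncurvatureinlocal} and $T^j_{nm}=\Gamma^j_{nm}-\Gamma^j_{mn}$) through the curvature $R^j_{nms}$ and torsion $T^j_{nm}$. This is exactly where the combination $\nabla_s T^j_{nm} - 2R^j_{nms}$ in \eqref{eq:H^ij} is going to come from.

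Concretely, I would proceed as follows. First, expand $d(\omega^{ij}\nabla_i\xi\wedge\nabla_j\eta)$ by the Leibniz rule, producing a term $(\partial_k\omega^{ij})dx^k\wedge\nabla_i\xi\wedge\nabla_j\eta$ plus $\omega^{ij}(d\nabla_i\xi)\wedge\nabla_j\eta + (-1)^{|\xi|}\omega^{ij}\nabla_i\xi\wedge(d\nabla_j\eta)$ — being careful that $\nabla_i\xi$ has the same form-degree as $\xi$. Second, in each of the last two terms replace $d\nabla_i\xi$ by $\nabla_i d\xi + [d,\nabla_i]\xi$; the $\nabla_i d\xi$ pieces are designed to cancel against the subtracted terms $-\omega^{ij}\nabla_i(d\xi)\wedge\nabla_j\eta$ and its partner, leaving only $[d,\nabla_i]\xi$ and $[d,\nabla_j]\eta$ contributions. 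Third, compute $[d,\nabla_i]$ on a form explicitly in coordinates: writing $\xi = \xi_m dx^m$, one has $\nabla_i\xi = (\partial_i\xi_m - \Gamma^l_{im}\xi_l)dx^m$, so $d\nabla_i\xi$ and $\nabla_i d\xi$ differ by terms involving $\partial_n\Gamma^l_{im}$ and products $\Gamma\Gamma$; contracting with $\omega^{ij}$ and using the Poisson-compatibility identity \eqref{eq:poissoncompatibility} to trade $\partial_n\omega^{ij}$ for $-\omega^{iq}\Gamma^j_{qn}-\omega^{qj}\Gamma^i_{qn}$ — which is the right tool to absorb the stray $(\partial_k\omega^{ij})$ term from step one — I expect everything to reorganize into the curvature-plus-torsion combination. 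Finally, I identify the interior-product notation: $\partial_i\lrcorner\xi$ is contraction of $\xi$ against $\partial_i$, and I rewrite the leftover terms in the claimed form $-\lambda H^{ji}\wedge(\partial_i\lrcorner\xi)\wedge\nabla_j\eta + \lambda(-1)^{|\xi|}H^{ij}\wedge\nabla_i\xi\wedge(\partial_j\lrcorner\eta)$ with $H^{ij}$ as in \eqref{eq:H^ij}, checking the index symmetry/antisymmetry of the two terms matches (the first has $H^{ji}$, the second $H^{ij}$, reflecting which factor gets the interior product).

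The main obstacle will be the bookkeeping in step three: correctly collecting all the $\Gamma$, $\partial\Gamma$, and $\partial\omega$ terms, applying \eqref{eq:poissoncompatibility} at the right moments, and recognizing the antisymmetrization in $dx^m\wedge dx^n$ that converts $\partial_n\Gamma^j_{sm}$-type expressions into $R^j_{nms}$. A secondary subtlety is the sign and degree-shift accounting when $\xi\mapsto d\xi$ raises the form degree by one, so that factors of $(-1)^{|\xi|}$ versus $(-1)^{|\xi|+1}$ must be tracked scrupulously; a convenient check is the special case where $\xi,\eta$ are functions (degree $0$), or where the background connection is flat and torsion-free, in which case $H^{ij}=0$ and the Leibniz rule is restored. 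I would also sanity-check against the special case $M$ being the upper half-plane treated later in the paper, where the background connection is the Levi-Civita connection of a metric of constant curvature, so $R^j_{nms}$ and $\nabla_s T^j_{nm}$ take a very explicit form.
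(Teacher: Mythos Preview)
The paper does not actually prove this proposition: it is stated without proof and is taken from Beggs--Majid \cite{name1}. So there is no ``paper's own proof'' to compare against; the proposition is simply quoted as a known result.

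Your approach is the natural direct one and is essentially the computation carried out in \cite{name1}: expand $d(\xi\wedge_Q\eta)$ using the classical Leibniz rule, isolate the $\lambda$-order defect, write it in terms of $d\omega^{ij}$ and the commutators $[d,\nabla_i]$ acting on $\xi$ and $\eta$, and then use Poisson compatibility \eqref{eq:poissoncompatibility} together with the local expressions for curvature and torsion to repackage everything as $H^{ij}$. One point to tighten: you only write $[d,\nabla_i]$ explicitly for $1$-forms, but the proposition is stated for forms of arbitrary degree, and the appearance of the interior products $\partial_i\lrcorner\xi$ and $\partial_j\lrcorner\eta$ in the answer reflects the general formula for $[d,\nabla_i]$ on $p$-forms (where the connection acts on each tensor factor and the antisymmetrization in $d$ produces the contraction). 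You should either work this out for general $p$, or reduce to the generating case $\xi,\eta\in\Omega^1$ and argue by multiplicativity. Otherwise the plan is sound.
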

For an undeformed $d$ to satisfy the Leibniz rule for $\wedge_Q$, 
we need to take a flat torsion free connection.
If a background connection $\nabla$ is a Levi-Civita connection, 
torsion free is satisfied, but in general, neither curvature nor torsion is zero.
Hence, we modify $\wedge_Q$ and define the product $\wedge_1$ :
\begin{equation*}
  \xi \wedge_1 \eta = \xi \wedge_Q \eta + \lambda (-1)^{\abs{\xi} + 1} H^{ij} \wedge (\partial_i \lrcorner \xi) \wedge (\partial_j \lrcorner \eta).
\end{equation*}
If a background connection $\nabla$ is Poisson compatible, then $d$ satisfies the Leibniz rule for $\wedge_1$.
Therefore, we can define a differential calculus $\Omega^n (A_{\lambda})$ by $(\wedge_1, d)$.
In the following we call $\wedge_1$ the quantum wedge product.
\begin{remark}
  A quantum wedge product $\wedge_1$ is a multilinear map for $*$-products, i.e.
  \begin{equation*}
    (a * \xi) \wedge_1 (\eta * b) = a * (\xi \wedge_1 \eta) * b
  \end{equation*}
  for $a, b \in A_{\lambda}$ and $\xi, \eta \in \Omega^1(A_{\lambda})$.
  However, $\wedge_1$ is not a multilinear map for usual commutative products, i.e.
  \begin{equation*}
    (a \xi) \wedge_1 (\eta b) \neq a (\xi \wedge_1 \eta) b.
  \end{equation*}
  We will see this in the example of the upper half-plane below.
\end{remark}

Now that we have defined a quantum wedge product, 
we rewrite a torsion \eqref{eq:classicaltorsion} on a bimodule using quantization.
\begin{equation}
  T_{\nabla_Q} = d - \wedge_1 \nabla_Q,   \label{eq:quantumtorsion}
\end{equation}
where $\nabla_Q$ is the quantized connection $\nabla_{Q(E)}$ with $E = \Omega^1$, also used below.
Compared to the classical case \eqref{eq:classicaltorsion}, a wedge product and a connection are rewritten in quantized form.
Explicitly,
\begin{equation*}
  T_{\nabla_Q}(\xi) = T(\xi) - \frac{\lambda}{4} (\partial_j \lrcorner \nabla_i \xi) \omega^{is} (\nabla_s T^j_{nm}) dx^m \wedge dx^n
\end{equation*}
for $\xi \in \Omega^1(A_{\lambda})$.

\subsection{Semiquantization of Riemannian geometry}
we suppose that $(M, \omega, \nabla)$($\nabla$ is Poisson compatible) has additional structure $(g, \hat{\nabla})$ where $g$ is a Riemannian metric and $\hat{\nabla}$ is the Levi-Civita connection.
In this section we focus the case $E = \Omega^1(A_{\lambda})$.
We construct the quantum Riemannian metric and the quantum Levi-Civita connection defined in section \ref{subsubsec:riemmannianstronbimodule} by quantizing a Riemannian metric and a Levi-Civita connection.
Although \cite{name1} constructs quantum Levi-Civita connections for connections with torsion, 
in this paper we restrict $\hat{\nabla}$ to Levi-Civita connections (i.e. $\nabla = \hat{\nabla}$) and construct a quantum Levi-Civita connection.

\subsubsection{Semiquantization of Riemannian metric}
From definition \ref{def:quantumriemannianmetric}, a quantum Riemannian metric satisfied invertibility and symmetry.
To satisfy these two conditions, 
we impose the following condition on the background connection $\nabla$ :
\begin{equation}
  \nabla g = 0.   \label{eq:riemannianmetricquantizationcondition}
\end{equation}
We will see the necessity of this condition at the end of this section.

We quantize the Riemannian metric $g = g_{ij} dx^i \otimes_0 dx^j$ and check if it satisfies invertibility and symmetry.
By quantizing the tensor product $\otimes_0$ to $\otimes_1$, we have
\begin{equation}
  g_Q = q_{\Omega^1, \Omega^1}^{-1} (g) = g_{ij} dx^i \otimes_1 dx^j + \frac{\lambda}{2} \omega^{ij} g_{pm} \Gamma_{iq}^p \Gamma_{jn}^q dx^m \otimes_1 dx^n.    \label{eq:prequantumriemannmetric}
\end{equation}
Then, the following proposition holds.
\begin{proposition}   \label{prop:nablag=0}
  If we have $\nabla g = 0$, then we also have $\nabla_Q g_Q = 0$.
\end{proposition}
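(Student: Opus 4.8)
The plan is to verify the vanishing of $\nabla_Q g_Q$ order by order in $\lambda$, exploiting heavily the classical hypothesis $\nabla g = 0$ and the fact that $\nabla = \hat{\nabla}$ is the Levi-Civita connection (so torsion free, hence $\omega$ covariantly constant by the local form of Poisson compatibility \eqref{eq:poissoncompatibility}). First I would write out $\nabla_Q g_Q$ using the definition \eqref{eq:tensorbimoduleconnection} of the tensor product bimodule connection on $\Omega^1(A_\lambda) \otimes_{A_\lambda} \Omega^1(A_\lambda)$, namely $\nabla_{Q(E)\otimes_1 Q(E)} = \nabla_Q \otimes_1 \mathrm{id} + (\sigma_Q \otimes_1 \mathrm{id})(\mathrm{id} \otimes_1 \nabla_Q)$, substituting the explicit quantized connection \eqref{eq:nabla_QE}, the quantized braiding \eqref{eq:braiding}, and the quantized metric \eqref{eq:prequantumriemannmetric}. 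At order $\lambda^0$ every quantized object reduces to its classical counterpart, so the $\lambda^0$ piece is exactly the classical $\nabla g = 0$, which is the hypothesis; this disposes of the leading term immediately.

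The real work is the order-$\lambda^1$ term, and this is where I expect the main obstacle. The contributions here come from four sources: the $O(\lambda)$ correction term $-\tfrac{\lambda}{2}\omega^{ij} dx^k \otimes_1 [\nabla_k,\nabla_j]\nabla_i$ in $\nabla_Q$ (appearing in both tensor slots), the $O(\lambda)$ corrections in $\sigma_Q$, the $O(\lambda)$ correction to $g_Q$ proportional to $\omega^{ij} g_{pm}\Gamma^p_{iq}\Gamma^q_{jn}$, and the $O(\lambda)$ discrepancy between $q_{\Omega^1,\Omega^1}^{-1}$ on the ambient tensor product versus the naive classical $\otimes_0$. The strategy is to collect all of these, differentiate the classical relation $\nabla g = 0$ (i.e. $\partial_i g_{jk} = g_{lk}\Gamma^l_{ij} + g_{jl}\Gamma^l_{ik}$) once more to re-express second derivatives of $g$ in terms of $\Gamma$, $\partial\Gamma$ and products of $\Gamma$'s, and then use the Levi-Civita symmetry $\Gamma^i_{jk} = \Gamma^i_{kj}$ together with $\nabla_n \omega^{ij} = 0$ to cancel terms pairwise. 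I would organize the cancellation by isolating the coefficient of each monomial $dx^m \otimes_1 dx^n \otimes_1 dx^p$ and showing it vanishes identically after symmetrizing in the appropriate index pairs; the curvature terms $[\nabla_k,\nabla_j]$ should combine with the $\Gamma\Gamma$ correction in $g_Q$ precisely because that correction was engineered (in \eqref{eq:prequantumriemannmetric}) to make $g_Q$ a genuine bimodule element, and the metric-compatibility of the background connection feeds the curvature of $\Omega^1$ against the curvature appearing in $\nabla_Q$.

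A cleaner alternative, which I would try first to avoid the brute-force index chase, is to argue functorially: since $\nabla g = 0$ classically and $g$ is parallel, the quantization functor $Q$ — being essentially the identity on objects and sending $\nabla_E \mapsto \nabla_{Q(E)}$, $\otimes_0 \mapsto \otimes_1$ compatibly with the natural transformations $q_{E,F}$ — should send the identity $\nabla_{\Omega^1\otimes_0\Omega^1}(g) = 0$ to $\nabla_{Q(\Omega^1)\otimes_1 Q(\Omega^1)}(g_Q) = 0$, provided one checks that $q_{\Omega^1\otimes_0\Omega^1,\,-}$ intertwines $Q(\nabla_{\Omega^1\otimes_0\Omega^1})$ with $\nabla_Q \otimes_1 \mathrm{id} + (\sigma_Q\otimes_1\mathrm{id})(\mathrm{id}\otimes_1\nabla_Q)$. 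That compatibility is presumably the content of how \eqref{eq:nabla_QE}, \eqref{eq:braiding} and \eqref{eq:quantizationtensorproduct} were set up, so the proposition would follow by applying $q^{-1}$ to $Q(\nabla g) = Q(0) = 0$. The hard part in either route is the same: confirming that the correction terms in $\nabla_Q$, $\sigma_Q$ and $g_Q$ conspire exactly, and this ultimately rests on $\nabla$ being torsion free and metric compatible, i.e. on using both Levi-Civita conditions plus $\nabla\omega = 0$ at the single order in $\lambda$ that survives.
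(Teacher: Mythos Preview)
The paper does not actually supply a proof of this proposition; it is stated and then followed only by the clarifying remark that the $\nabla_Q$ appearing here means $Q(\nabla_{\Omega^1 \otimes \Omega^1})$, the quantization of the classical tensor-product connection. The general result is taken from Beggs--Majid \cite{name1}, and the only place the paper verifies $\nabla_Q g_Q = 0$ by hand is the explicit upper-half-plane computation in Section~\ref{sec:upperhalfplaneexample}.

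Your second, functorial approach is exactly what the paper's clarifying remark is pointing at: once one takes $\nabla_Q := Q(\nabla_{\Omega^1 \otimes_0 \Omega^1})$ and $g_Q := q_{\Omega^1,\Omega^1}^{-1}(g)$ as definitions, the statement is that $Q$ applied to the zero section is zero, and the only content is the compatibility of $Q$ with connections on tensor products via the natural transformations $q_{E,F}$. That compatibility is precisely how $\nabla_{Q(E)}$, $\sigma_{Q(E)}$ and $q_{E,F}$ were engineered in \cite{name1}, so this route is short once one cites or re-derives that lemma. Your first approach, the order-by-order index computation, would also go through and is essentially what the paper does in the example; it is more laborious but self-contained.

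One minor over-assumption: you lean on torsion-freeness and $\nabla\omega = 0$ in your sketch, but the proposition as stated needs only $\nabla g = 0$ together with Poisson compatibility of the background connection. The functorial argument does not require the Levi-Civita hypothesis at this stage (that enters later, for $\nabla_Q g_1$ and the Ricci-form correction), so you can drop that restriction here.
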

Here $\nabla_Q$ above is $\nabla_Q := Q(\nabla_{\Omega^1 \otimes \Omega^1}$).
Although $\nabla_Q$ is defined as $\nabla_{Q(\Omega^1)}$, 
we will use the same notation below for the quantized connection $\nabla_{\Omega^1 \otimes \Omega^1}$ on the tensor product bundle.

We now consider condition 2 of quantum Riemannian metric.
For the wedge product in the definition of quantum Riemannian metric, 
we use $\wedge_1$ obtained in section \ref{subsec:wedgeproduct}.
Computing the alternating part $\wedge_1(g_Q)$ of $g_Q$, 
we obtain
\begin{align*}
  \wedge_1 (g_1) & = g_{ij} dx^i \wedge_1 dx^j + \frac{\lambda}{2} \omega^{st} g_{li} \Gamma^l_{sj} \Gamma^j_{tk} dx^i \wedge_1 dx^k \\
                 & = g_{ij} dx^i \wedge dx^j + \frac{\lambda}{2} \omega^{st} \nabla_s (g_{ij} dx^i) \wedge \nabla_t dx^j + \lambda H^{ij} g_{ij} + \frac{\lambda}{2} \omega^{st} g_{li} \Gamma^l_{sj} \Gamma^j_{tk} dx^i \wedge_1 dx^k \\
                 & = \lambda \mathcal{R} - \frac{\lambda}{2} \omega^{st} (\partial_s g_{ij} - g_{lj} \Gamma^l_{si} - g_{li} \Gamma^l_{sj}) \Gamma^j_{tk} dx^i \wedge dx^k \\
                 & = \lambda \mathcal{R}
\end{align*}
from $\nabla g = 0$, and $g_Q$ is not symmetric in general.
$\mathcal{R}$ is called a generalised Ricci form and defined by
\begin{equation}
  \mathcal{R} = H^{ij} g_{ij}.   \label{eq:generalizedricci}
\end{equation}
Explicitly,
\begin{equation*}
  \mathcal{R} = \frac{1}{2} \mathcal{R}_{nm} dx^m \wedge dx^n, \quad \mathcal{R}_{nm} = \frac{1}{2} g_{ij} \omega^{is} (\nabla_s T_{nm}^j - 2R_{nms}^j).
\end{equation*}
\newpage
\noindent
Then we define $g_1$ with a $\lambda^1$ modification to $g_Q$ :
\begin{align}
  g_1 & = g_Q - \lambda q^{-1}_{\Omega^1, \Omega^1} \mathcal{R} \nonumber \\
      & = g_Q - \frac{\lambda}{4} g_{ij} \omega^{is} (\nabla_s T_{nm}^j - 2R_{nms}^j) dx^m \otimes_1 dx^n.    \label{eq:quantumriemannmetric}
\end{align}
In this case $\wedge_1 (g_1) = 0$.

Finally, we check if $g_1$ satisfies condition 1 of quantum Riemannian metric.
Here, the condition \eqref{eq:riemannianmetricquantizationcondition} is necessary.
From $\nabla g = 0$, the following lemma holds.
\begin{lemma}
  If $\nabla g = 0$, then $g_1$ is central in the quantised bimodule, i.e. $a * g_1 = g_1 * a$ for all $a \in A_{\lambda}$.
\end{lemma}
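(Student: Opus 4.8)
The plan is to verify centrality directly at first order in $\lambda$, using that $g_1$ agrees with the classical metric $g=g_{ij}\,dx^i\otimes dx^j$ up to terms of order $\lambda$. Indeed $g_1=g_Q-\lambda\,q_{\Omega^1,\Omega^1}^{-1}\mathcal{R}$ and $g_Q=g_{ij}\,dx^i\otimes_1 dx^j+\mathcal{O}(\lambda)$ by \eqref{eq:prequantumriemannmetric}, so $g_1$ differs from the class of the classical metric only by an $\mathcal{O}(\lambda)$ term. Since the $*$-commutator $[a,\cdot]_*$ raises the order in $\lambda$ by one, any $\mathcal{O}(\lambda)$ summand of $g_1$ contributes only at $\mathcal{O}(\lambda^2)$, which is discarded; hence it suffices to prove that the class of the classical metric is central. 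Moreover $q_{\Omega^1,\Omega^1}$ is a bimodule isomorphism, so centrality of $g_1$ in $\Omega^1(A_\lambda)\otimes_{A_\lambda}\Omega^1(A_\lambda)$ is equivalent to centrality of $g$ in $Q(\Omega^1\otimes_0\Omega^1)$, and I would carry out the computation in that model, whose bimodule structure is the one attached to the quantised tensor-product connection $Q(\nabla_{\Omega^1\otimes\Omega^1})$.

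The core step is the general formula for the $*$-commutator of a function with an element $m$ of a bundle $E$ equipped with a connection $\nabla_E$, namely $a*m-m*a=\lambda\,\nabla_{\hat a}m=\lambda\,\omega^{kl}a_{,k}\,\nabla_{El}\,m$, established in Section \ref{subsec:functionandform} (and used there with $E=\Omega^1$). Applying it with $E=\Omega^1\otimes_0\Omega^1$ carrying the induced tensor-product connection \eqref{eq:tensorvecbdlconnection} and with $m=g$ gives
\begin{equation*}
  [a,g]_* \;=\; \lambda\,\omega^{kl}\,a_{,k}\,(\nabla_E)_l\,g \;+\; \mathcal{O}(\lambda^2)\,.
\end{equation*}
To leading order the quantised connection $Q(\nabla_{\Omega^1\otimes\Omega^1})$ coincides with the classical tensor-product connection, whose action on the $2$-tensor $g$ is the ordinary covariant derivative of the metric; this vanishes by the standing hypothesis $\nabla g=0$. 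Hence $[a,g]_*=0$ modulo $\mathcal{O}(\lambda^2)$, i.e. $g$ is central in $Q(\Omega^1\otimes_0\Omega^1)$, and therefore $g_Q=q_{\Omega^1,\Omega^1}^{-1}(g)$ is central in $\Omega^1(A_\lambda)\otimes_{A_\lambda}\Omega^1(A_\lambda)$. Finally, since $g_1-g_Q=-\lambda\,q_{\Omega^1,\Omega^1}^{-1}\mathcal{R}$ is of order $\lambda$, one has $[a,g_1-g_Q]_*=\mathcal{O}(\lambda^2)=0$, so $a*g_1=g_1*a$ for all $a\in A_\lambda$.

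I expect the main obstacle to be bookkeeping rather than substance: keeping straight which tensor product ($\otimes_0$ versus $\otimes_1$) and which connection (the background $\nabla$, the quantised $\nabla_Q$, or the quantised tensor-product connection $Q(\nabla_{\Omega^1\otimes\Omega^1})$) is acting at each stage, and checking carefully that $q_{\Omega^1,\Omega^1}$ is a bimodule map so that centrality genuinely transports between the two models of the bimodule; the condition $\nabla g=0$ is precisely what is needed to annihilate the leading commutator. Should the direct route prove awkward, an alternative is to combine the identity $\nabla_Q g_Q=0$ of Proposition \ref{prop:nablag=0} with the braiding formula \eqref{eq:braidingformula}, which relates $[a,\nabla m]$ and $\nabla[m,a]$ to $\sigma$, and extract the vanishing of $[a,g_1]_*$ from there; but the first-order reduction above should already make the claim immediate.
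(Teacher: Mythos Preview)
Your proof is correct and follows essentially the same route as the paper: both arguments discard the $\mathcal{O}(\lambda)$ summands of $g_1$ (since the $*$-commutator raises the $\lambda$-order by one), reduce to checking centrality of the classical metric, and then invoke the general commutator formula $[a,g]_*=\lambda\,\omega^{ij}a_{,i}\nabla_j g$ together with $\nabla g=0$. The only cosmetic difference is that you work in the $\otimes_0$ model and transport via the bimodule map $q_{\Omega^1,\Omega^1}$, whereas the paper computes directly with $g=g_{ij}\,dx^i\otimes_1 dx^j$ in the $\otimes_1$ model.
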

\begin{proof}
  In this proof we assume $g = g_{ij} dx^i \otimes_1 dx^j$.
  A $*$-product with the term $\lambda^1$ reduces to a commutative product $\cdot$, because we ignore more than $\lambda^2$ terms.
  Therefore, we have
  \begin{align*}
    a * (g_Q -\lambda q^{-1}_{\Omega^1, \Omega^1} \mathcal{R}) & = a * \qty(g + \frac{\lambda}{2} \omega^{ij} g_{pm} \Gamma^p_{iq} \Gamma^q_{jn} dx^m \otimes_1 dx^n - \lambda q^{-1}_{\Omega^1, \Omega^1} \mathcal{R}) \\
                                                               & = a * g + a \cdot \qty(\frac{\lambda}{2} \omega^{ij} g_{pm} \Gamma^p_{iq} \Gamma^q_{jn} dx^m \otimes_1 dx^n -\lambda q^{-1}_{\Omega^1, \Omega^1} \mathcal{R}) \\
                                                               & = a * g + \qty(\frac{\lambda}{2} \omega^{ij} g_{pm} \Gamma^p_{iq} \Gamma^q_{jn} dx^m \otimes_1 dx^n -\lambda q^{-1}_{\Omega^1, \Omega^1} \mathcal{R}) \cdot a \\
                                                               & = a * g + \qty(\frac{\lambda}{2} \omega^{ij} g_{pm} \Gamma^p_{iq} \Gamma^q_{jn} dx^m \otimes_1 dx^n -\lambda q^{-1}_{\Omega^1, \Omega^1} \mathcal{R}) * a.
  \end{align*}
  Thus, we need to check that $a * g = g * a$.
  From \eqref{eq:commutationrelation}, the commutator for a $*$-product of $a \in A_{\lambda}$ and $g \in \Omega^1(A_{\lambda}) \otimes_1 \Omega^1(A_{\lambda})$ is
  \begin{equation*}
    [a, g]_* = \lambda \omega^{ij} a_{,i} \nabla_j g.
  \end{equation*}
  If $\nabla g = 0$, then $[a, g]_* = 0$, hence $a * g = g * a$.
\end{proof}

Since $g_1$ is central, $g_1$ is invertible from the proposition \ref{prop:centralandinvertible}.

We construct a quantum Riemannian metric $g_1$ that satisfies the two conditions of quantum Riemannian metric.
From the above discussion, it can be seen that the condition $\nabla g = 0$ is relevant in both conditions 1 and 2 when constructing $g_1$.
Therefore, $\nabla g = 0$ is a necessary condition for constructing the quantum Riemannian metric $g_1$.

\subsubsection{Condition \texorpdfstring{$\nabla_Q$}{} is a quantum Levi-Civita connection}    \label{subsubsec:quantumlevicivitacondtion}
In section \ref{subsec:connectionquantization}, we quantized the connection $\nabla_E$ and constructed $\nabla_{Q(E)}$.
In this section, we investigate the conditions that the quantized connection $\nabla_Q$ is a quantum Levi-Civita connection.

If a background connection $\nabla$ is a Levi-Civita connection (i.e. $\nabla = \hat{\nabla}$), the following proposition holds.
\begin{proposition}   \label{prop:quantumlevicivita}
  If a background connection $\nabla$ is a Levi-Civita connection, then the following holds.
  \begin{enumerate}
    \item Poisson compatibility reduces to $\omega$ covariantly constant.
    \item $\nabla_Q$ is torsion free and $\mathcal{R} = -\frac{1}{2} g_{ij} \omega^{is} R^j_{nms} dx^m \wedge dx^n$ is closed.
    \item $\nabla_Q g_1 = 0$, i.e. $\nabla_Q$ is a quantum Levi-Civita connection for $g_1$, if and only if $\nabla \mathcal{R} = 0$.
  \end{enumerate}
\end{proposition}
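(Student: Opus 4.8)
The plan is to dispatch the three items in turn, using the local formulas assembled in Sections~\ref{subsec:functionandform}--\ref{subsec:wedgeproduct} together with the classical identities of Riemannian geometry; only the closedness assertion in item~2 requires real work. For item~1 I would simply specialise the local Poisson compatibility condition: equation~\eqref{eq:poissoncompatibility} was rewritten as $\nabla_n \omega^{ij} + \omega^{iq} T^j_{qn} + \omega^{qj} T^i_{qn} = 0$, and since a Levi-Civita connection is torsion free, $T^j_{qn} = 0$, so the condition collapses to $\nabla_n \omega^{ij} = 0$, i.e.\ $\omega$ is covariantly constant.

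For item~2, torsion-freeness of $\nabla_Q$ is immediate from the explicit expression $T_{\nabla_Q}(\xi) = T(\xi) - \frac{\lambda}{4}(\partial_j \lrcorner \nabla_i \xi)\,\omega^{is}(\nabla_s T^j_{nm})\,dx^m \wedge dx^n$ of Section~\ref{subsec:wedgeproduct}: with $\nabla = \hat{\nabla}$ both $T$ and $T^j_{nm}$ vanish, hence also $\nabla_s T^j_{nm}$, so $T_{\nabla_Q} = 0$. Substituting $T^j_{nm} = 0$ into $\mathcal{R}_{nm} = \frac{1}{2} g_{ij}\omega^{is}(\nabla_s T^j_{nm} - 2 R^j_{nms})$ likewise gives $\mathcal{R}_{nm} = -g_{ij}\omega^{is} R^j_{nms}$, the asserted form.

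It remains to show that $\mathcal{R}$ is closed. I would compute $d\mathcal{R}$ directly: since $\nabla$ is torsion free, the exterior derivative of a $2$-form equals the cyclically antisymmetrised covariant derivative of its components, so using item~1 ($\nabla\omega = 0$) and $\nabla g = 0$ every derivative lands on $R^j_{nms}$, and the resulting cyclic antisymmetrisation over the three form indices is annihilated by the second Bianchi identity together with the antisymmetry of $R$ in its curvature pair. This index computation, performed carefully in the paper's sign and index conventions for $R^i_{jkl}$, is the one genuinely delicate step of the proof.

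For item~3 I would write $g_1 = g_Q - \lambda\, q^{-1}_{\Omega^1, \Omega^1}\mathcal{R}$ and apply $\nabla_Q$. By Proposition~\ref{prop:nablag=0} (whose hypothesis $\nabla g = 0$ holds because $\nabla$ is Levi-Civita) we have $\nabla_Q g_Q = 0$, so $\nabla_Q g_1 = -\lambda\, \nabla_Q \big(q^{-1}_{\Omega^1, \Omega^1}\mathcal{R}\big)$. Since $q^{-1}_{\Omega^1, \Omega^1}\mathcal{R}$ is already of order $\lambda$, the order-$\lambda$ corrections in both $q^{-1}_{\Omega^1, \Omega^1}$ and $\nabla_Q$ contribute only at $\mathcal{O}(\lambda^2)$ and are dropped; hence $\nabla_Q g_1 = -\lambda\, \nabla\mathcal{R} + \mathcal{O}(\lambda^2)$, where $\nabla$ is now the classical Levi-Civita connection on $\Omega^1 \otimes \Omega^1$ and $\mathcal{R}$ is viewed there via $q^{-1}_{\Omega^1, \Omega^1}$. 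Therefore $\nabla_Q g_1 = 0$ if and only if $\nabla\mathcal{R} = 0$, and combining this with the torsion-freeness established in item~2 shows that $\nabla_Q$ is a quantum Levi-Civita connection for $g_1$ exactly when $\nabla\mathcal{R} = 0$.
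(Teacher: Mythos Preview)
The paper does not actually supply a proof of this proposition: it is stated in Section~\ref{subsubsec:quantumlevicivitacondtion} without argument, as a result imported from the Beggs--Majid reference~\cite{name1}, and is then used as a black box in the upper half-plane computations. So there is no ``paper's own proof'' to compare against.

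That said, your sketch is correct and is exactly the argument one would reconstruct from the surrounding material. Item~1 is the observation already made at the end of Section~\ref{subsec:functionandform}; item~2 (torsion-freeness) is immediate from the explicit $T_{\nabla_Q}$ formula in Section~\ref{subsec:wedgeproduct}; and your derivation of item~3 from Proposition~\ref{prop:nablag=0} together with the $\mathcal{O}(\lambda^2)$ truncation is precisely the computation the paper carries out in the concrete case (see the passage ``Since $\nabla_Q g_Q = 0$, we have $q^2(\nabla_Q g_1) = -\lambda\nabla\mathcal{R}$'' in the upper half-plane example). The only part not echoed anywhere in the paper is the closedness of $\mathcal{R}$, for which your strategy---replace $d$ by the antisymmetrised covariant derivative (torsion free), pass $\nabla$ through $g$ and $\omega$ using $\nabla g = 0$ and item~1, then invoke the second Bianchi identity---is the standard and correct one; you are right that the index bookkeeping in the paper's convention $R^i{}_{jkl}$ (antisymmetric in $k,l$) is the one place requiring genuine care, since the antisymmetrisation in $d\mathcal{R}$ is over the form indices $p,n,m$ while Bianchi cycles $p,m,s$, so one also needs the pair-exchange symmetry $R_{inms} = R_{msin}$ of the lowered tensor to close the argument.
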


\begin{remark}
  If the background connection is not a Levi-Civita connection (i.e. $\nabla \neq \hat{\nabla}$), 
  the Levi-Civita connection is induced by adding the contorsion tensor $S$ to $\nabla$, i.e. $\hat{\nabla} = \nabla + S$.
\end{remark}

From the above, we can obtain the condition that $\nabla_Q$ is a quantum Levi-Civita connection.

Finally, we define a weak quantum Levi-Civita connection using semiquantization.
If we take the background connection to be a Levi-Civita connection, 
the quantum torsion is zero from the proposition \ref{prop:quantumlevicivita}.
Therefore, \eqref{eq:cotorsion_2} is written as
\begin{equation}
  \mathrm{co}T_{\nabla_Q} = (\wedge \otimes_0 id) q^2 \nabla_Q g_1,   \label{eq:cotorsion_3}
\end{equation}
where $q^2 := q_{\Omega^1, \Omega^1 \otimes_0 \Omega^1}(id \otimes_1 q_{\Omega^1, \Omega^1})$.

\section{Example : noncommutative upper half-plane}    \label{sec:upperhalfplaneexample}
In this section we semiquantize the upper half-plane $M = \{ (x,y) \in \mathbb{R}^2 \;|\; y > 0 \}$.
As a result, something happens that is not seen in other examples.
In the latter of this section, 
we will examine the conditions that it occurs, 
and then we will semiquantize the upper half-plane by the generalized Poincar\'{e} metric.

\subsection{Classical data}    \label{subsec:classicalupperhalfplane}
We give the Riemannian metric in the upper half-plane by
\begin{equation}
  g = y^{-2} (dx \otimes dx + c^2 dy \otimes dy) \quad (c > 0).    \label{eq:riemannianmetricofupper}
\end{equation}
The case $c^2 = 1$ corresponds to the example in section 6.1 of \cite{name1}.
The case $c^2 = 2$ corresponds to the Fisher metric in the space of Gaussian distributions in information geometry.

The Christoffel symbols of the Levi-Civita connection for the Riemannian metric \eqref{eq:riemannianmetricofupper} are given by
\begin{align*}
  & \Gamma^1_{11} = \Gamma^1_{22} = 0, \quad \Gamma^1_{12} = \Gamma^1_{21} = -y^{-1} \\
  & \Gamma^2_{12} = \Gamma^2_{21} = 0, \quad \Gamma^2_{11} = c^{-2} y^{-1}, \quad \Gamma^2_{22} = -y^{-1}
\end{align*}
using \eqref{eq:christoffelsymbol}, where $x^1 = x$, $x^2 = y$.

Therefore, the covariant derivative of the 1-form $dx, dy$ is written as
\begin{align*}
  \nabla dx & = y^{-1} (dx \otimes dy + dy \otimes dx) \\
  \nabla dy & = -y^{-1} (c^{-2} dx \otimes dx - dy \otimes dy)
\end{align*}
from \eqref{eq:dx^icovariantlyderivative}.

Finding $\omega^{ij}$ that satisfies \eqref{eq:poissoncompatibility}
 for the Levi-Civita connection to be Poisson compatible, 
we have $\omega^{12} = y^2 (= -\omega^{21})$.
\newpage
\noindent
Also, from the Christoffel symbols obtained above, 
the nonzero components of the Riemann curvature tensor are
\begin{align*}
  R^1_{221} & = y^{-2}, \\
  R^1_{212} & = -R^1_{221}, \\
  R^2_{112} & = \frac{1}{c^2} y^{-2}, \\
  R^2_{121} & = -R^2_{112},
\end{align*}
from \eqref{eq:riemanncurvatureinlocal}.
Thus, we obtain a scalar curvature
\begin{equation*}
  R = g^{11} R^2_{121} + g^{22} R^1_{212} = -\frac{2}{c^2}.
\end{equation*}
Because of the negative constant scalar curvature, the upper half-plane is a negative constant curvature space.

Finally we compute the generalized Ricci form.
To compute the generalized Ricci form we need to compute the 2-forms $H^{ij}$.
In this paper we consider the case that the background connection is a Levi-Civita connection, 
thus, $H^{ij}$ is rewritten as
\begin{equation}
  H^{ij} = -\frac{1}{2} \omega^{is} R^j_{nms} dx^m \wedge dx^n    \label{eq:RiemannH^ij}
\end{equation}
from \eqref{eq:H^ij}.
Computing for $(i, j) = (1, 1)$, $(1, 2)$, $(2, 1)$ and $(2, 2)$, we obtain the following.
\begin{align*}
  H^{11} & = -\frac{1}{2} \omega^{12} R^1_{212} dx \wedge dy = -\frac{1}{2} y^2 (-y^{-2}) dx \wedge dy = \frac{1}{2} dx \wedge dy, \\
  H^{22} & = -\frac{1}{2} \omega^{21} R^2_{121} dy \wedge dx = -\frac{1}{2} y^2 \qty(-\frac{1}{c^2}y^{-2}) dx \wedge dy = \frac{1}{2c^2} dx \wedge dy, \\
  H^{12} & = H^{21} = 0.
\end{align*}
Therefore, the generalized Ricci form is
\begin{equation*}
  \mathcal{R} = H^{11}g_{11} + H^{22}g_{22} = y^{-2} dx \wedge dy
\end{equation*}
from \eqref{eq:generalizedricci}.

\subsection{Quantum data}
We calculate the quantum data using the classical data obtained in the previous section.
In the following, we restrict the vector bundle $E$ to the cotangent bundle $T^*M$ in section \ref{sec:semiclassical}.
We denote the connection $\nabla_E$ by $\nabla$ and $\nabla_{Ei}$ which is the covariant derivative along the vector field $\partial_i$ by $\nabla_i$ ($i = 1, 2$ since the upper half-plane is two-dimensional).
We also denote the quantized connection $\nabla_{Q(E)}$ by $\nabla_Q$ and the generalized braiding $\sigma_{Q(E)}$ by $\sigma_Q$.

\subsubsection{\texorpdfstring{$*$}{}-product}
First, the $*$-product for the functions $f, g$ is
\begin{equation*}
  f * g = fg + \frac{\lambda}{2} y^2 \qty(\pdv{f}{x} \pdv{g}{y} - \pdv{f}{y} \pdv{g}{x}).
\end{equation*}
For coordinate functions $x, y$,
\begin{alignat*}{2}
  & x * x = x^2,                        & & \quad y * y = y^2, \\
  & x * y = xy + \frac{\lambda}{2} y^2, & & \quad y * x = xy - \frac{\lambda}{2} y^2.
\end{alignat*}

Next, the $*$-product of the function $f$ and the differential form $dx, dy$ is
\begin{align*}
  f * dx & = fdx + \frac{\lambda}{2} y \qty(\pdv{f}{x} dx - \pdv{f}{y} dy), \\
  f * dy & = fdy + \frac{\lambda}{2} y \qty(\pdv{f}{x} dy + \frac{1}{c^2} \pdv{f}{y} dx).
\end{align*}
For coordinate functions $x, y$
\begin{align*}
  x * dx = xdx + \frac{\lambda}{2} ydx, & \quad y * dx = ydx - \frac{\lambda}{2} ydy, \\
  x * dy = xdy + \frac{\lambda}{2} ydy, & \quad y * dy = ydy + \frac{\lambda}{2c^2} ydx.
\end{align*}
Therefore, commutation relation of $x, y$ and $dx, dy$ is
\begin{align*}
  [x, dx]_* = \lambda y dx = c^2 [y, dy]_*, \\
  [x, dy]_* = \lambda y dy = -[y, dx]_*.
\end{align*}

\subsubsection{Quantum wedge product \texorpdfstring{$\wedge_1$}{}}
Using the 2-forms $H^{ij}$, the wedge product $dx \wedge_1 dx$ is
\begin{align}
  dx \wedge_1 dx & = dx \wedge_Q dx + \lambda H^{11} \nonumber \\
                 & = dx \wedge dx + \frac{\lambda}{2} \omega^{12} (\nabla_1 dx \wedge \nabla_2 dx - \nabla_2 dx \wedge \nabla_1 dx) + \frac{\lambda}{2} dx \wedge dy \nonumber \\
                 & = \frac{\lambda}{2} y^2 (y^{-1} dy \wedge y^{-1} dx - y^{-1} dx \wedge y^{-1} dy) + \frac{\lambda}{2} dx \wedge dy \nonumber \\
                 & = - \frac{\lambda}{2} dx \wedge dy.   \label{eq:dxwedge_1dx}
\end{align}
Similarly, $dx \wedge_1 dy$, $dy \wedge_1 dx$, $dy \wedge_1 dy$ are
\begin{align*}
  dx \wedge_1 dy & = dx \wedge dy, \\
  dy \wedge_1 dx & = dy \wedge dx, \\
  dy \wedge_1 dy & = - \frac{\lambda}{2c^2} dx \wedge dy.
\end{align*}
The results for $dx \wedge_1 dy$ and $dy \wedge_1 dx$ are equivalent to the classical case, respectively.
On the other hand, $dx \wedge_1 dx$ and $dy \wedge_1 dy$ are different from the classical case, respectively.
In classical case, the wedge product has graded commutativity ($\xi \wedge \eta = -\eta \wedge \xi$), so $dx \wedge dx$ and $dy \wedge dy$ are zero.
However, we do not assume graded commutativity for $\wedge_1$ (see Remark \ref{rem:ordercommutatibity}), so there is no problem that the result of $\wedge_1$ product of the same is not zero.

Also, for $\lambda \to 0$, $dx \wedge_1 dx$ and $dy \wedge_1 dy$ are zero, equal to the classical case.

\subsubsection{Quantum Riemannian metric \texorpdfstring{$g_1$}{}}
From \eqref{eq:quantumriemannmetric}, the quantum Riemannian metric can be rewritten as
\begin{equation*}
  g_1 = g_Q + \frac{\lambda}{2} g_{ij} \omega^{is} R_{nms}^j dx^m \otimes_1 dx^n
\end{equation*}
because the torsion tensor is zero since the background connection is a Levi-Civita connection.
Here $g_Q$ is written as
\begin{equation*}
  g_Q = g_{ij} dx^i \otimes_1 dx^j + \frac{\lambda}{2} \omega^{ij} g_{pm} \Gamma_{iq}^p \Gamma_{jn}^q dx^m \otimes_1 dx^n
\end{equation*}
from \eqref{eq:prequantumriemannmetric}.
First, we compute $g_Q$.
The second term $\omega^{ij} g_{ma} \Gamma^a_{ib} \Gamma^b_{jn}$ of $g_Q$ is
\begin{align*}
  \omega^{ij} g_{ma} \Gamma^a_{ib} \Gamma^b_{jn} & = y^2 g_{ma} \Gamma^a_{1b} \Gamma^b_{2n} - y^2 g_{ma} \Gamma^a_{2b} \Gamma^b_{1n} \\
                                                 & = y^2 g_{ma} \{ \Gamma^a_{1b} (-\delta^b_n y^{-1}) - (-\delta^a_b y^{-1}) \Gamma^b_{1n} \} \\
                                                 & = -y g_{ma} (\Gamma^a_{1b} \delta^b_n - \delta^a_b \Gamma^b_{1n}) \\
                                                 & = -y g_{ma} (\Gamma^m_{1b} \delta^b_n - \delta^m_b \Gamma^b_{1n}).
\end{align*}
When we compute in the cases $m = n$ and $m \neq n$, we have
\begin{align*}
  & m = n \ \text{case} \qquad \Gamma^m_{1m} - \Gamma^m_{1m} = 0, \\
  & m \neq n \ \text{case} \qquad \Gamma^m_{1n} - \Gamma^m_{1n} = 0.
\end{align*}
Thus, the $\lambda^1$ terms of $g_Q$ are zero.

We compute the alternating part of $g_Q$ and check that it is $\lambda \mathcal{R}$.
Acting $\wedge_1$ on $g_Q$ yields
\begin{equation*}
  \wedge_1 (g_Q) = \wedge_1 (y^{-2} (dx \otimes_1 dx + dy \otimes_1 dy)),
\end{equation*}
and computing $\wedge_1 (y^{-2} (dx \otimes_1 dx))$ yields
\begin{align}
  \wedge_1 (y^{-2} (dx \otimes_1 dx)) & = (y^{-2} dx) \wedge_1 dx \nonumber \\
                                      & = \frac{\lambda}{2} y^2 (\nabla_1 (y^{-2} dx) \wedge \nabla_2 dx - \nabla_2 (y^{-2} dx) \wedge \nabla_1 dx) + \frac{\lambda}{2} y^{-2} dx \wedge dy \nonumber \\
                                      & \nonumber \\
                                      & = \frac{\lambda}{2} y^2 \{ y^{-2} (y^{-1}dy \wedge y^{-1} dx) - (-2y^{-3}dx + y^{-2} y^{-1}dx) \wedge y^{-1}dy \} \nonumber \\
                                      & \quad + \frac{\lambda}{2} y^{-2} dx \wedge dy \nonumber \\
                                      & = \frac{\lambda}{2} y^{-2} dx \wedge dy \nonumber \\
                                      & = \frac{\lambda}{2} \mathcal{R}.   \label{eq:y-1dxwedge_1dx}
\end{align}
Similarly, $\wedge_1(y^{-2} (dy \otimes_1 dy)) = \frac{\lambda}{2} \mathcal{R}$.
Thus, we can confirm that the alternating part of $g_Q$ can be written in $\lambda \mathcal{R}$.

\begin{remark}
  As we mentioned in section \ref{subsec:wedgeproduct}, $\wedge_1$ is not a multilinear map for usual commutative products.
  In the upper half-plane example, they do not coincide, \eqref{eq:dxwedge_1dx} multiplied by $y^{-2}$ and \eqref{eq:y-1dxwedge_1dx}.
\end{remark}

From the above, the quantum Riemannian metric $g_1$ is
\begin{equation*}
  g_1 = y^{-2} (dx \otimes_1 dx + c^2 dy \otimes_1 dy) - \lambda y^{-2} (dx \otimes_1 dy - dy \otimes_1 dx)
\end{equation*}
from \eqref{eq:quantumriemannmetric}.

Unlike the Riemannian metric in the classical case, the terms $dx \otimes_1 dy$ and $dy \otimes_1 dx$ appear.
Also, in $\lambda \to 0$, $g_1$ is reduced to the classical form \eqref{eq:riemannianmetricofupper}.

\subsubsection{Quantised connection \texorpdfstring{$\nabla_Q$}{}}
In the general case, using \eqref{eq:nabla_QE} we compute $\nabla_Q dx^i$.
\begin{align}
  \nabla_Q dx^i & = \qty(q^{-1}_{\Omega^1, \Omega^1} \nabla_{\Omega^1} - \frac{\lambda}{2} \omega^{sj} dx^m \otimes_1 [\nabla_m, \nabla_j] \nabla_s) dx^i \nonumber \\
                & = q^{-1}_{\Omega^1, \Omega^1}(-\Gamma^i_{mn} dx^m \otimes_0 dx^n) + \frac{\lambda}{2} \omega^{sj} dx^m \otimes_1 [\nabla_m, \nabla_j] \Gamma^i_{sk} dx^k \nonumber \\
                & = -\Gamma^i_{mn} dx^m \otimes_1 dx^n + \frac{\lambda}{2} \omega^{sj} \nabla_s (\Gamma^i_{kt} dx^k) \otimes_1 \nabla_j dx^t \nonumber \\
                & \quad + \frac{\lambda}{2} \omega^{sj} dx^m \otimes_1 \Gamma^i_{sk} [\nabla_m, \nabla_j] dx^k \nonumber \\
                & = -\Gamma^i_{mn} dx^m \otimes_1 dx^n - \frac{\lambda}{2} \omega^{sj} (\partial_s \Gamma^i_{kt} dx^k - \Gamma^i_{kt} \Gamma^k_{sm} dx^m) \otimes_1 \Gamma^t_{jn} dx^n \nonumber \\
                & \quad + \frac{\lambda}{2} \omega^{sj} \Gamma^i_{sk} R^k_{nmj} dx^m \otimes_1 dx^n \nonumber \\
                & = -\Gamma^i_{mn} dx^m \otimes_1 dx^n - \frac{\lambda}{2} \omega^{sj} (\partial_s \Gamma^i_{mk} \Gamma^k_{jn} - \Gamma^i_{kt} \Gamma^k_{sm} \Gamma^t_{jn}) dx^m \otimes_1 dx^n \nonumber \\
                & \quad + \frac{\lambda}{2} \omega^{sj} \Gamma^i_{sk} R^k_{nmj} dx^m \otimes_1 dx^n \nonumber \\
                & = - \qty(\Gamma^i_{mn} + \frac{\lambda}{2} \omega^{sj} (\partial_s \Gamma^i_{mk} \Gamma^k_{jn} - \Gamma^i_{kt} \Gamma^k_{sm} \Gamma^t_{jn} - \Gamma^i_{sk} R^k_{nmj})) dx^m \otimes_1 dx^n.    \label{eq:nabla_Qdx}
\end{align}

\begin{remark}
  Compared to \eqref{eq:nabla_Qdx}, we think the last term $\Gamma^i_{jk} R^k_{nms}$ in (5,2) of \cite{name1} is a typo.
\end{remark}

Next, in the upper half-plane case we compute the terms $\lambda^1$ for $(m, n) = (1, 1)$, $(1, 2)$, $(2, 1)$, $(2, 2)$.
First, for $\nabla_Q dx$
\begin{subequations}
  \begin{align*}
    \intertext{$(m, n) = (1, 1)$ case}
    & \frac{\lambda}{2} \omega^{12} (0 - \Gamma^1_{21} \Gamma^2_{11} \Gamma^1_{21} - \Gamma^1_{12} R^2_{112}) + \frac{\lambda}{2} \omega^{21} (\Gamma^1_{12, 2} \Gamma^2_{11} - \Gamma^1_{12} \Gamma^1_{21} \Gamma^2_{11} - \underbrace{\Gamma^1_{2k} R^k_{111}}_{= 0}) \\
    & = \frac{\lambda}{2} y^2 \qty(-\Gamma^1_{21} \Gamma^2_{11} \Gamma^1_{21} + \Gamma^1_{12} \Gamma^1_{21} \Gamma^2_{11} + \frac{1}{c^2} y^{-1} y^{-2} - \frac{1}{c^2} y^{-2} y^{-1}) \\
    & = 0 \\
    \intertext{$(m, n) = (1, 2)$ case}
    & \frac{\lambda}{2} \omega^{12} (0 - \underbrace{\Gamma^1_{kt} \Gamma^k_{11} \Gamma^t_{22}}_{= 0} - \underbrace{\Gamma^1_{1k} R^k_{212}}_{= 0}) + \frac{\lambda}{2} \omega^{21} (\underbrace{\Gamma^1_{1k, 2} \Gamma^k_{12}}_{= 0} - \underbrace{\Gamma^1_{kt} \Gamma^k_{21} \Gamma^t_{12}}_{= 0} - \underbrace{\Gamma^1_{2k} R^k_{211}}_{= 0}) = 0 \\
    \intertext{$(m, n) = (2, 1)$ case}
    & \frac{\lambda}{2} \omega^{12} (0 - \underbrace{\Gamma^1_{kt} \Gamma^k_{12} \Gamma^t_{21}}_{= 0} - \underbrace{\Gamma^1_{1k} R^k_{122}}_{= 0}) + \frac{\lambda}{2} \omega^{21} (\underbrace{\Gamma^1_{2k, 2} \Gamma^k_{11}}_{= 0} - \underbrace{\Gamma^1_{kt} \Gamma^k_{22} \Gamma^t_{11}}_{= 0} - \underbrace{\Gamma^1_{2k} R^k_{121}}_{= 0}) = 0 \\
    \intertext{$(m, n) = (2, 2)$ case}
    & \frac{\lambda}{2} \omega^{12} (0 - \Gamma^1_{12} \Gamma^1_{12} \Gamma^2_{22} - \underbrace{\Gamma^1_{1k} R^k_{222}}_{= 0}) + \frac{\lambda}{2} \omega^{21} (\Gamma^1_{21, 2} \Gamma^1_{12} - \Gamma^1_{21} \Gamma^2_{22} \Gamma^1_{12} - \Gamma^1_{21} R^1_{221}) \\
    & = \frac{\lambda}{2} y^2 (-\Gamma^1_{12} \Gamma^1_{12} \Gamma^2_{22} + \Gamma^1_{21} \Gamma^2_{22} \Gamma^1_{12} + y^{-2} y^{-1} - y^{-1} y^{-2}) \\
    & = 0
  \end{align*}
\end{subequations}
Next, for $\nabla_Q dy$
\begin{subequations}
  \begin{align*}
    \intertext{$(m, n) = (1, 1)$ case}
    & \frac{\lambda}{2} \omega^{12} (0 - \underbrace{\Gamma^2_{21} \Gamma^2_{11} \Gamma^1_{21}}_{= 0} - \underbrace{\Gamma^1_{2k} R^k_{112}}_{= 0}) + \frac{\lambda}{2} \omega^{21} (\underbrace{\Gamma^2_{11, 2} \Gamma^1_{11}}_{= 0} - \underbrace{\Gamma^2_{12} \Gamma^1_{21} \Gamma^2_{11}}_{= 0} - \underbrace{\Gamma^2_{2k} R^k_{111}}_{= 0}) = 0 \\
    \intertext{$(m, n) = (1, 2)$ case}
    & \frac{\lambda}{2} \omega^{12} (0 - \Gamma^2_{22} \Gamma^2_{11} \Gamma^2_{22} - \Gamma^2_{11} R^1_{212}) + \frac{\lambda}{2} \omega^{21} (\Gamma^2_{11, 2} \Gamma^1_{12} - \Gamma^2_{11} \Gamma^1_{21} \Gamma^1_{12} - \underbrace{\Gamma^2_{2k} R^k_{211}}_{= 0}) \\
    & = \frac{\lambda}{2} y^2 \qty(-\Gamma^2_{22} \Gamma^2_{11} \Gamma^2_{22} + \Gamma^2_{11} \Gamma^1_{21} \Gamma^1_{12} + \frac{1}{c^2} y^{-1} y^{-2} - \frac{1}{c^2} y^{-2} y^{-1}) \\
    & = 0 \\
    \intertext{$(m, n) = (2, 1)$ case}
    & \frac{\lambda}{2} \omega^{12} (0 - \Gamma^2_{11} \Gamma^1_{12} \Gamma^1_{21} - \underbrace{\Gamma^2_{1k} R^k_{122}}_{= 0}) + \frac{\lambda}{2} \omega^{21} (\Gamma^2_{22, 2} \Gamma^2_{11} - \Gamma^2_{22} \Gamma^2_{22} \Gamma^2_{11} - \Gamma^2_{22} R^2_{121}) \\
    & = \frac{\lambda}{2} y^2 \qty(-\Gamma^2_{11} \Gamma^1_{12} \Gamma^1_{21} + \Gamma^2_{22} \Gamma^2_{22} \Gamma^2_{11} - \frac{1}{c^2} y^{-2} y^{-1} + \frac{1}{c^2} y^{-1} y^{-2}) \\
    & = 0 \\
    \intertext{$(m, n) = (2, 2)$ case}
    & \frac{\lambda}{2} \omega^{12} (0 - \underbrace{\Gamma^2_{12} \Gamma^1_{12} \Gamma^2_{22}}_{= 0} - \underbrace{\Gamma^2_{1k} R^k_{222}}_{= 0}) + \frac{\lambda}{2} \omega^{21} (\underbrace{\Gamma^2_{22, 2} \Gamma^2_{12}}_{= 0} - \underbrace{\Gamma^2_{21} \Gamma^2_{22} \Gamma^1_{12}}_{= 0} - \underbrace{\Gamma^2_{2k} R^k_{221}}_{= 0}) = 0
  \end{align*}
\end{subequations}
\newpage
From the above, $\nabla_Q dx^i$ in the upper half-plane is
\begin{align*}
  \nabla_Q dx & = y^{-1} (dx \otimes_1 dy + dy \otimes_1 dx), \\
  \nabla_Q dy & = y^{-1} (-c^{-2} dx \otimes_1 dx + dy \otimes_1 dy).
\end{align*}

The fact that the $\lambda^1$ terms are each zero indicates that 
in the upper half-plane example $\nabla_Q dx^i$ has the same form 
as classical one (but the tensor product is quantized).

From \eqref{eq:braiding}, we compute $\sigma_Q$ for $\eta, \xi \in \Omega^1(A_{\lambda})$ :
\begin{equation*}
  \sigma_{Q}(\eta \otimes_1 \xi) = \xi \otimes_1 \eta + \lambda \omega^{ij} \nabla_j \xi \otimes_1 \nabla_i \eta + \lambda \omega^{ij} \xi_j \eta_s R^s_{nki} dx^k \otimes_1 dx^n.
\end{equation*}
\begin{remark}
  In the calculation of $\sigma_Q (\eta \otimes_1 \xi)$ shown on page 464 in \cite{name1}, 
  we think the sign of the third term is a typo.
\end{remark}
Here, we compute the terms needed to compute $\nabla_Q g_Q$ in the next section.
\begin{align}
  \sigma_Q (y^{-2} dx \otimes_1 y^{-1} dx) & = y^{-3} dx \otimes_1 dx + \lambda y^{-3} (dx \otimes_1 dy + dy \otimes_1 dx),    \label{eq:sigma_Qdxotimesdx} \\
  \sigma_Q (y^{-2} dx \otimes_1 y^{-1} dy) & = y^{-3} dy \otimes_1 dx + \lambda y^{-3} (-c^{-2} dx \otimes_1 dx + dy \otimes_1 dy), \nonumber \\
  \sigma_Q (y^{-2} dy \otimes_1 y^{-1} dx) & = y^{-3} dx \otimes_1 dy + \lambda y^{-3} (-c^{-2} dx \otimes_1 dx + dy \otimes_1 dy), \nonumber \\
  \sigma_Q (y^{-2} dy \otimes_1 y^{-1} dy) & = y^{-3} dy \otimes_1 dy - \lambda c^{-2} y^{-3} (dx \otimes_1 dy + dy \otimes_1 dx). \nonumber
\end{align}
On the other hand we compute $y^{-3} \sigma_Q (dx \otimes_1 dx)$ :
\begin{equation*}
  y^{-3} \sigma_Q (dx \otimes_1 dx) = y^{-3} dx \otimes_1 dx + \lambda y^{-3} (2dx \otimes_1 dy - dy \otimes_1 dx).
\end{equation*}
Comparing with \eqref{eq:sigma_Qdxotimesdx}, we see that $\sigma_Q$ is not a bimodule map with respect to a commutative product(as mentioned Remark \ref{rem:braiding}).

Finally we check if $\nabla_Q$ above is a quantum Levi-Civita connection for $g_1$.
Before computing $\nabla_Q g_1$ we compute $\nabla_Q g_Q$.
From \eqref{eq:tensorbimoduleconnection}, $\nabla_Q g_Q$ can be written as
\begin{align*}
  \nabla_Q g_Q & = \nabla_Q \{ y^{-2}(dx \otimes_1 dx + c^2 dy \otimes_1 dy) \} \\
               & = \nabla_Q (y^{-2} dx) \otimes_1 dx + (\sigma_Q \otimes_1 id) (y^{-2} dx \otimes_1 \nabla_Q dx) \\
               & \quad + c^2 \nabla_Q (y^{-2} dy) \otimes_1 dy + c^2 (\sigma_Q \otimes_1 id) (y^{-2} dy \otimes_1 \nabla_Q dy).
\end{align*}
From here we calculate each term respectively.
\begin{subequations}
  \begin{align*}
    \intertext{The first term}
    & \nabla_Q (y^{-2} dx) \otimes_1 dx \\
    & = (dy^{-2} \otimes_1 dx + y^{-2} \nabla_Q dx) \otimes_1 dx \\
    & = -2y^{-3} dy \otimes_1 dx \otimes_1 dx + y^{-2} \{ y^{-1} (dx \otimes_1 dy + dy \otimes_1 dx) \} \otimes_1 dx \\
    & = -y^{-3} dy \otimes_1 dx \otimes_1 dx + y^{-3} dx \otimes_1 dy \otimes_1 dx \\
    \intertext{The second term}
    & (\sigma_Q \otimes_1 id) (y^{-2} dx \otimes_1 \nabla_Q dx) \\
    & = (\sigma_Q \otimes_1 id) ( y^{-2} dx \otimes_1 \{ y^{-1} (dx \otimes_1 dy + dy \otimes_1 dx) \}) \\
    & = \sigma_Q(y^{-2} dx \otimes_1 dx) \otimes_1 dy + \sigma_Q(y^{-2} dx \otimes_1 y^{-1} dy) \otimes_1 dx \\
    & = \{ y^{-3} dx \otimes_1 dx + \lambda y^{-3} (dx \otimes_1 dy + dy \otimes_1 dx) \} \otimes_1 dy \\
    & \quad + \{ y^{-3} dy \otimes_1 dx + \lambda y^{-3} (-c^{-2} dx \otimes_1 dx + dy \otimes_1 dy) \} \otimes_1 dx \\
    & = y^{-3} dx \otimes_1 dx \otimes_1 dy + y^{-3} dy \otimes_1 dx \otimes_1 dx \\
    & \quad + \lambda y^{-3} (dx \otimes_1 dy \otimes dy + dy \otimes_1 dx \otimes_1 dy) \\
    & \quad + \lambda y^{-3} (- c^{-2} dx \otimes_1 dx \otimes_1 dx + dy \otimes_1 dy \otimes_1 dx) \\
    \intertext{The third term}
    & c^2 \nabla_Q (y^{-2} dy) \otimes_1 dy \\
    & = c^2 (dy^{-2} \otimes_1 dy + y^{-2} \nabla_Q dy) \otimes_1 dy \\
    & = -2 c^2 y^{-3} dy \otimes_1 dy \otimes_1 dy + c^2 y^{-2} \{ y^{-1} (-c^{-2} dx \otimes_1 dx + dy \otimes_1 dy) \} \otimes_1 dy \\
    & = -c^2 y^{-3} dy \otimes_1 dy \otimes_1 dy - y^{-3} dx \otimes_1 dx \otimes_1 dy \\
    \intertext{The fourth term}
    & c^2 (\sigma_Q \otimes_1 id) (y^{-2} dy \otimes_1 \nabla_Q dy) \\
    & = c^2 (\sigma_Q \otimes_1 id) (y^{-2} dy \otimes_1 \{ y^{-1} (-c^{-2} dx \otimes_1 dx + dy \otimes_1 dy) \} ) \\
    & = -\sigma_Q(y^{-2} dy \otimes_1 y^{-1} dx) \otimes_1 dx + c^2 \sigma_Q(y^{-2} dy \otimes_1 y^{-1} dy) \otimes_1 dy \\
    & = - \{ y^{-3} dx \otimes_1 dy + \lambda y^{-3} (-c^{-2} dx \otimes_1 dx + dy \otimes_1 dy) \} \otimes_1 dx \\
    & \quad + c^2 \{ y^{-3} dy \otimes_1 dy - \lambda c^{-2} y^{-3} (dx \otimes_1 dy + dy \otimes_1 dx) \} \otimes_1 dy \\
    & = -y^{-3} dx \otimes_1 dy \otimes_1 dx + c^2 y^{-3} dy \otimes_1 dy \otimes_1 dy \\
    & \quad + \lambda y^{-3} (c^{-2} dx \otimes_1 dx \otimes_1 dx - dy \otimes_1 dy \otimes_1 dx) \\
    & \quad + \lambda y^{-3} (- dx \otimes_1 dy \otimes_1 dy - dy \otimes_1 dx \otimes_1 dy)
  \end{align*}
\end{subequations}
From the above, $\nabla_Q g_Q = 0$.

We calculate $\nabla_Q g_1$.
Since $\nabla_Q g_Q = 0$, we have
\begin{equation*}
  q^2(\nabla_Q g_1) = q^2(\nabla_Q g_Q - \lambda \nabla_Q (q^{-1}_{\Omega^1, \Omega^1} \mathcal{R})) = - \lambda \nabla \mathcal{R}.
\end{equation*}
Thus, we need to compute $\nabla \mathcal{R}$.
\begin{align*}
  \nabla \mathcal{R} & = \nabla \{ y^{-2} (dx \otimes dy - dy \otimes dx) \} \\
                     & = \nabla (y^{-2} dx) \otimes dy + (\sigma \otimes id) (y^{-2} dx \otimes \nabla dy) - \nabla (y^{-2} dy) \otimes dx - (\sigma \otimes id) (y^{-2} dy \otimes dx) \\
                     & = \{ -2y^{-3} dy \otimes dx + y^{-2} (y^{-1} dx \otimes dy + y^{-1} dy \otimes dx) \} \otimes dy \\
                     & \quad + (\sigma \otimes id) \{ y^{-2} dx \otimes (-c^{-2} y^{-1} dx \otimes dx + y^{-1} dy \otimes dy) \} \\
                     & \quad - \{ -2y^{-3} dy \otimes dy + y^{-2} (-c^{-2} y^{-1} dx \otimes dx + y^{-1} dy \otimes dy) \} \otimes dx \\
                     & \quad - (\sigma \otimes id) \{ y^{-2} dy \otimes (y^{-1} dx \otimes dy + y^{-1} dy \otimes dx) \} \\
                     & = -2y^{-3} dy \otimes dx \otimes dy + y^{-3} (dx \otimes dy + dy \otimes dx) \otimes dy \\
                     & \quad - c^{-2} y^{-3} dx \otimes dx \otimes dx + y^{-3} dy \otimes dx \otimes dy + 2y^{-3} dy \otimes dy \otimes dx \\
                     & \quad - y^{-3} (- c^{-2} dx \otimes dx + dy \otimes dy) \otimes dx - y^{-3} dx \otimes dy \otimes dy - y^{-3} dy \otimes dy \otimes dx \\
                     & = 0.
\end{align*}
As a result we get $\nabla_Q g_1 = 0$.
Thus, $\nabla_Q$ is a quantum Levi-Civita connection for the quantum Riemannian metric $g_1$.

\subsection{Main result 1 : Conditions that have the same form as classical one}    \label{subsec:classicalreduction}
When we calculated the quantum Riemannian metric $g_Q$ and the quantum Levi-Civita connection $\nabla_Q$ in the previous section, 
the $\lambda^1$ terms were zero and they had the same form as classical one.
In this section, we investigate the Riemannian metric in the upper half-plane by generalizing it as follows :
\begin{equation}
  g = E(y) dx \otimes dx + G(y) dy \otimes dy,    \label{eq:riemannianmetricEG}
\end{equation}
where $E, G$ are arbitrary functions and $E(y), G(y) > 0$ from the positive definiteness of $g$.

\begin{itembox}[l]{\textbf{Main result 1}}
  Suppose the Riemannian metric $g$ is of the form \eqref{eq:riemannianmetricEG}.
  The condition that the quantized metric $g_Q$ and the quantized connection $\nabla_Q$ 
  have the same form as in the classical case is 
  that $E(y), G(y) > 0$ satisfies either of the following.
  \begin{enumerate}
    \item $E(y)$ is constant function.
    \item $G(y) = AE(y)$\;($A$ is a positive constant).
  \end{enumerate}
\end{itembox}
From here in this section, we derive the main result 1.

First, from \eqref{eq:christoffelsymbol}, the Christoffel symbols of the Levi-Civita connection for the Riemannian metric $g$ given by \eqref{eq:riemannianmetricEG} are as follows.
\begin{alignat*}{3}
  \Gamma^1_{11} & = 0,                    \quad & \Gamma^1_{12} & = \Gamma^1_{21} = \frac{E'(y)}{2E(y)}, \quad & \Gamma^1_{22} & = 0, \\
  \Gamma^2_{11} & = -\frac{E'(y)}{2G(y)}, \quad & \Gamma^2_{12} & = \Gamma^2_{21} = 0,                   \quad & \Gamma^2_{22} & = \frac{G'(y)}{2G(y)},
\end{alignat*}
where $E'(y)$ means the derivative with $y$.
Using the Christoffel symbols obtained above, the nonzero components of the Riemann curvature tensor are
\begin{align*}
  R^1_{221} & = (\Gamma^1_{12})' + \Gamma^1_{12} \Gamma^1_{21} - \Gamma^2_{22} \Gamma^1_{12}, \\
  R^1_{212} & = -R^1_{221}, \\
  R^2_{112} & = -(\Gamma^2_{11})' + \Gamma^1_{12} \Gamma^2_{11} - \Gamma^2_{11} \Gamma^2_{22}, \\
  R^2_{121} & = -R^2_{112}.
\end{align*}

\subsubsection{Conditions for \texorpdfstring{$g_Q$}{}}    \label{subsubsec:g_Q}
From \eqref{eq:prequantumriemannmetric} $g_Q$ is written as
\begin{equation*}
  g_Q = g_{ij} dx^i \otimes_1 dx^j + \frac{\lambda}{2} \omega^{ij} g_{pm} \Gamma_{iq}^p \Gamma_{jn}^q dx^m \otimes_1 dx^n.
\end{equation*}
$g_Q$ has the same form as classical one means that the $\lambda^1$ terms are zero, i.e. for each $m, n$
\begin{equation*}
  \epsilon_{ij} \delta_{pm} \Gamma_{iq}^p \Gamma_{jn}^q = 0,
\end{equation*}
where $\epsilon_{ij}$ is completely anti­symmetric tensor which is $\epsilon_{12} = -\epsilon_{21} = 1$,$\epsilon_{11} = \epsilon_{22} = 0$.
And $\delta_{pm}$ is the Kronecker delta.
We compute the $\lambda^1$ terms for the cases $(m, n) = (1, 1)$, $(1, 2)$, $(2, 1)$ and $(2, 2)$ respectively.
\begin{subequations}
  \begin{align*}
    \intertext{$(m, n) = (1, 1)$ case}
    & \underbrace{\Gamma^1_{1q} \Gamma^q_{21}}_{= 0} - \underbrace{\Gamma^1_{2q} \Gamma^q_{11}}_{= 0} = 0 \\
    \intertext{$(m, n) = (1, 2)$ case}
    & \Gamma^1_{12} \Gamma^2_{22} - \Gamma^1_{21} \Gamma^1_{12} = \frac{E'(y)(-G(y) E'(y) + E(y) G'(y))}{4E(y)^2 G(y)} \\
    \intertext{$(m, n) = (2, 1)$ case}
    & \Gamma^2_{11} \Gamma^1_{21} - \Gamma^2_{22} \Gamma^2_{11} = \frac{E'(y)(-G(y) E'(y) + E(y) G'(y))}{4E(y) G(y)^2} \\
    \intertext{$(m, n) = (2, 2)$ case}
    & \underbrace{\Gamma^2_{1q} \Gamma^q_{22}}_{= 0} - \underbrace{\Gamma^2_{2q} \Gamma^q_{12}}_{= 0} = 0
  \end{align*}
\end{subequations}
In $(m, n) = (1, 2)$, $(2, 1)$ cases, $\lambda^1$ terms of $g_Q$ are not zero.
Both of these cases are of the form of functional multiples of the derivative of the function $\frac{G(y)}{E(y)}$.
In order for the terms $\lambda^1$ to be zero, it is necessary to be
\begin{equation*}
  E'(y) = 0 \quad \text{or} \quad \qty(\frac{G(y)}{E(y)})' = 0.
\end{equation*}
From the above, it can be seen that 
the condition for having the same form as the classical one 
is satisfied by either of the following.
\begin{enumerate}
  \item $E(y)$ is a constant function.
  \item $G(y) = A E(y)$\;($A$ is a positive constant).
\end{enumerate}

\subsubsection{Conditions for \texorpdfstring{$\nabla_Q dx^i$}{}}   \label{subsubsec:nabla_Q}
$\nabla_Q$ is written as
\begin{equation*}
  \nabla_Q dx^i = - \qty(\Gamma^i_{mn} + \frac{\lambda}{2} \omega^{sj} (\Gamma^i_{mk,s} \Gamma^k_{jn} - \Gamma^i_{kt} \Gamma^k_{sm} \Gamma^t_{jn} - \Gamma^i_{sk} R^k_{nmj})) dx^m \otimes_1 dx^n
\end{equation*}
from \eqref{eq:nabla_Qdx}.
$\nabla_Q$ has the same form as classical one means that the $\lambda^1$ terms in $g_Q$ are zero, i.e. for each $m, n$
\begin{equation*}
  \epsilon_{sj} (\Gamma^i_{mk,s} \Gamma^k_{jn} - \Gamma^i_{kt} \Gamma^k_{sm} \Gamma^t_{jn} - \Gamma^i_{sk} R^k_{nmj}) = 0.
\end{equation*}

We compute the $\lambda^1$ terms for the cases $(m, n) = (1, 1)$, $(1, 2)$, $(2, 1)$ and $(2, 2)$ respectively.
First, for $\nabla_Q dx$
\begin{subequations}
  \begin{align*}
    \intertext{$(m, n) = (1, 1)$ case}
    & (0 - \Gamma^1_{21} \Gamma^2_{11} \Gamma^1_{21} - \Gamma^1_{12} R^2_{112}) - (\Gamma^1_{12, 2} \Gamma^2_{11} - \Gamma^1_{12} \Gamma^1_{21} \Gamma^2_{11} - \underbrace{\Gamma^1_{2k} R^k_{111}}_{= 0}) \\
    & = \frac{E'(y)^2 (-G(y) E'(y) + E(y) G'(y))}{8E(y)^2 G(y)^2} \\
    \intertext{$(m, n) = (1, 2)$ case}
    & (0 - \underbrace{\Gamma^1_{kt} \Gamma^k_{11} \Gamma^t_{22}}_{= 0} - \underbrace{\Gamma^1_{1k} R^k_{212}}_{= 0}) - (\underbrace{\Gamma^1_{1k, 2} \Gamma^k_{12}}_{= 0} - \underbrace{\Gamma^1_{kt} \Gamma^k_{21} \Gamma^t_{12}}_{= 0} - \underbrace{\Gamma^1_{2k} R^k_{211}}_{= 0}) = 0 \\
    \intertext{$(m, n) = (2, 1)$ case}
    & (0 - \underbrace{\Gamma^1_{kt} \Gamma^k_{12} \Gamma^t_{21}}_{= 0} - \underbrace{\Gamma^1_{1k} R^k_{122}}_{= 0}) - (\underbrace{\Gamma^1_{2k, 2} \Gamma^k_{11}}_{= 0} - \underbrace{\Gamma^1_{kt} \Gamma^k_{22} \Gamma^t_{11}}_{= 0} - \underbrace{\Gamma^1_{2k} R^k_{121}}_{= 0}) = 0 \\
    \intertext{$(m, n) = (2, 2)$ case}
    & (0 - \Gamma^1_{12} \Gamma^1_{12} \Gamma^2_{22} - \underbrace{\Gamma^1_{1k} R^k_{222}}_{= 0}) - (\Gamma^1_{21, 2} \Gamma^1_{12} - \Gamma^1_{21} \Gamma^2_{22} \Gamma^1_{12} - \Gamma^1_{21} R^1_{221}) \\
    & = \frac{E'(y)^2 (-G(y) E'(y) + E(y) G'(y))}{8E(y)^3 G(y)}
  \end{align*}
\end{subequations}
Next, for $\nabla_Q dy$
\begin{subequations}
  \begin{align*}
    \intertext{$(m, n) = (1, 1)$ case}
    & (0 - \underbrace{\Gamma^2_{kt} \Gamma^k_{11} \Gamma^t_{21}}_{= 0} - \underbrace{\Gamma^1_{2k} R^k_{112}}_{= 0}) - (\underbrace{\Gamma^2_{1k, 2} \Gamma^k_{11}}_{= 0} - \underbrace{\Gamma^2_{kt} \Gamma^k_{21} \Gamma^t_{11}}_{= 0} - \underbrace{\Gamma^2_{2k} R^k_{111}}_{= 0}) = 0 \\
    \intertext{$(m, n) = (1, 2)$ case}
    & (0 - \Gamma^2_{22} \Gamma^2_{11} \Gamma^2_{22} - \Gamma^2_{11} R^1_{212}) - (\Gamma^2_{11, 2} \Gamma^1_{12} - \Gamma^2_{11} \Gamma^1_{21} \Gamma^1_{12} - \underbrace{\Gamma^2_{2k} R^k_{211}}_{= 0}) \\
    & = \frac{E'(y) G'(y) (-G(y) E'(y) + E(y) G'(y))}{8E(y) G(y)^3} \\
    \intertext{$(m, n) = (2, 1)$ case}
    & (0 - \Gamma^2_{11} \Gamma^1_{12} \Gamma^1_{21} - \underbrace{\Gamma^2_{1k} R^k_{122}}_{= 0}) - (\Gamma^2_{22, 2} \Gamma^2_{11} - \Gamma^2_{22} \Gamma^2_{22} \Gamma^2_{11} - \Gamma^2_{22} R^2_{121}) \\
    & = \frac{1}{8E(y)^2 G(y)^3} \left\{ G(y)^2 E'(y)^3 - 2E(y)^2 E'(y) G'(y)^2 \right. \\
    & \quad \left. + E(y) G(y) \left(E'(y)^2 G'(y) - 2E(y) G'(y) E''(y) + 2E(y) E'(y) G''(y) \right) \right\} \\
    \intertext{$(m, n) = (2, 2)$ case}
    & (0 - \underbrace{\Gamma^2_{kt} \Gamma^k_{12} \Gamma^t_{22}}_{= 0} - \underbrace{\Gamma^2_{1k} R^k_{222}}_{= 0}) - (\underbrace{\Gamma^2_{2k, 2} \Gamma^k_{12}}_{= 0} - \underbrace{\Gamma^2_{kt} \Gamma^k_{22} \Gamma^t_{12}}_{= 0} - \underbrace{\Gamma^2_{2k} R^k_{221}}_{= 0}) = 0
  \end{align*}
\end{subequations}

In summary, the condition for the terms $\lambda^1$ to be zero for each $m, n$ is that the following is satisfied :
\begin{align}
  & \frac{E'(y)^2 (-G(y) E'(y) + E(y) G'(y))}{8E(y)^2 G(y)^2} = 0, \label{eq:dx11} \\
  & \frac{E'(y)^2 (-G(y) E'(y) + E(y) G'(y))}{8E(y)^3 G(y)} = 0, \label{eq:dx22} \\
  & \frac{E'(y) G'(y) (-G(y) E'(y) + E(y) G'(y))}{8E(y) G(y)^3} = 0, \label{eq:dy12} \\
  & \frac{1}{8E(y)^2 G(y)^3} \left\{ G(y)^2 E'(y)^3 - 2E(y)^2 E'(y) G'(y)^2 \right. \nonumber \\
  & \quad \left. + E(y) G(y) \left(E'(y)^2 G'(y) - 2E(y) G'(y) E''(y) + 2E(y) E'(y) G''(y) \right) \right\} = 0. \label{eq:dy21}
\end{align}
\eqref{eq:dx11}, \eqref{eq:dx22}, and \eqref{eq:dy12} have a form similar to that of $g_Q$, 
and the conditions that they are zero can be rewritten as follows :
\begin{equation}
  E'(y) = 0 \quad \text{or} \quad \qty(\frac{G(y)}{E(y)})' = 0.   \label{eq:dx11dx22dy12condition}
\end{equation}
This condition is also satisfy \eqref{eq:dy21}.
Thus, $E(y)$ and $G(y)$ satisfying this condition are the same as for $g_Q$, 
where $E(y)$ is a constant function or $G(y) = AE(y)$\;($A$ is a positive constant).

Here we will investigate the \eqref{eq:dy21} in more detail.
In fact, there exists a solution to \eqref{eq:dy21}, which is a more generalization of \eqref{eq:dx11dx22dy12condition}.
\begin{lemma}
  Let $G(y) = A(y) E(y)$, where $A(y)$ is an arbitrary positive function.
  In this case, it is equivalent that \eqref{eq:dy21} and $G(y) = c_2 e^{2c_1 \sqrt{E(y)}} E(y)$, 
  where $c_1$ is an arbitrary constant, 
  and $c_2$ is an arbitrary positive constant since $G(y)$ is positive.
\end{lemma}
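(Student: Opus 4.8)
Since $E,G>0$ the prefactor $\tfrac{1}{8E^{2}G^{3}}$ in \eqref{eq:dy21} never vanishes, so \eqref{eq:dy21} holds if and only if the bracketed cubic expression in $E,G$ and their derivatives is zero. The plan is to substitute $G=A(y)E(y)$ together with $G'=A'E+AE'$ and $G''=A''E+2A'E'+AE''$ into that bracket and expand. I expect a substantial collapse: the terms proportional to $A^{2}E^{2}(E')^{3}$ cancel among the three summands of \eqref{eq:dy21}, and after collecting what remains one is left with an expression divisible by $E^{3}$ and by $A$, namely the second-order ODE
\[
  A'(E')^{2} - 2\,\frac{(A')^{2}}{A}\,EE' - 2A'EE'' + 2A''EE' = 0 .
\]
Carrying out this expansion carefully and checking the cancellations is the bulk of the bookkeeping, and it is the step most prone to sign errors.

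The key idea to make the surviving equation tractable is to pass to the logarithmic derivative $B := (\ln A)' = A'/A$, which is legitimate because $A>0$. Then $A'=AB$, $A''=A(B'+B^{2})$ and $(A')^{2}/A = AB^{2}$; substituting these into the displayed ODE, every term carries a common factor $A$, and after dividing it out the nonlinear $B^{2}$ contributions cancel, leaving the first-order linear ODE
\[
  2EE'\,B' + B\bigl((E')^{2} - 2EE''\bigr) = 0 .
\]

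On any interval where $E'\neq 0$ this separates as $B'/B = E''/E' - E'/(2E)$, which integrates to $B = c_{1}\,E'/\sqrt{E}$ for a constant $c_{1}$; since $B=(\ln A)' = c_{1}E'/\sqrt{E} = 2c_{1}(\sqrt{E})'$, one more integration gives $\ln A = 2c_{1}\sqrt{E}+\mathrm{const}$, i.e. $A = c_{2}e^{2c_{1}\sqrt{E}}$ with $c_{2}>0$, hence $G = AE = c_{2}e^{2c_{1}\sqrt{E}}E$. Conversely, any $G$ of this form yields $B = c_{1}E'/\sqrt{E}$, which solves the linear ODE above, so running the equivalences backwards (or simply substituting back) shows \eqref{eq:dy21} holds. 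I would also remark that $c_{1}=0$ recovers $G=c_{2}E$, already present in \eqref{eq:dx11dx22dy12condition}, so this lemma genuinely enlarges that family, while the degenerate case $E'\equiv 0$ makes the linear ODE vacuous and should be acknowledged separately as the constant-$E$ branch of \eqref{eq:dx11dx22dy12condition}. The main obstacle is thus computational rather than conceptual: organizing the polynomial expansion so the cancellations are transparent; the substitution $B=A'/A$ is precisely what turns the residual equation into one that integrates in closed form.
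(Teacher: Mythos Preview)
The paper states this lemma without proof, so there is no argument to compare against; your proposal supplies what the paper omits. Your computation is correct: substituting $G=AE$ into the bracket of \eqref{eq:dy21} does collapse the $A^{2}E^{2}(E')^{3}$ and $A^{2}E^{3}E'E''$ terms, leaving precisely the second-order expression you display, and the change of variable $B=(\ln A)'$ linearises it to $2EE'\,B' + B\bigl((E')^{2}-2EE''\bigr)=0$, which separates and integrates to $A=c_{2}e^{2c_{1}\sqrt{E}}$ as claimed. Your remarks on the converse direction and on the degenerate branch $E'\equiv 0$ are also appropriate and slightly sharpen the statement relative to the paper.
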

If $c_1 = 0$, then $G(y) = c_2 E(y)$, so the conditions \eqref{eq:dx11}, \eqref{eq:dx22} and \eqref{eq:dy12} are satisfied.
However, if $c_1 \neq 0$, then $G(y) = A(y) E(y)$, so the conditions \eqref{eq:dx11}, \eqref{eq:dx22}, and \eqref{eq:dy12} are not satisfied.
This means that the $\lambda^1$ terms for $(m, n) = (2, 1)$ in $\nabla_Q dy$ are zero, 
while the $\lambda^1$ terms in $\nabla_Q dx$ with $(m, n) = (1, 1)$, $(2, 2)$ and $\nabla_Q dy$ with $(m, n) = (1, 2)$ are not zero.

In summary, we have
\begin{equation*}
  \text{$E(y)$ is a constant function} \quad \text{or} \quad G(y) = A E(y)\;\text{($A$ is a positive constant)}
\end{equation*}
as the condition that $\nabla_Q dx^i$ has the same form as classical one.
From the above we obtain main result 1.

From the main result 1, the Riemannian metric
\begin{equation*}
  g = y^{-2} (dx \otimes dx + c^2 dy \otimes dy)
\end{equation*}
in the upper half-plane is of the form $G(y) = A E(y)$\;($A$ is a positive constant), 
which means that $g_Q$ and $\nabla_Q dx^i$ have the same form as classical one.
We also see that there is a (positive) function degree of freedom of $E(y)$ in having the same form as classical one.

In this section, the following two expressions appeared as the relation between $G(y)$ and $E(y)$ :
\begin{enumerate}
  \item $G(y) = A E(y)$,
  \item $G(y) = c_2 e^{2 c_1 \sqrt{E(y)}} E(y)$.
\end{enumerate}
In section \ref{subsec:G(y)=AE(y)}, 
we will study the conditions that the quantized connection $\nabla_Q$ is 
a quantum Levi-Civita connection when the first condition holds.
In section \ref{subsec:G(y)=A(y)E(y)}, 
we compute the quantum Riemannian metric $g_1$ and 
the quantized connection $\nabla_Q$ for the second condition and 
see whether $\nabla_Q$ is a quantum Levi-Civita connection.

\subsection{Main result 2 : \texorpdfstring{$G(y) = A E(y)$}{} case}    \label{subsec:G(y)=AE(y)}
If the Riemannian metric in the upper half-plane is written
\begin{equation}
  g = E(y) dx \otimes dx + A E(y) dy \otimes dy,   \label{eq:RiemannianmetricG(y)=AE(y)}
\end{equation}
then the quantized metric $g_Q$ and the quantized connection $\nabla_Q$ have the same form as classical one from section \ref{subsec:classicalreduction}.
In this section, we further investigate the conditions that 
the quantized connection $\nabla_Q$ is a quantum Levi-Civita connection for the quantum Riemannian metric $g_1$.
To do so, we first need to compute a generalized Ricci form.
In the following we will assume that $A$ is a positive constant.
The results are summarized as follows.
\begin{itembox}[l]{\textbf{Main result 2}}
  Let $g$ be written as \eqref{eq:RiemannianmetricG(y)=AE(y)}.
  Then generalised Ricci form is written as
\begin{equation*}
    \mathcal{R} = -\frac{1}{2} \pdv[2]{y} \qty(\log{E(y)}) dx \wedge dy.
  \end{equation*}
  $\nabla_Q$ is a quantum Levi-Civita connection for $g_1$ if $E(y)$ is written of the form :
\begin{alignat*}{2}
    E(y) & = c_2 \sec^2{\qty(\frac{\alpha (y + 2c_1)}{\sqrt{2}})},   & \quad  (\alpha > 0) \\
    E(y) & = \frac{c_2}{(y + 2c_1)^2},                               & \\
    E(y) & = c_2 \sech^2{\qty(\frac{-\alpha (y + 2c_1)}{\sqrt{2}})}, & \quad (\alpha < 0)
  \end{alignat*}
  where $c_1, c_2, \alpha$ are constants, in particular $c_2 > 0$.
\end{itembox}
From here in this section, we derive the main result 2.

\subsubsection{Generalised Ricci form}
If the Riemannian metric is given by \eqref{eq:RiemannianmetricG(y)=AE(y)}, 
we compute the generalized Ricci form with the same discussions as in section \ref{subsec:classicalupperhalfplane}.

The Christoffel symbols are
\begin{alignat*}{3}
  \Gamma^1_{11} & = 0,                     \quad & \Gamma^1_{12} & = \Gamma^1_{21} = \frac{E'(y)}{2E(y)}, \quad & \Gamma^1_{22} & = 0 \\
  \Gamma^2_{11} & = -\frac{E'(y)}{2AE(y)}, \quad & \Gamma^2_{12} & = \Gamma^2_{21} = 0,                   \quad & \Gamma^2_{22} & = \frac{E'(y)}{2E(y)}
\end{alignat*}
by substituting $G(y) = A E(y)$ in section \ref{subsec:classicalreduction}.
\newpage
\noindent
Using the Christoffel symbols just obtained, the Riemann curvature tensor is
\begin{align*}
  R^1_{221} & = (\Gamma^1_{12})' + \Gamma^1_{12} \Gamma^1_{21} - \Gamma^2_{22} \Gamma^1_{12} = -\frac{E'(y)^2 - E(y) E''(y)}{2E(y)^2}, \\
  R^1_{212} & = -R^1_{221}, \\
  R^2_{112} & = -(\Gamma^2_{11})' + \Gamma^1_{12} \Gamma^2_{11} - \Gamma^2_{11} \Gamma^2_{22} = -\frac{E'(y)^2 - E(y) E''(y)}{2AE(y)^2}, \\
  R^2_{121} & = -R^2_{112}.
\end{align*}

Next, we compute the components $\omega^{ij}$ of the inverse matrix of symplectic form.
From \eqref{eq:poissoncompatibility}, We only need to find $\omega^{12}$ satisfying
\begin{equation*}
  \left\{ \,
  \begin{aligned}
    & \pdv{\omega^{12}}{x} + \omega^{12} \Gamma^2_{21} + \omega^{12} \Gamma^1_{11} = 0 \\
    & \pdv{\omega^{12}}{y} + \omega^{12} \Gamma^1_{12} + \omega^{12} \Gamma^2_{22} = 0
  \end{aligned}
  \right.
\end{equation*}
and the solution is
\begin{equation*}
  \omega^{12} = \frac{c_1}{E(y)} (= -\omega^{21}). \quad (\text{$c_1$ is an arbitrary constant})
\end{equation*}
In the following we assume $c_1 = 1$.

To compute the generalized Ricci form we compute the 2-forms $H^{ij}$.
If the background connection is a Levi-Civita connection, then $H^{ij}$ is
\begin{align*}
  H^{11} & = -\frac{1}{2} \omega^{12} R^1_{212} dx \wedge dy \\
  H^{22} & = -\frac{1}{2} \omega^{21} R^2_{121} dy \wedge dx = -\frac{1}{2} \omega^{12} R^2_{121} dx \wedge dy
\end{align*}
from \eqref{eq:RiemannH^ij}.
From this, using the Riemann curvature tensor, the generalized Ricci form is
\begin{align}
  \mathcal{R} = H^{11} g_{11} + H^{22} g_{22} & = -\frac{1}{2} \omega^{12} (g_{11} R^1_{212} + g_{22} R^2_{121}) dx \wedge dy \nonumber \\
                                              & = -\frac{E'(y)^2 - E(y) E''(y)}{2 E(y)^2} dx \wedge dy \nonumber \\
                                              & = -\frac{1}{2} \pdv{y} \qty(\frac{E'(y)}{E(y)}) dx \wedge dy \nonumber \\
                                              & = -\frac{1}{2} \pdv[2]{y}(\log{E(y)}) dx \wedge dy.    \label{eq:R(G(y)=AE(y))}
\end{align}
From the above we obtain the first result of main result 2.

\subsubsection{Condition \texorpdfstring{$\nabla_Q$}{} is a quantum Levi-Civita connection}
From Proposition \ref{prop:quantumlevicivita}, the quantized connection $\nabla_Q$ is a quantum Levi-Civita connection and $\nabla \mathcal{R} = 0$ was equivalent.
Then we calculate $\nabla \mathcal{R}$ using \eqref{eq:R(G(y)=AE(y))}.
\newpage
\begin{align*}
  \nabla \mathcal{R} & = \nabla \qty{-\frac{1}{2} \pdv[2]{y} (\log{E(y)})} dx \wedge dy \\
                     & = -\frac{1}{2} \qty{ d \qty(\pdv[2]{y} (\log{E(y)})) dx \wedge dy + \qty(\pdv[2]{y} (\log{E(y)})) (\nabla dx \otimes dy - \nabla dy \otimes dx)} \\
                     & \quad -\frac{1}{2} \qty(\pdv[2]{y} (\log{E(y)})) (\sigma \otimes id) (dx \otimes \nabla dy - dy \otimes \nabla dx) \\
                     & = -\frac{1}{2} \pdv[3]{y} (\log{E(y)}) dy \otimes dx \wedge dy \\
                     & \quad -\frac{1}{2} \qty(\pdv[2]{y} (\log{E(y)})) (-\Gamma^1_{12} dx \otimes dy \otimes dy + \Gamma^2_{11} dx \otimes dx \otimes dx - \Gamma^1_{21} dy \otimes dx \wedge dy) \\
                     & \quad -\frac{1}{2} \qty(\pdv[2]{y} (\log{E(y)})) (-\Gamma^2_{11} dx \otimes dx \otimes dx + \Gamma^1_{12} dx \otimes dy \otimes dy - \Gamma^1_{21} dy \otimes dx \wedge dy) \\
                     & = -\frac{1}{4} \qty(\pdv[3]{y} (\log{E(y)}) - \pdv[2]{y} (\log{E(y)}) \pdv{y} (\log{E(y)})) dy \otimes dx \wedge dy \\
                     & = -\frac{1}{4} \pdv{y} (\pdv[2]{y} (\log{E(y)}) - \frac{1}{2} \qty(\pdv{y} (\log{E(y)}))^2) dy \otimes dx \wedge dy.
\end{align*}
From this we see that $\nabla \mathcal{R} = 0$, we need to satisfy
\begin{equation}
  \pdv[2]{y} (\log{E(y)}) - \frac{1}{2} \qty(\pdv{y} (\log{E(y)}))^2 = 2\alpha. \quad (\text{$\alpha$ is a constant})   \label{eq:ode}
\end{equation}
We take $Y = e^{-\frac{1}{2} \log{E(y)}} = E(y)^{-\frac{1}{2}}$ and 
multiplying $Y$ by both sides gives \eqref{eq:ode} is
\begin{align*}
  -2 \pdv[2]{y} Y         & = 2\alpha Y \\
  \pdv[2]{y} Y + \alpha Y & = 0.
\end{align*}
Thus, solutions are obtained when $\alpha$ is positive, zero, and negative, respectively, and returning to $E(y)$, we obtain
\begin{alignat*}{2}
  E(y) & = c_2 \sec^2{\qty(\frac{\alpha (y + 2c_1)}{\sqrt{2}})},   & \quad (\alpha > 0) \\
  E(y) & = \frac{c_2}{(y + 2c_1)^2},                               & \quad (\alpha = 0) \\
  E(y) & = c_2 \sech^2{\qty(\frac{-\alpha (y + 2c_1)}{\sqrt{2}})}, & \quad (\alpha < 0)
\end{alignat*}
where $c_1, c_2$ are an arbitrary constants, in particular $c_2 > 0$ since $E(y)$ is a positive function.
From the above we obtain the second result of main result 2.
We can see that the example of the upper half-plane corresponds to the case $\alpha = 0$, $c_1 = 0$, $c_2 = 1$.

When $E(y) = AG(y)$, there was a (positive) function degree of freedom of $E(y)$ in having the same form as classical one.
However, in order for the quantized connection $\nabla_Q$ to be a quantum Levi-Civita connection, 
we see that $E(y)$ must be one of the functions of the main result 2.

\subsection{Main result 3 : semiquantization with generalised Poincar\'{e} metric}    \label{subsec:G(y)=A(y)E(y)}
In section \ref{subsubsec:nabla_Q},
\begin{equation}
  G(y) = c_2 e^{2c_1 \sqrt{E(y)}} E(y)    \label{eq:gexpe}
\end{equation}
appeared as the condition that the $\lambda^1$ terms of $\nabla_Q dy$ are zero when $(m, n) = (2, 1)$.
Thus, in this section, 
we semiquantise the upper half-plane with
\begin{equation}
  g = y^{-2} (dx \otimes dx + e^{\frac{2t}{y}} dy \otimes dy)   \label{eq:riemannianmetric2}
\end{equation}
as the Riemannain metric when $E(y) = y^{-2}$, $c_1 = t(\in \mathbb{R})$, and $c_2 = 1$ in \eqref{eq:gexpe}.
Note that the deformation with parameter $t$ is reduced to the example in section 6.1 of \cite{name1} at $t \to 0$, 
and when $c_2 = c^2$, it is reduced to the example calculated at the beginning of this section at $t \to 0$.

The semiquantized results are summarized as follows.
\begin{itembox}[l]{\textbf{Main result 3}}
  The quantum Riemannian metric $g_1$ for the Riemannian metric \eqref{eq:riemannianmetric2} is
  \begin{align*}
    g_1 & = y^{-2} (dx \otimes_1 dx + e^{-\frac{2t}{y}} dy \otimes_1 dy) \\
        & \quad + \frac{\lambda}{2} y^{-3} \qty{ \{ (1 + 2e^{-\frac{t}{y}})t - 2e^{-\frac{t}{y}}y \} dx \otimes_1 dy + e^{-\frac{t}{y}} \{ (e^{-\frac{t}{y}} - 2)t + 2y \} dy \otimes_1 dx },
  \end{align*}
  $\nabla_Q dx$ and $\nabla_Q dy$ are given by
  \begin{align*}
    \nabla_Q dx & = y^{-1} (dx \otimes_1 dy + dy \otimes_1 dx) + \frac{\lambda}{2} e^{- \frac{t}{y}} y^{-2} t (e^{-\frac{2t}{y}} dx \otimes_1 dx + dy \otimes_1 dy), \\
    \nabla_Q dy & = - e^{-\frac{2t}{y}} y^{-1} dx \otimes_1 dx + y^{-2} (y + t)dy \otimes_1 dy + \frac{\lambda}{2} e^{-\frac{3t}{y}} t(t + y) y^{-3} dx \otimes_1 dy.
  \end{align*}

  The quantized connection $\nabla_Q$ is a weak quantum Levi-Civita connection for $g_1$ and, if $t \to 0$, it is a quantum Levi-Civita connection.
\end{itembox}
From here in this section, we derive the main result 3.

\subsubsection{Classical data}
First, the Christoffel symbols of the Levi-Civita connection for \eqref{eq:riemannianmetric2} are as follows :
\begin{align*}
  & \Gamma^1_{11} = \Gamma^1_{22} = 0, \quad \Gamma^1_{12} = \Gamma^1_{21} = -y^{-1}, \\
  & \Gamma^2_{12} = \Gamma^2_{21} = 0, \quad \Gamma^2_{11} = e^{-\frac{2t}{y}} y^{-1}, \quad \Gamma^2_{22} = -(t + y) y^{-2}.
\end{align*}

Next, we compute the components $\omega^{ij}$ of the inverse matrix of symplectic form.
From \eqref{eq:poissoncompatibility}, $\omega^{12}$ satisfying
\begin{equation*}
  \pdv{\omega^{12}}{y} - \omega^{12} \Gamma^1_{12} - \omega^{12} \Gamma^2_{22} = 0
\end{equation*}
is
\begin{equation*}
  \omega^{12} = a_1 e^{-\frac{t}{y}} y^2,
\end{equation*}
where $a_1$ is an arbitrary constant, but hereafter $a_1 = 1$.

Computing the Riemann curvature tensor from the Christoffel symbols, we obtain
\begin{equation*}
  R^1_{221} = (-t + y)y^{-3}, \quad R^2_{112} = e^{-\frac{2t}{y}}(-t + y) y^{-3}.
\end{equation*}
Scalar curvature is
\begin{equation*}
  R = g^{11} R^2_{121} + g^{22} R^1_{212} = 2e^{-\frac{2t}{y}}(t - y) y^{-1}.
\end{equation*}
The Christoffel symbols, $\omega^{ij}$, the Riemann curvature tensor, and the scalar curvature are reduced to the form calculated in section \ref{subsec:classicalupperhalfplane} at $t \to 0$.

Finally, we compute the generalized Ricci form.
By the same computation as in \eqref{eq:RiemannH^ij}, the 2-forms $H^{ij}$ are
\begin{align*}
  & H^{11} = -\frac{1}{2} e^{-\frac{t}{y}} y^2 (t - y) y^{-3} dx \wedge dy = -\frac{1}{2} e^{-\frac{t}{y}} (t - y) y^{-1} dx \wedge dy \\
  & H^{22} = \frac{1}{2} (-e^{-\frac{t}{y}} y^2) \{ -e^{-\frac{2t}{y}} (-t + y) y^{-3} \} dx \wedge dy = \frac{1}{2} e^{-\frac{3t}{y}} (-t + y) y^{-1} dx \wedge dy \\
  & H^{12} = H^{21} = 0
\end{align*}
so that the generalised Ricci form is
\begin{equation*}
  \mathcal{R} = g_{11} H^{11} + g_{22} H^{22} = e^{-\frac{t}{y}} (-t + y) y^{-3} dx \wedge dy.
\end{equation*}
The generalized Ricci form is also reduced to the form computed in section \ref{subsec:classicalupperhalfplane} at $t \to 0$.

\subsubsection{Quantum data}
First, we compute the $*$-product.
Using $\omega^{ij}$ computed in the previous section, the $*$-product is
\begin{equation*}
  f * g = fg + \frac{\lambda}{2} e^{-\frac{t}{y}} y^2 \qty(\pdv{f}{x} \pdv{g}{y} - \pdv{f}{y} \pdv{g}{x}).
\end{equation*}
Since the $*$-product is
\begin{alignat*}{3}
  x * x & = x^2,                                               & y * y & = y^2, \\
  x * y & = xy + \frac{\lambda}{2} e^{-\frac{t}{y}} y^2, \quad & y * x & = xy - \frac{\lambda}{2} e^{-\frac{t}{y}} y^2.
\end{alignat*}
for the coordinate functions $x$ and $y$, the commutation relation is
\begin{align*}
  [x, y]_* = \lambda e^{-\frac{t}{y}} y^2.
\end{align*}

Next, the $*$-product between a function and a 1-form is
\begin{align*}
  f * dx & = fdx + \frac{\lambda}{2} e^{-\frac{t}{y}} y \qty(\pdv{f}{x} dx - \pdv{f}{y} dy), \\
  f * dy & = fdy + \frac{\lambda}{2} e^{-\frac{t}{y}} y \qty(\pdv{f}{x} \left( t + y \right) y^{-1} dy + \pdv{f}{y} e^{-\frac{2t}{y}} dx).
\end{align*}
For the coordinate functions $x$ and $y$, the $*$-product is
\begin{alignat*}{2}
  & x * dx = xdx + \frac{\lambda}{2} e^{-\frac{t}{y}} ydx, & & \quad y * dx = ydx - \frac{\lambda}{2} e^{-\frac{t}{y}} ydy, \\
  & x * dy = xdy + \frac{\lambda}{2} e^{-\frac{t}{y}} (t + y) dy, & & \quad y * dy = ydy + \frac{\lambda}{2} e^{-\frac{3t}{y}}  ydx.
\end{alignat*}
\newpage
\noindent
The commutation relation between $x, y$ and $dx, dy$ is
\begin{align*}
  [x, dx]_* & = \lambda e^{-\frac{t}{y}} y dx = e^{\frac{2t}{y}} [y, dy]_*, \\
  [x, dy]_* & = \lambda e^{-\frac{t}{y}} (t + y) dy = -(t + y) y^{-1} [y, dx]_*.
\end{align*}

Third we compute the quantum wedge product.
By using $H^{ij}$, $dx \wedge_1 dx$, $dx \wedge_1 dy$, $dy \wedge_1 dx$, $dy \wedge_1 dy$ are
\begin{align*}
  dx \wedge_1 dx & = -\frac{\lambda}{2} e^{-\frac{t}{y}} (ty^{-1} + 1) dx \wedge dy, \\
  dx \wedge_1 dy & = dx \wedge dy, \\
  dy \wedge_1 dx & = dy \wedge dx, \\
  dy \wedge_1 dy & = -\frac{\lambda}{2} e^{-\frac{3t}{y}} (3ty^{-1} + 1) dx \wedge dy,
\end{align*}
respectively.

For the quantized metric $g_Q$, 
it is sufficient to compute the cases $(m, n) = (1, 2)$ and $(2, 1)$ from section \ref{subsubsec:g_Q}, and that is
\begin{subequations}
  \begin{align*}
    \intertext{$(m, n) = (1, 2)$ case}
    & \omega^{12} g_{11} (\Gamma^1_{12} \Gamma^2_{22} - \Gamma^1_{21} \Gamma^1_{12}) = ty^{-3} \\
    \intertext{$(m, n) = (2, 1)$ case}
    & \omega^{12} g_{22} (\Gamma^2_{11} \Gamma^1_{21} - \Gamma^2_{22} \Gamma^2_{11}) = e^{-\frac{2t}{y}} ty^{-3} \\
  \end{align*}
\end{subequations}
Therefore, the quantum Riemannian metric $g_1$ is
\begin{align*}
  g_1 & = g_Q - \lambda q^{-1}_{\Omega^1, \Omega^1} \mathcal{R} \\
      & = y^{-2} (dx \otimes_1 dx + e^{-\frac{2t}{y}} dy \otimes_1 dy) + \frac{\lambda}{2} ty^{-3} (dx \otimes_1 dy + e^{-\frac{2t}{y}} dy \otimes_1 dx) \\
      & \quad - \lambda e^{-\frac{t}{y}} (-t + y) y^{-3} (dx \otimes_1 dy - dy \otimes_1 dx) \\
      & = y^{-2} (dx \otimes_1 dx + e^{-\frac{2t}{y}} dy \otimes_1 dy) \\
      & \quad + \frac{\lambda}{2} y^{-3} \qty{ \{ (1 + 2e^{-\frac{t}{y}})t - 2e^{-\frac{t}{y}}y \} dx \otimes_1 dy + e^{-\frac{t}{y}} \{ (e^{-\frac{t}{y}} - 2)t + 2y \} dy \otimes_1 dx }.
\end{align*}

We then calculate $\nabla_Q dx^i$.
From the section \ref{subsubsec:nabla_Q}, 
we only need to compute the cases 
$(m, n) = (1, 1)$, $(m, n) = (2, 2)$ of $\nabla_Q dx$ and 
$(m, n) = (1, 2)$ of $\nabla_Q dy$.
For $\nabla_Q dx$
\begin{subequations}
  \begin{align*}
    \intertext{$(m, n) = (1, 1)$ case}
    & \frac{E'(y)^2 (-G(y) E'(y) + E(y) G'(y))}{8E(y)^2 G(y)^2} = -e^{-\frac{2t}{y}} t y^{-4} \\
    \intertext{$(m, n) = (2, 2)$ case}
    & \frac{E'(y)^2 (-G(y) E'(y) + E(y) G'(y))}{8E(y)^3 G(y)} = -t y^{-4}
  \end{align*}
\end{subequations}
For $\nabla_Q dy$
\begin{subequations}
  \begin{align*}
    \intertext{$(m, n) = (1, 2)$ case}
    & \frac{E'(y) G'(y) (-G(y) E'(y) + E(y) G'(y))}{8E(y) G(y)^3} = - e^{-\frac{2t}{y}} t(t + y) y^{-5} \\
  \end{align*}
\end{subequations}
From the above, $\nabla_Q dx^i$ is
\begin{align*}
  \nabla_Q dx & = y^{-1} (dx \otimes_1 dy + dy \otimes_1 dx) + \frac{\lambda}{2} e^{- \frac{t}{y}} y^{-2} t (e^{-\frac{2t}{y}} dx \otimes_1 dx + dy \otimes_1 dy), \\
  \nabla_Q dy & = - e^{-\frac{2t}{y}} y^{-1} dx \otimes_1 dx + y^{-2} (y + t)dy \otimes_1 dy + \frac{\lambda}{2} e^{-\frac{3t}{y}} t(t + y) y^{-3} dx \otimes_1 dy.
\end{align*}
To check whether $\nabla_Q$ satisfies $\nabla_Q g_1 = 0$, i.e. $\nabla_Q$ is a quantum Levi-Civita connection for $g_1$, we calculate $\nabla \mathcal{R}$.
We have
\begin{align}
  \nabla \mathcal{R} & = d (e^{-\frac{t}{y}} y^{-3}) dx \wedge dy + e^{-\frac{t}{y}} (-t + y) y^{-3} (\nabla dx \otimes dy - \nabla dy \otimes dx) \nonumber \\
                     & \quad + e^{-\frac{t}{y}} (-t + y) y^{-3} (\sigma \otimes id) (dx \otimes \nabla dy - dy \otimes \nabla dx) \nonumber \\
                     & = -y^{-5} e^{-\frac{t}{y}} (t^2 - 4ty + 2y^2) (dy \otimes dx \wedge dy) \nonumber \\
                     & \quad e^{-\frac{t}{y}} (-t + y) y^{-3} (y^{-1} dx \otimes dy \otimes dy + y^{-1} dy \otimes dx \otimes dy) \nonumber \\
                     & \quad -e^{-\frac{t}{y}} (-t + y) (-y^{-1} e^{-\frac{t}{y}} dx \otimes dx \otimes dx + (t + y) y^{-2} dy \otimes dy \otimes dx) \nonumber \\
                     & \quad e^{-\frac{t}{y}} (-t + y) y^{-3} (-y^{-1} e^{-\frac{2t}{y}} dx \otimes dx \otimes dx + (t + y) y^{-2} dy \otimes dx \otimes dy) \nonumber \\
                     & \quad -e^{-\frac{t}{y}} (-t + y) y^{-3} (y^{-1} dx \otimes dy \otimes dy + y^{-1} dy \otimes dy \otimes dx) \nonumber \\
                     & = \qty{ -y^{-5} e^{-\frac{t}{y}} (t^2 - 4ty + 2t^2) - e^{-\frac{t}{y}} (t - y) y^{-4} + e^{-\frac{t}{y}} (y^2 - t^2) y^{-5} } dy \otimes dx \wedge dy \nonumber \\
                     & = -y^{-5} t e^{-\frac{t}{y}} (2t - 3y) dy \otimes dx \wedge dy.   \label{eq:nablaR}
\end{align}
This means that $\nabla_Q$ is not a quantum Levi-Civita connection since $\nabla \mathcal{R} \neq 0$.
Then we check if $\nabla_Q$ is a weakly quantum Levi-Civita connection.
Calculating the cotorsion according to \eqref{eq:cotorsion_3}, we have
\begin{align*}
  \mathrm{co}T_{\nabla_Q} & = (\wedge \otimes id) (-\lambda y^{-5} t e^{-\frac{t}{y}} (2t - 3y) dy \otimes dx \wedge dy) \\
                 & = -\lambda y^{-5} t e^{-\frac{t}{y}} (2t - 3y) dy \wedge dx \wedge dy \\
                 & = 0.
\end{align*}
Therefore, $\nabla_Q$ is a weak quantum Levi-Civita connection.
Also, $\nabla \mathcal{R} \to 0$ when $t \to 0$ in \eqref{eq:nablaR}, 
and thus, $\nabla_Q$ is a quantum Levi-Civita connection to $g_1|_{t \to 0}$, 
where $g_1|_{t \to 0}$ is $g_1$ in the example of the upper half-plane of \cite{name1}.

\section{Conclusion and discussion}
In this paper we showed three results.
First, when the components of the Poincar\'{e} metric are arbitrary positive functions, 
the following conditions were obtained for the semiquantized results of 
$g_Q$ and $\nabla_Q dx^i$ to have the same form as classical one; 
for $g_Q$, $G(y) = A E(y)$($A$ is a positive constant), 
and for $\nabla_Q dx^i$, $G(y) = A E(y)$($A$ is a positive constant) or $G(y) = c_2 e^{2 c_1 \sqrt{E(y)}} E(y)$($c_1,\ c_2 (> 0)$ is constant).
This result shows that the upper half-plane is 
one of the examples that $g_Q$ and $\nabla_Q dx^i$ have the same form as classical ones.
In this study, we assumed that the function of the components depends only on $y$, 
but the case of a bivariate function, i.e. when it also depends on $x$, is a future work.

Second, if $G(y) = A E(y)$, then the generalized Ricci form $\mathcal{R}$ 
is written in the form of second derivatives of $\log{E(y)}$ and
there are three different forms of $E(y)$ that 
the quantized connection $\nabla_Q$ is a quantum Levi-Civita connection.
From this result, $E(y)$ can be any positive function in the main result 1, 
but for $\nabla_Q$ to be a quantum Levi-Civita connection, 
$E(y)$ is restricted to three types of functions.
When $E(y) = y^{-2}$, it is a Poincar\'{e} metric, 
but it remains to be seen what kind of surface it will be 
when $E(y) = c_2 \sec^2{\qty(\frac{\alpha (y + 2c_1)}{\sqrt{2}})}$ and $E(y) = c_2 \sech^2{\qty(\frac{-\alpha (y + 2c_1)}{\sqrt{2}})}$.

From main result 1 and main result 2, 
it was found that the upper half-plane was a special case in semiclassical theory 
because the semiquantization of $g_Q$ and $\nabla_Q$ has the same form as classical one and 
$\nabla_Q$ is a quantum Levi-Civita connection.

Third, we computed the example of the upper half-plane with a generalised Riemannian metric, 
which generalizes the Poincar\'{e} metric with the parameter $t$.
As a result, a terms which did not exist when $t = 0$ in the quantum Riemannian metric appeared, 
and $\nabla_Q$ became a weak quantum Levi-Civita connection.
And if $t \to 0$, it was reduced to the example of the upper half-plane in \cite{name1}.
This result shows that the deformation of the Riemannian metric with the parameter $t$ 
causes a deformation of the Christoffel symbols and Poisson structure, 
resulting in geometric degrees of freedom.

As a future prospect, we are considering its application to information geometry.
In this study, we were able to study the upper half-plane with the Riemannian metric including the Fisher metric of Gaussian distribution.
Furthermore, we were able to define the cotorsion and 
weak quantum Levi-Civita connection derived from affine geometry and 
construct the weak quantum Levi-Civita connection in the example of the upper half-plane.

In constructing a general theory of information geometry using semiclassical theory, 
we first need to describe various objects in information geometry with differential form only.
This is because information geometry is discussed using vector fields, 
but as mentioned above, in noncommutative geometry, 
it is difficult to define a vector field as one that satisfies the ``Leibniz rule''.
However, methods to treat vector fields in the context of noncommutative geometry have also been studied, 
mainly with two main strategies.
\begin{enumerate}
  \item Define the Leibniz rule for the left and right actions on a bimodule, respectively \cite{name5}.
  \item A vector field is considered as a velocity vector field on a geodesic over a bimodule \cite{name6}.
\end{enumerate}
The first is the Cartan pair, which is a generalized definition of the Leibniz rule satisfied by ordinary vector fields. \\
The second is to define a vector field as the element of a dual module of the differential form, 
and to characterize it as a velocity vector field by introducing geodesics on a bimodule.

After the objects of information geometry can be described only with differential form, 
we consider the following two strategies for constructing a general theory of information geometry using semiclassical theory.
\begin{enumerate}
  \item Semiquantize the classical Levi-Civita connection and define a semiquantum version of the $\alpha$-connection.
  \item Semiquantize statistical manifolds and develop information geometry on them.
\end{enumerate}
The first is to assume that the background connection satisfies $\nabla g = 0$ and 
to define the $\alpha$-connection in information geometry on a semiquantized space.
Since the definition of a dual connection on a bimodule is already known \cite{name7}, 
it seems possible to define a dual connection on a deformed bimodule using semiquantization.
Then, by defining a dual connection, we can consider the $\alpha$-connection.
Thus, we can construct a semiquantized version of the $\alpha$-connection on the semiquantized space. \\
The second is an attempt to set $\nabla g = C$ (where $C$ is a totally symmetric tensor field) as the background connection.
This is because there exist connections satisfying $\nabla g = C$ in statistical manifolds.
However, we think that semiquantization with a connection satisfying $\nabla g = C$ will cause difficulties.
This is because in semiclassical theory, $\nabla g = 0$ was required since the quantized metric is central 
(to make the results of contractions from the left and right coincide).
When we set $\nabla g = C$, we need to investigate how it affects the other parts of the results.

\subsection*{Acknowledgments}
I would like to express my appreciation to my supervisor Professor \mbox{Tatsuo} Suzuki.

\newpage

\end{document}